\documentclass[a4paper,reqno,10pt,oneside]{amsart}

\usepackage{latexsym}
\usepackage{float}
\usepackage{amsfonts,amssymb,latexsym,xspace,epsfig,graphics,color}
\usepackage{amsmath,enumerate,stmaryrd,xy,stackrel}
\usepackage[footnotesize]{caption}
\usepackage{indentfirst}
\usepackage[T1]{fontenc}
\usepackage{a4wide}
\usepackage{url}
\usepackage{tikz}
\usepackage{pgfplots}
\usepackage{colortbl}
\usepgfplotslibrary{fillbetween}
\usepackage{color}
\usepackage{graphpap}
\usepackage{epsfig}
\usepackage{psfrag}
\usepackage{graphicx}
\usepackage{subfigure}
\usetikzlibrary{arrows,snakes,backgrounds,positioning}
\usepackage{tikz-network}
\usepackage{multirow}
\usepackage{float}
\usepackage{rotating}
\usepackage{enumitem}
\usepackage{booktabs}
\theoremstyle{plain}
\newtheorem{thm}{Theorem}[section]
\newtheorem{cor}{Corollary}[section]

\newtheorem{prop}{Proposition}[section]

\theoremstyle{definition}

\theoremstyle{remark}

\numberwithin{equation}{section}
\numberwithin{figure}{section}

\begin{document}

\title[]{The impact of effective participation in stopping misinformation: an approach based on branching processes}

\date{}

\author[Luz Marina Gomez]{Luz Marina Gomez}  
\address{Luz Marina Gomez. Universidade de São Paulo, São Paulo, SP, Brazil. E-mail: luzgomez@alumni.usp.br}

\author[Valdivino V. Junior]{Valdivino V. Junior}  
\address{Valdivino Vargas. Universidade Federal de Goias, Campus Samambaia, CEP 74001-970, Goi\^ania, GO, Brazil. E-mail: vvjunior@ufg.br}

\author[Pablo Rodriguez]{Pablo M. Rodriguez} 
\address{Pablo Rodriguez. Universidade Federal de Pernambuco, Av. Prof. Moraes Rego, 1235. Cidade Universit\'aria, CEP 50670-901, Recife, PE, Brazil. E-mail: pablo@de.ufpe.br}

\subjclass[2020]{60J85, 60K37, 82B26}
\keywords{Misinformation Spreading, Branching Process, Tree Cascade Size, Stopping Misinformation}


\maketitle

\begin{abstract}
The emergence of research focused to understand the spreading and impact of disinformation is increasing year over year. Most times, the purpose of those who start the spreading of information intentionally false and designed to cause harm is in catalyzing its fast transformation into misinformation, which is the false content shared by people who do not realize it is false or misleading. Our interest is in discussing the role of people who decide to adopt an active role in stopping the propagation of an information when they realize that it is false. For this, we formulate two simple probabilistic models to compare misinformation spreading in the possible scenarios for which there is a passive or an active environment of aware individuals. With aware individuals we mean those individuals who realize that a given information is false or misleading. In the passive environment we assume that if one of an aware individual is exposed to the misinformation then he/she will not spread it. In the active environment we assume that if one of an aware individual is exposed to the misinformation then he/she will not spread it but also he/she will stop the propagation to other individuals from the individual who contacted him/her. We appeal to the theory of branching processes to analyse propagation in both scenarios and we discuss the role and the impact of effective participation in stopping misinformation. We show that the propagation reduces drastically provided we assume an active environment, and we obtain theoretical and computational results to measure such a reduction, which in turns depends on the proportion of aware individuals and the number of potential contacts of each individual which is assumed to be random.

\end{abstract}

\section{Introduction}


A sequence of events occurring during the last decade is indicating the beginning of a new phase regarding information transmission at a global scale. In a recent report published by the Council of Europe, Wardle and Derakhshan argue that this new phase involves: information pollution at a global scale; a complex web of motivations for creating, disseminating and consuming these ‘polluted’ messages; a myriad of content types and techniques for amplifying content; innumerable platforms hosting and reproducing this content; and breakneck speeds of communication between trusted peers, see \cite{claire/derak}. 
The debate about how to deal with the exponential growth of false contents is increasing in the scientific literature, see for example \cite{nara,ecker,van} and the references therein. 

For a deeper discussion of information disorder, from a conceptual point of view to a summary of related research, reports and practical initiatives connected to the subject, we refer the reader to \cite{claire/derak,claire}. In \cite{claire/derak} the focus is mainly on two types of information disorder: (1) disinformation, defined as information that is false and deliberately created to harm a person, social group, organization or country; and (2) misinformation, term used to describe information that is false, but not created with the intention of causing harm. In other words, misinformation is a false content shared by a person who does not realize it is false or misleading \cite{claire}. The connection between these two types is at the beginning of the propagation process. Creators of disinformation, independently of their motivation, find spaces for its spreading formed by individuals who do not realize it is false or misleading. Thus we have the beginning of misinformation spreading.

Inspired by the phenomenon of misinformation spreading, we propose two simple mathematical models to illustrate how an effective participation of aware individuals may contribute in stopping, or slowing down, the propagation. Roughly speaking, we consider the spreading of misinformation on a population containing, say, a proportion $1-p$ of people who recognize that the information has been designed to mislead and, as a consequence, they do not propagate it. We call these individuals the aware ones. In other words, we shall assume that any individual will spread the misinformation with probability $p\in(0,1)$. In addition, for the sake of simplicity, we shall assume that any individual will try to pass the information to a random number $\xi$ of individuals. In this case say that propagation occurs in a passive environment, and we shall denote the respective mathematical model, to be formally defined later, as $(p,\xi)$-MPE model. See Figure \ref{fig:model1}.

\begin{figure}[h!]
\begin{center}
\begin{tikzpicture}

\draw[gray,thick] (3,1) -- (3,-5);

\Vertex[x=-2,y=1.5,label={\bf \large a},fontcolor=white, color=black]{A}
\node at (0,0) {\begin{tikzpicture}
\Vertex[color=red, opacity =.7, size=.3]{A} \Vertex[x=2,color=blue,opacity =.3, size=.3]{B} \Vertex[x=1.8,y=-1.8,color=blue,opacity =.3, size=.3]{C} \Vertex[y=-2,color=blue,opacity =.3, size=.3]{D}
\Edge(A)(B)
\Edge(A)(C)
\Edge(A)(D)
\node at (-0.35,-2) {$p$};
\node at (2,0.35) {$p$};
\node at (2,-2.3) {$1-p$};
\end{tikzpicture}};

\Vertex[x=-2,y=-2.5,label={\bf \large b},fontcolor=white, color=black]{B}
\node at (0,-4) {\begin{tikzpicture}
\Vertex[color=red, opacity =.7, size=.3]{A} \Vertex[x=2, color=red, opacity =.7, size=.3]{B} \Vertex[x=1.8,y=-1.8, color=blue, opacity =.3, size=.3]{C} \Vertex[y=-2, color=red, opacity =.7, size=.3]{D}
\Edge(A)(B)
\Edge[label=$+$](A)(C)
\Edge(A)(D)
\end{tikzpicture}};
 
\Vertex[x=10,y=1.5,label={\bf \large c},fontcolor=white, color=black]{C}
\node [rotate=20] at (7,-1.5) {\begin{tikzpicture}
\Vertex[color=red, opacity =.7, size=.3]{A} \Vertex[x=2, color=red, opacity =.7, size=.3]{B} \Vertex[x=1.8,y=-1.8, color=blue, opacity =.3, size=.3]{C} \Vertex[y=-2, color=red, opacity =.7, size=.3]{D}
\Vertex[x=1,y=-3.5, color=blue,opacity =.3, size=.3]{E} \Vertex[y=-4,color=red, opacity =.8, size=.3]{F} \Vertex[x=-1, y=-3.5,color=blue,opacity =.3, size=.3]{G}
\Vertex[x=3.5,y=1, color=blue,opacity =.3, size=.3]{H} \Vertex[x=4,color=red, opacity =.8, size=.3]{I} \Vertex[x=3.5, y=-1,color=red,opacity =.8, size=.3]{J}
\Edge(A)(B)
\Edge[label=$+$](A)(C)
\Edge(A)(D)
\Edge[label=$+$](D)(E)
\Edge[label=$+$](D)(G)
\Edge(D)(F)
\Edge[label=$+$](B)(H)
\Edge(B)(I)
\Edge(B)(J)
\end{tikzpicture}};
\end{tikzpicture}
\caption{Illustration of the $(p,\xi)$-MPE model: propagation of misinformation in a passive environment. Red particles represent propagators and blue particles represent exposed individuals. Each exposed individual is an aware one with probability $1-p$.  (a) Here one individual exposes the misinformation to three individuals, of which one is an aware one. (b) Only, exposed non-aware individuals try to propagate the misinformation. (c) The procedure is repeated provided there are non-aware individuals exposed to the misinformation.}\label{fig:model1}
\end{center}
\end{figure}
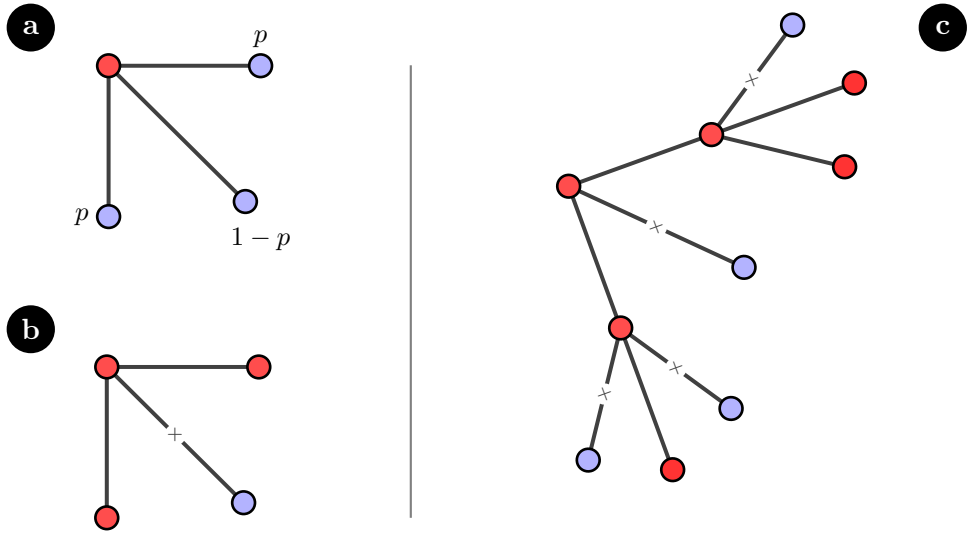

For the second model we assume that if one of an aware individual is exposed to the misinformation then he/she will not spread it and, in addition, such an individual will stop the propagation to other individuals from the individual who contacted him/her. In this case we say that propagation occurs in an active environment, and we shall denote the associated mathematical model as $(p,\xi)$-MAE model. See Figure \ref{fig:modelo2}. For both models, we say that there is extinction of the misinformation if at some time the processes stops, other case we say that the misinformation propagates through the population. 

\begin{figure}
\begin{center}
\begin{tikzpicture}

\Vertex[x=-2,y=1.5,label={\bf \large a},fontcolor=white, color=black]{A}
\node at (0,0) {\begin{tikzpicture}
\Vertex[color=red, opacity =.7, size=.3]{A} \Vertex[x=2,color=red,opacity =.7, size=.3]{B} \Vertex[x=1.8,y=-1.8,color=gray,opacity =.2, size=.3]{C} \Vertex[y=-2,color=gray,opacity =.2, size=.3]{D}
\Edge(A)(B)
\Edge[style={dashed}](A)(C)
\Edge[style={dashed}](A)(D)
\node at (2,0.35) {$p$};
\node at (2,-2.3) {\color{white}$1-p$};
\end{tikzpicture}};

\Vertex[x=3,y=1.5,label={\bf \large b},fontcolor=white, color=black]{B}
\node at (5,0) {\begin{tikzpicture}
\Vertex[color=red, opacity =.7, size=.3]{A} \Vertex[x=2, color=red, opacity =.7, size=.3]{B} \Vertex[x=1.8,y=-1.8, color=blue, opacity =.3, size=.3]{C} \Vertex[y=-2, color=gray,opacity =.2, size=.3]{D}
\Edge(A)(B)
\Edge(A)(C)
\Edge[style={dashed}](A)(D)
\node at (2,-2.3) {$1-p$};
\draw [->,line width=0.4mm,blue!30] (1.8,-1.5) to [out=90, in=330]  (0.4,-0.15);
\end{tikzpicture}};

\Vertex[x=8,y=1.5,label={\bf \large c},fontcolor=white, color=black]{B}
\node at (10,0) {\begin{tikzpicture}
\Vertex[color=blue, opacity =.3, size=.3]{A} \Vertex[x=2, color=red, opacity =.7, size=.3]{B} \Vertex[x=1.8,y=-1.8, color=blue, opacity =.3, size=.3]{C} \Vertex[y=-2, color=gray,opacity =.2, size=.3]{D}
\Edge(A)(B)
\Edge(A)(C)
\Edge[label=$\times$](A)(D)
\node at (2,-2.3) {\color{white}$1-p$};
\end{tikzpicture}};

\draw[gray,thick] (0,-2) -- (10,-2);

\Vertex[x=1,y=-3,label={\bf \large d},fontcolor=white, color=black]{D}
\node [rotate=330] at (5,-5.5) {\begin{tikzpicture}
\Vertex[color=blue, opacity =.3, size=.3]{A} \Vertex[x=2, color=red, opacity =.7, size=.3]{B} \Vertex[x=1.8,y=-1.8, color=blue, opacity =.3, size=.3]{C} \Vertex[y=-2, color=gray,opacity =.2, size=.3]{D}
\Vertex[x=3.5,y=1, color=gray,opacity =.2, size=.3]{H} \Vertex[x=4,color=gray,opacity =.2, size=.3]{I} \Vertex[x=3.5, y=-1,color=blue,opacity =.3, size=.3]{J}

\draw [->,line width=0.4mm,blue!30] (3.3,-1.1) to [out=180, in=280]  (2.1,-0.25);

\Edge(A)(B)
\Edge(A)(C)
\Edge[label=$\times$](A)(D)
\Edge[style={dashed}](B)(H)
\Edge[style={dashed}](B)(I)
\Edge(B)(J)
\end{tikzpicture}};

\draw[gray,thick] (0,-8) -- (10,-8);

\Vertex[x=1,y=-9,label={\bf \large e},fontcolor=white, color=black]{E}
\node [rotate=330] at (5,-11.5) {\begin{tikzpicture}
\Vertex[color=blue, opacity =.3, size=.3]{A} \Vertex[x=2, color=blue, opacity =.3, size=.3]{B} \Vertex[x=1.8,y=-1.8, color=blue, opacity =.3, size=.3]{C} \Vertex[y=-2, color=gray,opacity =.2, size=.3]{D}
\Vertex[x=3.5,y=1, color=gray,opacity =.2, size=.3]{H} \Vertex[x=4,color=gray,opacity =.2, size=.3]{I} \Vertex[x=3.5, y=-1,color=blue,opacity =.3, size=.3]{J}
\Edge(A)(B)
\Edge(A)(C)
\Edge[label=$\times$](A)(D)
\Edge[label=$+$](B)(H)
\Edge[label=$\times$](B)(I)
\Edge(B)(J)
\end{tikzpicture}};

\end{tikzpicture}
\caption{Illustration of the $(p,\xi)$-MAE model: propagation of misinformation in an active environment. Red particles represent propagators, blue particles represent exposed individuals, and gray particles represent potential exposed individuals. Each exposed individual is an aware one with probability $p$.  (a) In a first attempt, one individual exposes the misinformation to another individual, who is an non-aware one. (b) At a second attempt, the first individual exposes the misinformation to an aware one. In this case, the aware individual has an active role and stops the propagation, generating the scenario represented in (c). (d)-(e) The procedure is repeated provided there are non-aware individuals exposed to the misinformation.}\label{fig:modelo2}
\end{center}
\end{figure}
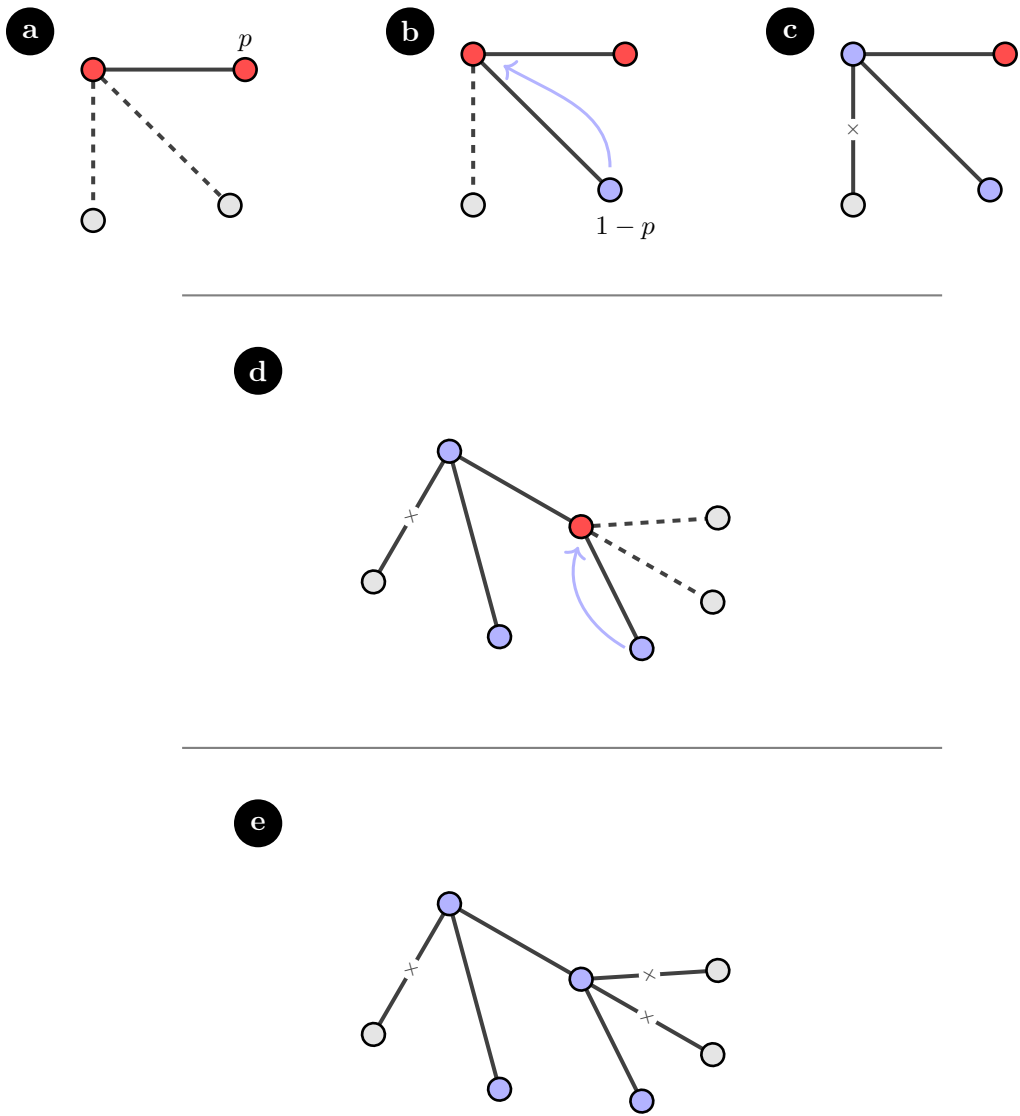

Formally, our models are branching processes which in turn are probabilistic models that already were shown to be accurate for the description of the early stages of an epidemic outbreak or information propagation, see \cite{andersson,draief,gleeson}. Indeed, such stochastic processes serve as approximations for propagation on networks once the actual structure of the network including correlations and loops is neglected. Thus, we appeal to the theory of branching processes to compare both scenarios represented by the models and so understand the impact of effective participation in stopping misinformation.

We notice that by considering $p=1$ in both of our models we recover the propagation model considered by \cite{gleeson}. In such a work the authors applied branching processes to model Twitter-based information cascades and they shown that existing and new theoretical results for branching processes are in agreement with many statistical characteristics of the empirical information cascades, reinforcing the applicability of branching processes for the description of information propagation. In our work we show that a random perturbation in the model, namely by assuming a proportion of aware individuals, allows us to analyse and compare different scenarios of propagation through a suitable application of well-known results coming from the Theory of Branching Processes. In particular, our $(p,\xi)$-MPE model may be seen as a discrete-time susceptible-infected-recovered epidemic model, SIR model for short, on random trees. As far as we know, one of the first works dealing with a discrete-time SIR model, from a probabilistic point of view, is \cite{draief2} to model the spreading of computer viruses, worms and other self-replicating malware. The authors appeal to probabilistic tools to study the behavior of the average number of infected individuals in some specific network models of practical interest, like Erd\"os–Rényi and power law random graphs. Their work is restricted to finite graphs, and it was complemented later by \cite{farkhondeh} who found a simple lower bound on the expected number of ever infected vertices using a breath-first search algorithm. The continuous-time version of the SIR model is well-known and we refer the reader to \cite{andersson,draief,draief2} and the references therein for a review of techniques and results for dealing with such models. We also refer the reader to \cite{moreno-PhysA2007,arruda2,arruda3} for a review of other models for information transmission on finite populations, and to \cite{agliari,AVPHom,AVPRan} for a deeper discussion of the application of branching processes to study the Maki-Thompson rumor model on small-world graphs and (random) trees. We point out that we appeal to branching processes and suitable properties of probability generating functions to study and compare the $(p,\xi)$-MPE model and the $(p,\xi)$-MAE model, which in turn can inspire new research related to propagation models on finite graphs.

We divide our work into two parts. First we perform a similar analysis than the one done by \cite{gleeson} to reinforce the accuracy of using branching processes to represent information transmission in social networks. Like in \cite{gleeson} we analyse data from Twitter, different from the one analysed in such a work. This is included in Section 2 after a review of the concept of branching process. The second part of our work is devoted to the comparison of two possible scenarios in a process of misinformation spreading. We represent the spreading of misinformation in both a passive and an active environment through suitably defined branching processes. Section 3 is devoted to the formulation of such processes and the statement of phase transition results related to the propagation or extinction of the misinformation. In Section 3 we also formulate rigorous results related to the tree cascade height and the tree cascade size for both models, which are random quantities useful to obtain a taste of how many individuals become a vector in the information process. A discussion comparing the two represented scenarios is summarized in Section 4, and we include all the proofs of theorems in Section 5.

\section{Branching processes and information spreading}

\subsection{What is a branching process?} The Theory of Branching Processes is one of the many branches of Probability Theory which serves as an example of the fact that simple models, inspired in observable phenomena, can evolve to become a strong branch of Applied Mathematics useful to the solution of relevant problems for applied sciences. A branching process can be interpreted as a mathematical (probabilistic) model of the following phenomenon. Imagine that at time $0$ there is a particle, and that such particle gives birth to new particles at time $n=1$. The number of new particles follows the law of a discrete random variable $X$. Then, each particle born at time $n\in \mathbb{N}$, will give birth to new particles, at time $n+1$, according to a random variable independent and identically distributed ($i.i.d.$) to $X$. We call a branching process, with offspring distribution given by $X$, the stochastic process $(Z_n)_{n\geq 0}$ where $Z_n$ denotes the number of particles born at time $n$, for $n\geq 0$. To avoid trivial situations we assume that $\mathbb{P}(X=i)=p_i$, for $i\in \mathbb{N}\cup \{0\}$, with $p_0 >0$, $p_0 + p_1 <1$, and $m:=\mathbb{E}(X)<\infty$. In other words, the stochastic process is such that $Z_0$ is a discrete random variable (for simplicity we take $Z_0 = 1$), and for any $n\in \mathbb{N}$
$$Z_{n+1}:=\sum_{i=1}^{Z_n}X_i,$$
where the $X_i$'s are $i.i.d.$ to the random variable $X$. The beginning of such processes does not belong to mathematicians, but to demographers or biologists, which were interested in the survival or extinction of noble families. One can identify individuals with particles so the particles born in a given time $n$ of the process correspond to the $n$th-generation of the family. An interesting historical review can be found in \cite{jagers}. Since the first interest was in the ``extinction'' or not of the process, one can define this as the event
$$\mathcal{E}:=\bigcup_{n=1}^{\infty}\{Z_n =0\}.$$

It is well known that $\mathbb{P}(\mathcal{E})=1$ if, and only if, $m\leq 1$. Moreover, the extinction probability can be localized as the smallest root of $G_{X}(s)=s$, where $G_X$ is the generating probability function of $X$, see \cite{harris}. We refer the reader to \cite{BranchingProcesses,harris,schinazi} for a deeper discussion of branching processes, its generalizations, and some applications. In particular, it is worth pointing out that the connection between epidemic-like processes and branching processes is well-known, since lasts describe accurately the early stages of an epidemic outbreak, see for example \cite[Section 3.3]{andersson} or \cite{draief}. The link is done in a natural way, infected individuals are represented by particles and each born particle from a given particle in the branching process represents an individual who was infected by the corresponding infected one. This in turns, given the similarities between the transmission of an infection and the spreading of an information \cite{daley_nature}, implies that some tools of branching processes can be applied to the representation of information spreading on a population, see \cite{gleeson,AVPHom,AVPRan}.

\subsection{The underlying branching process of Twitter-induced information cascades}

By following the same approach than the one proposed by \cite{gleeson} we shall reinforce that branching processes may be suitable for the representation of the first stages of Twitter-induced information cascades. We consider the Twitter data sets constructed and analyzed by \cite{jing}, namely Twitter15 and Twitter16, which respectively contains 1,490 and 818 propagation trees. Each data set contains, in the form of tree structures, the spread source tweets and their propagation threads, including retweets. Moreover, since the purpose of such a work was to address the problem of identifying rumors based on their propagation structure, the data sets are classified between false rumors, true rumors and non-rumors. We refer the reader to \cite{jing}, and the references therein, for a deeper discussion about the subject. Here, we shall identify the cascade structures of the different spreading processes in order to associate a suitable defined branching process. We shall consider rooted trees, where we assume that the root is the tweet starting a thread. Then, we represent a retweet of the data set with a vertex of the tree. See Figure \ref{fig:twitter} for an illustration of such identification.

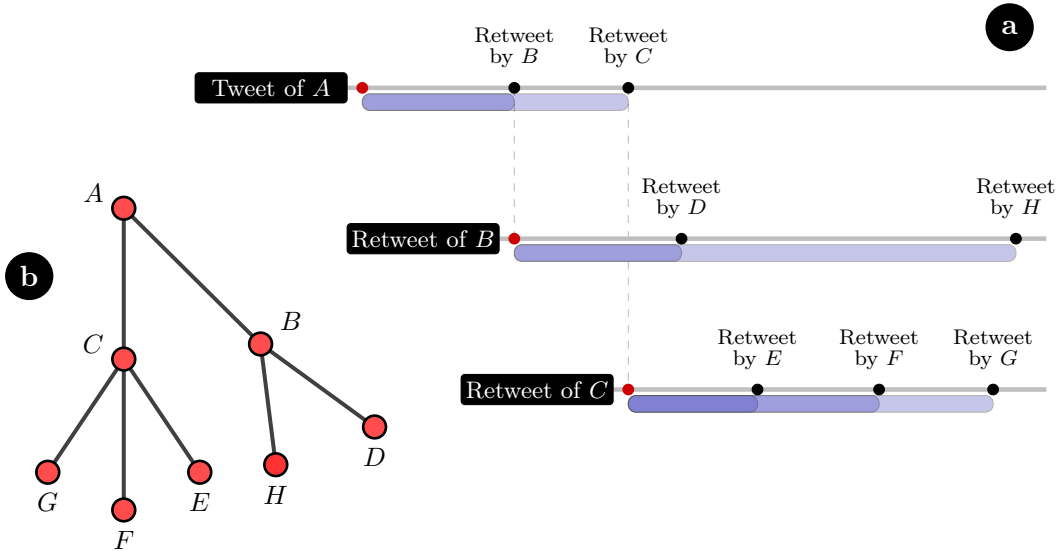
\begin{figure}[h!]
\begin{center}
\begin{tikzpicture}

\Vertex[x=5,y=2.5,label={\bf \large a},fontcolor=white, color=black]{A}

\Vertex[x=-7.9,y=-0.8,label={\bf \large b},fontcolor=white, color=black]{B}

\node at (0,0) {
\begin{tikzpicture}

\draw[gray!50!white, dashed] (2,0) -- (2,-2);
\draw[gray!50!white, dashed] (3.5,0) -- (3.5,-2);
\draw[gray!50!white, dashed] (3.5,-2.3) -- (3.5,-4);

\draw[gray!50!white, line width=0.6mm] (-0.2,0) -- (9,0);
\draw [fill=black,rounded corners=.05cm] (-2.2,-0.2) rectangle (-0.2,0.2);
\draw [fill=blue!50!gray, opacity=0.3,rounded corners=.1cm] (0,-0.3) rectangle (3.5,-0.08);
\draw [fill=blue!50!gray, opacity=0.3,rounded corners=.1cm] (0,-0.3) rectangle (2,-0.08);
\node at (-1.2,0) {\color{white} \small Tweet of $A$};
\filldraw [red!80!black] (0,0) circle (2pt);
\filldraw [black] (2,0) circle (2pt);
\node at (2,0.7) {\footnotesize Retweet};
\node at (2,0.4) {\footnotesize by $B$};
\filldraw [black] (3.5,0) circle (2pt);
\node at (3.5,0.7) {\footnotesize Retweet};
\node at (3.5,0.4) {\footnotesize by $C$};


\draw[gray!50!white, line width=0.6mm] (1.8,-2) -- (9,-2);
\draw [fill=black,rounded corners=.05cm] (-0.2,-2.2) rectangle (1.8,-1.8);
\draw [fill=blue!50!gray, opacity=0.3,rounded corners=.1cm] (2,-2.3) rectangle (8.6,-2.08);
\draw [fill=blue!50!gray, opacity=0.3,rounded corners=.1cm] (2,-2.3) rectangle (4.2,-2.08);
\node at (0.8,-2) {\color{white} \small Retweet of $B$};
\filldraw [red!80!black] (2,-2) circle (2pt);
\filldraw [black] (4.2,-2) circle (2pt);
\node at (4.2,-1.3) {\footnotesize Retweet};
\node at (4.2,-1.6) {\footnotesize by $D$};
\filldraw [black] (8.6,-2) circle (2pt);
\node at (8.6,-1.3) {\footnotesize Retweet};
\node at (8.6,-1.6) {\footnotesize by $H$};


\draw[gray!50!white, line width=0.6mm] (3.3,-4) -- (9,-4);
\draw [fill=black,rounded corners=.05cm] (1.3,-4.2) rectangle (3.3,-3.8);
\node at (2.3,-4) {\color{white} \small Retweet of $C$};
\filldraw [red!80!black] (3.5,-4) circle (2pt);
\draw [fill=blue!50!gray, opacity=0.3,rounded corners=.1cm] 
(3.5,-4.3) rectangle (8.3,-4.08);
\draw [fill=blue!50!gray, opacity=0.3,rounded corners=.1cm] 
(3.5,-4.3) rectangle (6.8,-4.08);
\draw [fill=blue!50!gray, opacity=0.3,rounded corners=.1cm] (3.5,-4.3) rectangle (5.2,-4.08);
\filldraw [black] (5.2,-4) circle (2pt);
\node at (5.2,-3.3) {\footnotesize Retweet};
\node at (5.2,-3.6) {\footnotesize by $E$};
\filldraw [black] (6.8,-4) circle (2pt);
\node at (6.8,-3.3) {\footnotesize Retweet};
\node at (6.8,-3.6) {\footnotesize by $F$};
\filldraw [black] (8.3,-4) circle (2pt);
\node at (8.3,-3.3) {\footnotesize Retweet};
\node at (8.3,-3.6) {\footnotesize by $G$};
\end{tikzpicture}};

\node at (-5.5,-2) {\begin{tikzpicture}
\Vertex[color=red, opacity =.7, size=.3]{A} 
\Vertex[x=1.8,y=-1.8, color=red, opacity =.7, size=.3]{B}
\Vertex[x=3.3,y=-2.9, color=red, opacity =.7, size=.3]{D}
\Vertex[y=-2, color=red, opacity =.7, size=.3]{C}
\Vertex[x=1,y=-3.5, color=red,opacity =.7, size=.3]{E} \Vertex[y=-4,color=red, opacity =.7, size=.3]{F} 
\Vertex[x=-1, y=-3.5,color=red,opacity =.7, size=.3]{G}
\Vertex[x=2,y=-3.4, color=red,opacity =.8, size=.3]{H} \Edge(A)(B)
\Edge(A)(C)
\Edge(B)(D)
\Edge(B)(H)
\Edge(C)(E)
\Edge(C)(G)
\Edge(C)(F)
\node at (-0.4,0.2) {$A$};
\node at (2.2,-1.5) {$B$};
\node at (-0.4,-1.8) {$C$};
\node at (3.3,-3.3) {$D$};
\node at (1,-3.9) {$E$};
\node at (0,-4.4) {$F$};
\node at (-1,-3.9) {$G$};
\node at (2,-3.8) {$H$};
\end{tikzpicture}};
\end{tikzpicture}
\caption{Example of a realization of the process that starts with a Tweet of individual $A$. (a) In this case, individuals $B$ and $C$ retweet the initial message, and their retweets are retweeted by individuals $D$ and $H$ on one side, and by individuals $E$, $F$ and $G$ on the other side. (b) The corresponding rooted tree is generating by identifying individuals by vertices. Who publish the first tweet is the root, who is connected with those individuals who retweeted his/her message, who in turns are also connected with those who retweeted them, and so on. Note that although the times between retweets are not homogeneous, this does not matter for constructing the corresponding tree.}\label{fig:twitter}
\end{center}
\end{figure}

Thus defined such a trees can be interpreted as the realization of a branching process whose offspring distribution is obtained directly from the data. Like in \cite{gleeson} each of the data sets that we consider (false, true and non-rumors) can be represented by an ensemble of $N$ trees where vertices represent particles of an underlying branching process and levels of the tree represent its generations. We let $z_{k,n}$ for the number of vertices at the $nth$-level in the $kth$-tree, for $n\in \mathbb{N}$ and $k\in\{1,\ldots,N\}$, and we let
\begin{equation}\label{eq:gen}
z_n:=\sum_{k=1}^{N}z_{k,n}.
\end{equation}

Since our data sets are formed by small trees (few vertices) we consider each ensemble as a whole so $z_n$ in \eqref{eq:gen} denotes the total number of particles at the $nth$-generation. Now, the dependence of $z_n$ on $n$, for each ensemble of the data sets, is showed in Figures \ref{fig:2015} and \ref{fig:2016}, where we are using log-linear scales for the data sets Twitter15 and Twitter16, respectively. Like in \cite{gleeson} we illustrate the effective branching number for the data sets, which is just an estimation of the branching number of the resulting tree computed by doing $z_{n+1}/z_n$. See Figure \ref{fig:branching}.

\begin{figure}
    \centering
    \subfigure[]{\includegraphics[scale=0.4]{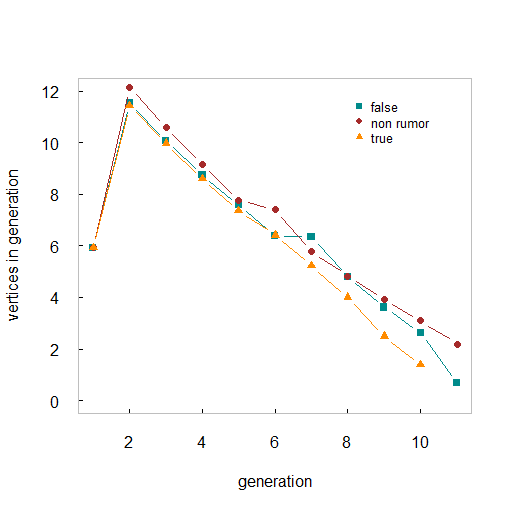}}\subfigure[]{\includegraphics[scale=0.4]{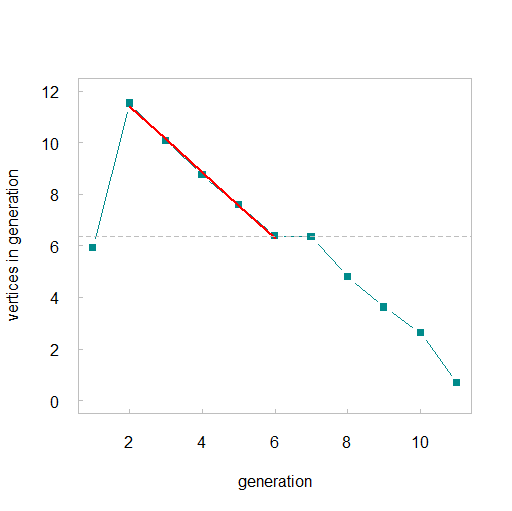}}
    \subfigure[]{\includegraphics[scale=0.4]{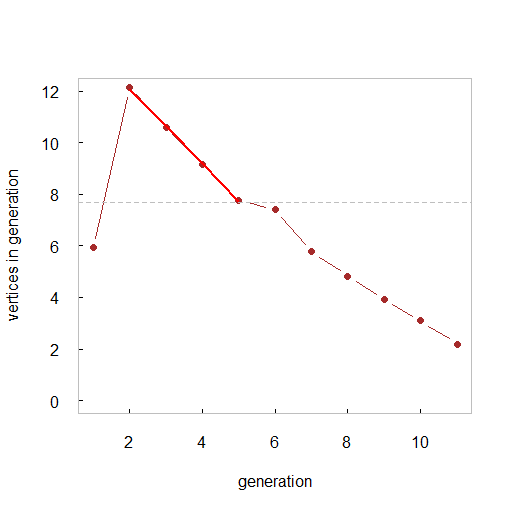}}\subfigure[]{\includegraphics[scale=0.4]{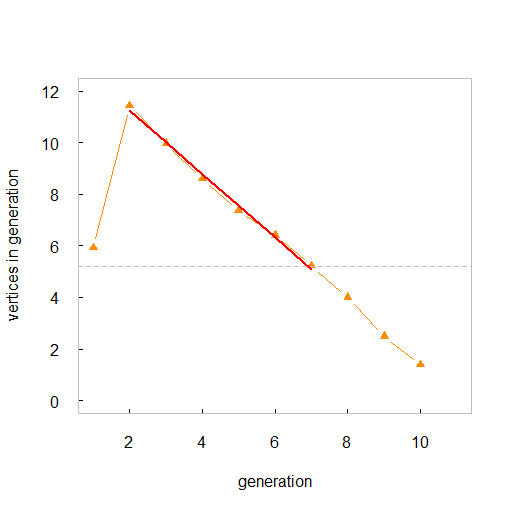}}
    \caption{Number of vertices by generation in the data set Twitter15. (a) Comparative results for false, true and non-rumors. (b)-(d) At the early stages is observed an approximate exponential growth of vertices, seen as linearity using log-linear scales, which is consistent with the growth of a branching process.}
    \label{fig:2015}
\end{figure}

\begin{figure}
    \centering
    \subfigure[]{\includegraphics[scale=0.4]{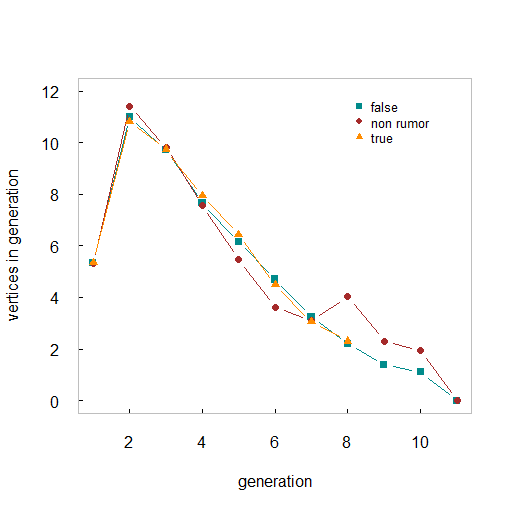}}\subfigure[]{\includegraphics[scale=0.4]{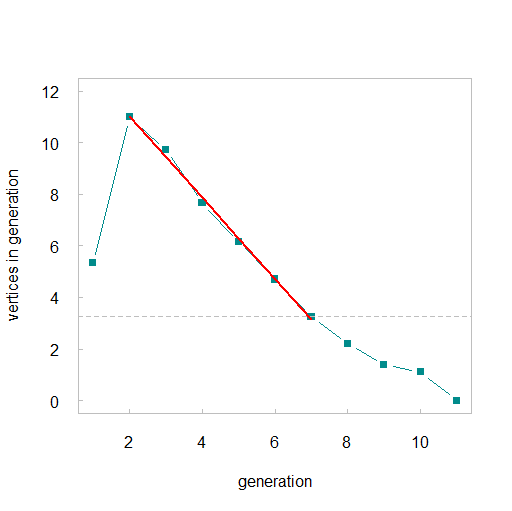}}
    \subfigure[]{\includegraphics[scale=0.4]{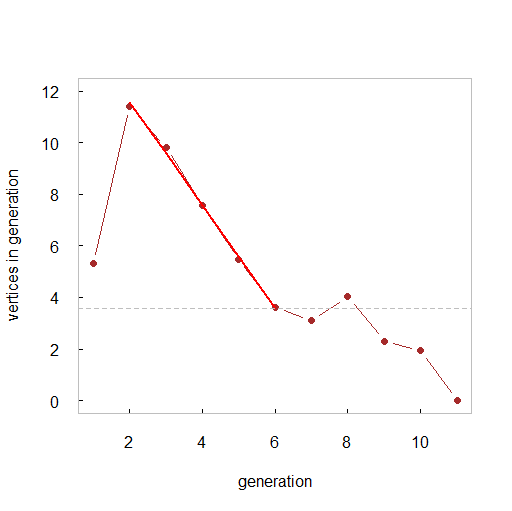}}\subfigure[]{\includegraphics[scale=0.4]{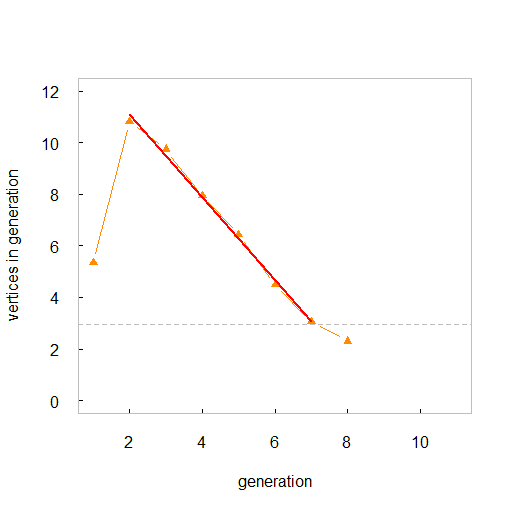}}
    \caption{Number of vertices by generation in the data set Twitter16. (a) Comparative results for false, true and non-rumors. (b)-(d) At the early stages is observed an approximate exponential growth of vertices, seen as linearity using log-linear scales, which is consistent with the growth of a branching process. }
    \label{fig:2016}
\end{figure}

\begin{figure}
    \centering
    \subfigure[Twitter15]{\includegraphics[scale=0.5]{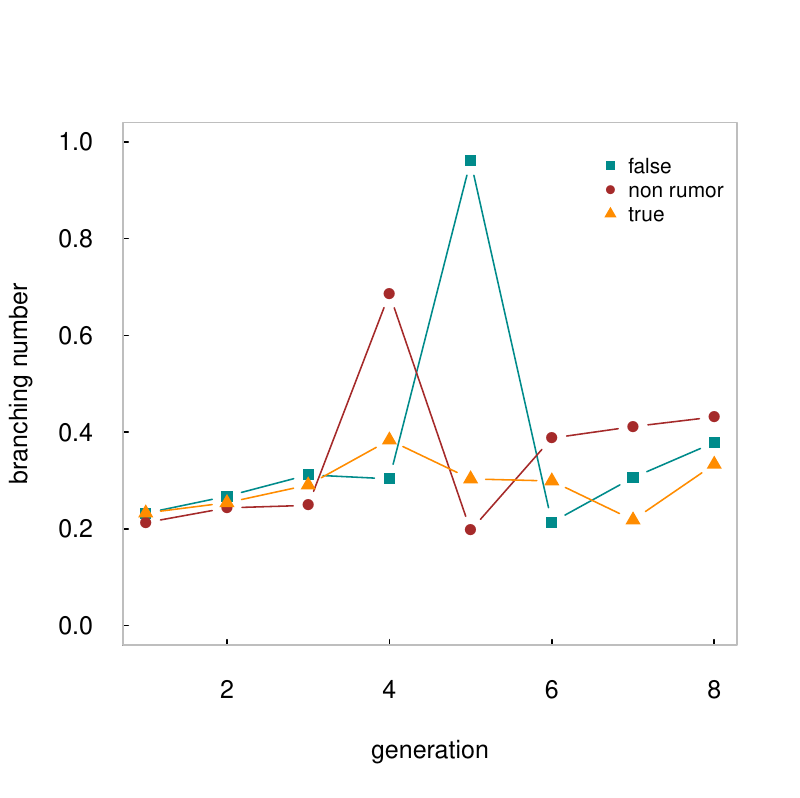}}
        \subfigure[Twitter16]{\includegraphics[scale=0.5]{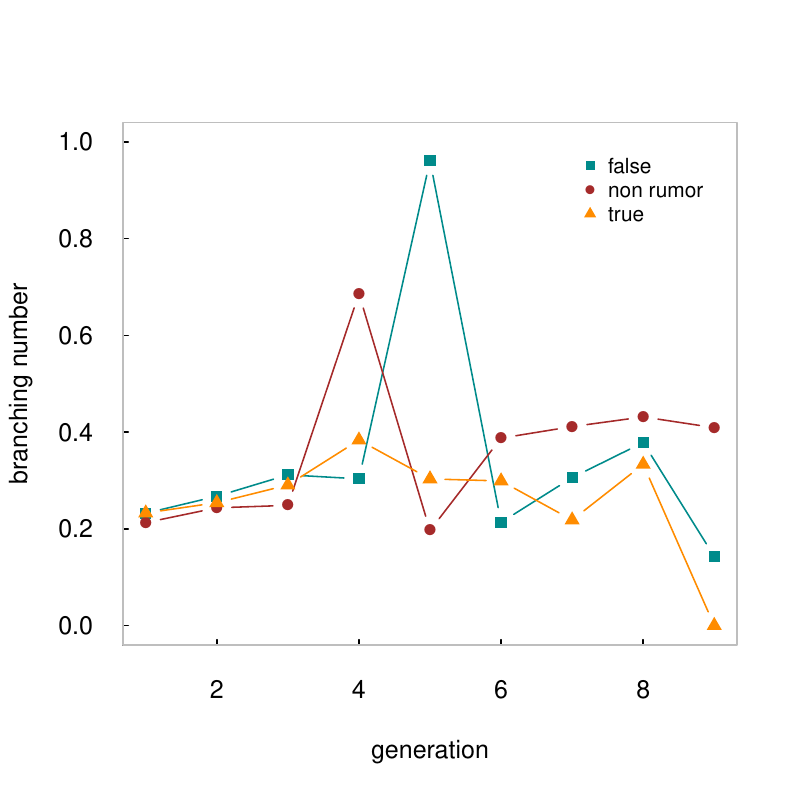}}
    \caption{Effective branching number for the data sets.}
    \label{fig:branching}
\end{figure}

\section{Two misinformation stochastic models}

We appeal to the Theory of Branching Processes to formulate two simple models for misinformation spreading. The purpose of the section is to apply our theoretical models to compare different scenarios of propagation. Roughly speaking, we consider the spreading of misinformation on a population containing, say, a proportion $1-p$ of people who recognize that the information has been designed to mislead and, as a consequence, they do not propagate it. We call these individuals the aware ones. In other words, we shall assume that any individual will spread the misinformation with probability $p\in(0,1)$. In addition, for the sake of simplicity, we shall assume that any individual will try to pass the information to a random number $\xi$ of individuals. In what follows, $\xi$ denotes a random variable taking values in the non-negative integers and such that $\mathbb{E}(\xi)<\infty$.

\subsection{A misinformation in a passive environment stochastic model} 
Suppose that if one of an aware individual is exposed to the misinformation then he/she will not spread it. However, we assume that such an individual will not try to stop the propagation to other individuals from the individual who contacted him/her. This is the reason why we call it of a passive environment. To represent this situation with a probabilistic model we assume that the misinformation is propagating as the following branching process. The process starts from one individual by assuming that at time zero only he/she is carrying the misinformation. At time $n=1$ we assume that a random number $\xi$ of individuals is exposed to the misinformation and choose carrying it with probability $p\in(0,1)$, other case he/she stays neutral in the process. At each discrete time of the process we repeat this procedure with those $\xi$ individuals which are contacted by someone who started carrying the misinformation in the previous time. Indeed, for any new individual we are using random copies of the random variable $\xi$; that is, we are using a sequence of independent and identically distributed random variables with the same law than $\xi$. We say that there is extinction of the misinformation if at some time the processes stops, other case we say that there is propagation of the misinformation. See Figure \ref{fig:model1}. 

Thus defined the sequence of the number of spreaders at the different times of the process form a branching process with offspring distribution given by a binomial distribution with parameters $\xi$ and $p$. We call this stochastic process the $(p,\xi)$ misinformation in a passive environment stochastic model, or the $(p,\xi)$-MPE model for short. The two key results in this section are Theorem \ref{thm:model1}, which shows a phase-transition result in the sense that misinformation propagates when the expectation value of $\xi$ is greater that $1/p$, and Theorem \ref{thm:model1alcance-size}, which specifies the size of the misinformation outbreak.

\begin{thm}\label{thm:model1}
Consider the $(p,\xi)$-MPE model and let $\theta_1(p,\xi)$ be the probability of misinformation propagation. Then $\theta_1(p,\xi)>0$ if, and only if, $p>1/\mathbb{E}(\xi)$. Moreover, $\theta_1(p,\xi)=1-\alpha$ where $\alpha$ is the smallest root of 
$  G_{\xi}(sp+1-p) = s,$ and $G_{\xi}$ is the generating probability function of $\xi$.
\end{thm}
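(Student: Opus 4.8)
The plan is to recognize the $(p,\xi)$-MPE model as an ordinary Galton--Watson branching process and then to invoke the classical extinction criterion recalled in Section~2. First I would identify the offspring law explicitly. A given spreader contacts a random number $\xi$ of individuals, and each of them independently becomes a spreader with probability $p$; hence the number $X$ of new spreaders it produces is a random (binomial) sum $X=\sum_{i=1}^{\xi}B_i$, where the $B_i$ are $i.i.d.$ Bernoulli($p$) random variables, independent of $\xi$. Because each individual uses an independent copy of $\xi$ together with independent Bernoulli trials, the successive generation sizes indeed form a branching process with this offspring distribution.

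Next I would compute the probability generating function of $X$ by conditioning on $\xi$. Since the generating function of a Bernoulli($p$) variable is $s\mapsto 1-p+ps$, the composition formula for random sums gives
\[
G_X(s)=\mathbb{E}\!\left(s^{X}\right)=\mathbb{E}\!\left(\mathbb{E}(s^{X}\mid\xi)\right)=\mathbb{E}\!\left((1-p+ps)^{\xi}\right)=G_{\xi}(sp+1-p).
\]
Differentiating at $s=1$, and noting that $sp+1-p=1$ there, yields the offspring mean $m=\mathbb{E}(X)=p\,\mathbb{E}(\xi)$.

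With these two identities the statement follows from the general results quoted in Section~2. The extinction event $\mathcal{E}$ has probability one if and only if $m\le 1$, that is, if and only if $p\,\mathbb{E}(\xi)\le 1$; equivalently the propagation probability $\theta_1(p,\xi)=1-\mathbb{P}(\mathcal{E})$ is strictly positive precisely when $p>1/\mathbb{E}(\xi)$. Moreover the extinction probability is the smallest root $\alpha$ of $G_X(s)=s$, which by the computation above is exactly the smallest root of $G_{\xi}(sp+1-p)=s$; hence $\theta_1(p,\xi)=1-\alpha$.

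The only point requiring care is to verify that the offspring law fulfils the standing non-degeneracy assumptions of Section~2, so that the cited criterion applies in its clean form. Here $p_0=\mathbb{P}(X=0)=G_{\xi}(1-p)>0$ since $p\in(0,1)$, and one must also exclude the degenerate case $\mathbb{P}(X=1)=1$, which holds as soon as $\xi$ itself is nondegenerate; this guarantees the equivalence ``extinction a.s.\ if and only if $m\le 1$'' without a boundary exception at $m=1$. I expect this verification, rather than the generating-function computation, to be the main (if minor) obstacle.
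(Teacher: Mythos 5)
Your proposal is correct and follows essentially the same route as the paper: identify the offspring law as a binomial thinning of $\xi$, compute $G_X(s)=G_{\xi}(sp+1-p)$ and $m=p\,\mathbb{E}(\xi)$, and invoke the classical extinction criterion. The only differences are cosmetic (you obtain the generating function by conditioning on $\xi$ rather than by explicitly rearranging the double sum) and your explicit check of the non-degeneracy hypotheses, which the paper omits but which is handled exactly as you indicate.
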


Theorem \ref{thm:model1} gains in interest if we realize that it allow us to exhibit the phase-diagram of extinction $vs$ propagation of misinformation, according to the values of $p$ and $\mathbb{E}(\xi)$. See Figure \ref{fig:phasediag-mod1}.

	\begin{figure}[!h]
	\centering
\includegraphics[scale=0.8]{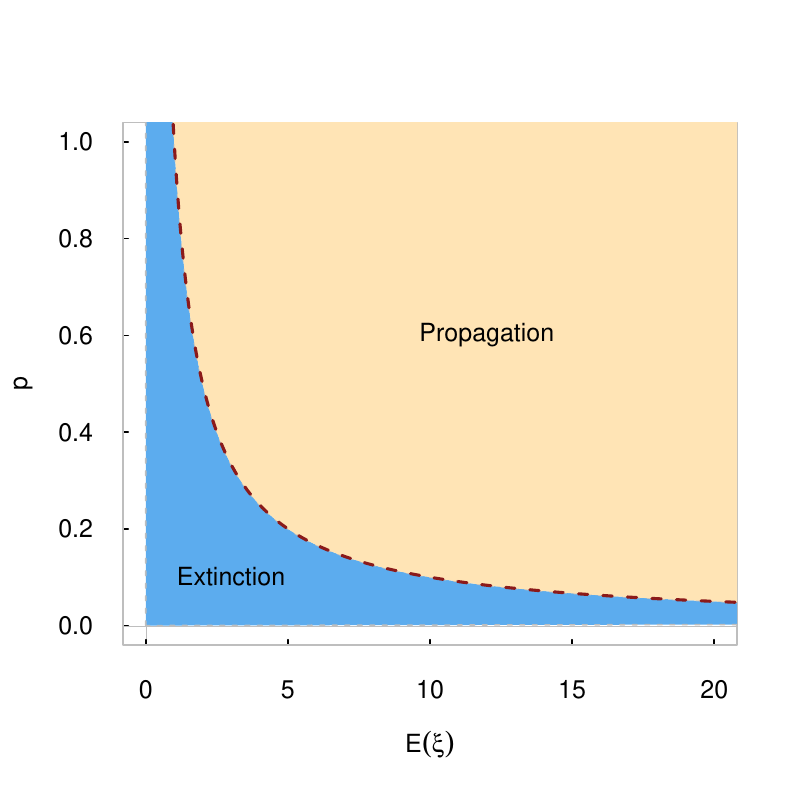}
\caption{Phase-diagram for the $(p,\xi)$-MPE model. For $p<1/\mathbb{E}(\xi)$ there is extinction almost surely of the misinformation. For $p>1/\mathbb{E}(\xi)$ the model exhibits propagation of the misinformation with positive probability.}\label{fig:phasediag-mod1}
\end{figure}

In order to illustrate the generality of Theorem \ref{thm:model1}, we discuss the result for two special cases of $\xi$. First we assume a constant number $k$ of exposed individuals at any time. That is, we consider $\mathbb{P}(\xi=k)=1$ for some $k\in \mathbb{N}$.

\begin{cor} \label{cor:model1tfc}
Let $k\in\mathbb{N}$ be fixed and consider the $(p,\xi)$-MPE model with $\mathbb{P}(\xi = k) = 1$. Then $\theta_1(p,k)>0$ if, and only if $p > 1/k$. Moreover, $\theta_1(p,k)=1-\alpha$ where $\alpha$ is the smallest root of $(sp+1-p)^k = s$.
\end{cor}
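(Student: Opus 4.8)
The plan is to obtain Corollary \ref{cor:model1tfc} as an immediate specialization of Theorem \ref{thm:model1} to the degenerate offspring parameter $\xi$ with $\mathbb{P}(\xi = k) = 1$. The only genuinely computational ingredient is the probability generating function of such a $\xi$; everything else is substitution into the two conclusions already established in Theorem \ref{thm:model1}.

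First I would record that a random variable $\xi$ with $\mathbb{P}(\xi = k) = 1$ has $\mathbb{E}(\xi) = k$ and probability generating function $G_\xi(s) = \mathbb{E}(s^\xi) = s^k$. Feeding $\mathbb{E}(\xi) = k$ into the phase-transition criterion of Theorem \ref{thm:model1}, the condition $p > 1/\mathbb{E}(\xi)$ becomes exactly $p > 1/k$, which yields the first assertion $\theta_1(p,k) > 0$ if and only if $p > 1/k$.

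Second I would substitute $G_\xi(s) = s^k$ into the fixed-point equation $G_\xi(sp + 1 - p) = s$ appearing in Theorem \ref{thm:model1}, turning it into $(sp + 1 - p)^k = s$. By the same theorem, $\theta_1(p,k) = 1 - \alpha$, where $\alpha$ is the smallest root of this equation, which is precisely the second assertion.

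The only point requiring a moment of care is checking that the standing branching-process hypotheses under which Theorem \ref{thm:model1} was proved are actually met here, i.e.\ that the offspring law, which is $\mathrm{Binomial}(k,p)$, satisfies $p_0 > 0$ and $p_0 + p_1 < 1$. Since $p \in (0,1)$, one has $p_0 = (1-p)^k > 0$ automatically, and a short computation shows $p_0 + p_1 = (1-p)^{k-1}\bigl(1 + (k-1)p\bigr) < 1$ for every $k \geq 2$; the boundary case $k = 1$ is degenerate, since then $p > 1/k = 1$ is impossible and extinction is certain, so it is consistent with the statement and needs no separate argument. I expect this verification to be the main, and still minor, obstacle, after which the result follows directly from Theorem \ref{thm:model1}.
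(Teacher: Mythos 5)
Your proposal is correct and follows essentially the same route as the paper, which likewise just substitutes $\mathbb{E}(\xi)=k$ and $G_{\xi}(s)=s^k$ into Theorem \ref{thm:model1}. Your extra check that the offspring law $\mathrm{Binomial}(k,p)$ satisfies $p_0>0$ and $p_0+p_1<1$ is a small added care the paper omits, but it does not change the argument.
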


It is well-known that a power-law probability distribution arises naturally for the description of social, and many other types of observable phenomena, see \cite{barabasi, castellano, muchnik} and the references therein. The second special case in our discussion is when $\xi$ follows a Zipf distribution with parameter $\gamma$, $\gamma>1$, which is considered a reasonable distribution for fitting the degree sequence of real networks; see \cite{duarte} and references therein. That is, we consider 
$$\mathbb{P}(\xi = j) = \frac{ j^{-\gamma}}{\zeta(\gamma)},$$ 
for $j\in\mathbb{N}$, and a given $\gamma>1$, where 
$$\zeta(\gamma)=\sum_{i=1}^{\infty}\frac{1}{i^{\gamma}}$$
is the Riemann zeta function. We adopt the notation $\xi\sim Zipf(\gamma)$ for such a distribution. 

\begin{cor} \label{cor:model1tfzeta}
Consider the $(p,\xi)$-MPE model, with $\xi\sim Zipf(\gamma)$, $\gamma>1$. Then $\theta_1(p,\gamma)>0$ if, and only if $p > \zeta(\gamma)/\zeta(\gamma-1)$. Moreover, $\theta_1(p,\gamma)=1-\alpha$ where $\alpha$ is the smallest root of $ Li_{\gamma}(sp+1-p) = s\zeta(\gamma)$, where 
$$ Li_{a}(x) := \displaystyle \sum_{j=1}^{\infty} = \frac{x^j}{j^a} , |x| < 1$$ 
is the polylogarithm function also known as the Jonquière's function.
\end{cor}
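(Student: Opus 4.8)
The plan is to obtain the result as an immediate specialization of Theorem \ref{thm:model1} to the Zipf offspring distribution, so that the work reduces to identifying the probability generating function $G_\xi$ and the mean $\mathbb{E}(\xi)$ explicitly. First I would compute the generating function directly from the definition:
$$G_\xi(s) = \sum_{j=1}^{\infty} s^j\, \mathbb{P}(\xi = j) = \frac{1}{\zeta(\gamma)} \sum_{j=1}^{\infty} \frac{s^j}{j^\gamma} = \frac{Li_\gamma(s)}{\zeta(\gamma)},$$
valid for $s \in [0,1]$ since $\gamma > 1$ guarantees convergence of the series even at the endpoint $s=1$, where $Li_\gamma(1) = \zeta(\gamma)$ and hence $G_\xi(1) = 1$, as a probability generating function must satisfy.

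Next I would address the threshold. Theorem \ref{thm:model1} asserts that $\theta_1(p,\gamma) > 0$ if and only if $p > 1/\mathbb{E}(\xi)$, so the task is to evaluate $\mathbb{E}(\xi)$. Differentiating the series term by term on $[0,1)$ and passing to the limit $s \to 1^-$ gives $\mathbb{E}(\xi) = \zeta(\gamma-1)/\zeta(\gamma)$, where $\sum_{j\geq 1} j^{1-\gamma} = \zeta(\gamma-1)$ converges precisely when $\gamma > 2$. Substituting this into the condition $p > 1/\mathbb{E}(\xi)$ yields the stated threshold $p > \zeta(\gamma)/\zeta(\gamma-1)$. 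The one point requiring care is the regime $1 < \gamma \leq 2$: there $\mathbb{E}(\xi) = +\infty$, equivalently $\zeta(\gamma-1) = +\infty$, so that $1/\mathbb{E}(\xi) = 0$ and the threshold $\zeta(\gamma)/\zeta(\gamma-1) = 0$ is crossed by every admissible $p \in (0,1)$; thus propagation occurs with positive probability for all such $p$, consistent with the statement.

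For the second assertion, the fixed-point equation $G_\xi(sp+1-p) = s$ supplied by Theorem \ref{thm:model1} becomes, after clearing the denominator,
$$\frac{Li_\gamma(sp+1-p)}{\zeta(\gamma)} = s \quad \Longleftrightarrow \quad Li_\gamma(sp+1-p) = s\,\zeta(\gamma),$$
and $\theta_1(p,\gamma) = 1-\alpha$ with $\alpha$ the smallest root, exactly as claimed. I expect the only (minor) obstacle to be the bookkeeping around the divergence of $\zeta(\gamma-1)$ for $\gamma \leq 2$, together with the observation that the argument $sp+1-p$ attains the value $1$ at $s=1$; the latter forces one to invoke convergence of the polylogarithm series at the boundary point—legitimate here because $\gamma > 1$—rather than relying on the open domain $|x|<1$ recorded in the definition of $Li_\gamma$.
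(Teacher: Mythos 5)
Your proposal is correct and follows essentially the same route as the paper: compute $G_\xi(s)=Li_\gamma(s)/\zeta(\gamma)$ and $\mathbb{E}(\xi)=\zeta(\gamma-1)/\zeta(\gamma)$, then plug into Theorem~\ref{thm:model1}. Your extra remarks on the infinite-mean regime $1<\gamma\leq 2$ and on convergence of the polylogarithm at the boundary point are careful additions the paper omits, but they do not change the argument.
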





 Two random quantities useful to obtain a taste of how many individuals become a vector in the information process are the tree cascade height and the tree cascade size. The first one is the highest level of the tree reached by the information, while the second is the total number of individuals reached by the information. Our representation through branching processes allow us to obtain rigorous results for these quantities. In what follows we focus our attention in the extinction case, which is the case when the process stops almost surely.

\smallskip
 \begin{thm} \label{thm:model1alcance}
(Tree cascade height) Consider the $(p,\xi)$-MPE model, with $p < 1/\mathbb{E}(\xi)$, and let $\mathcal{T}_h$ be the tree cascade height. Suppose that for all $s \in [0,1]$
\begin{equation}
     G_{\xi}^{\prime \prime}(sp+1-p) + G_{\xi}^\prime(sp+1-p) \geq \frac{\mathbb{E}(\xi^2)}{\mathbb{E}(\xi)}G_{\xi}^\prime(sp+1-p),
\end{equation}

\noindent
and let

\begin{equation}
u := u(p,\xi)= 1 + \frac{[1- G_{\xi}(1-p)][1-p\mathbb{E}(\xi)]}{p\mathbb{E}(\xi) + G_{\xi}(1-p) -1 },\,\,\, \ell := \ell(p,\xi)= 1 + \frac{2\mathbb{E}(\xi)[1 -p \mathbb{E}(\xi)]}{p \mathbb{E}(\xi(\xi-1))}.
\end{equation}

Then\[ \frac{1-[ p\mathbb{E}(\xi)]^{n+1} }{ 1- \ell^{-1}[p\mathbb{E}(\xi)]^{n+1}} \leq \mathbb{P}( \mathcal{T}_h \leq n) \leq  \frac{1-[ p\mathbb{E}(\xi)]^{n+1} }{ 1- u^{-1}[p\mathbb{E}(\xi)]^{n+1}}
\]

and

\[ (u-1) \sum_{n=0}^{\infty}  \left [ \frac{[p\mathbb{E}(\xi)]^{n+1}}{u - [p\mathbb{E}(\xi)]^{n+1}}  \right ] \leq \mathbb{E}(\mathcal{T}_h) \leq (\ell-1) \sum_{n=0}^{\infty}  \left [ \frac{[p\mathbb{E}(\xi)]^{n+1}}{\ell - [p\mathbb{E}(\xi)]^{n+1}}  \right ]. 
\]
\end{thm}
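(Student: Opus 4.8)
The plan is to recognise $(Z_n)_{n\ge 0}$ as a subcritical Galton–Watson process and to control the iterates of its offspring generating function evaluated at $0$.

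First I would record that, as established in the construction of the $(p,\xi)$-MPE model, the number of spreaders is a branching process whose offspring law is $\mathrm{Binomial}(\xi,p)$, so its probability generating function is $f(s)=G_\xi(sp+1-p)$, with mean $m:=f'(1)=p\mathbb{E}(\xi)$; the standing assumption $p<1/\mathbb{E}(\xi)$ is exactly $m<1$. Since the cascade height is $\mathcal{T}_h=\max\{n:Z_n>0\}$, the event $\{\mathcal{T}_h\le n\}$ coincides with $\{Z_{n+1}=0\}$, whence $\mathbb{P}(\mathcal{T}_h\le n)=q_{n+1}$, where $q_n:=\mathbb{P}(Z_n=0)$ satisfies $q_0=0$, the branching iteration $q_{n+1}=f(q_n)$, and $q_n\uparrow 1$.

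Next I would linearize the iteration through the substitution $v_n:=1/(1-q_n)$. A direct manipulation of $q_{n+1}=f(q_n)$ gives the exact recursion
\[ v_{n+1}=\frac{1}{m}\,v_n+r(q_n),\qquad r(s):=\frac{1}{1-f(s)}-\frac{1}{m(1-s)}, \]
with $v_0=1$. Two evaluations of $r$ are the source of the constants $u$ and $\ell$: at the left endpoint $r(0)=\frac{1}{1-G_\xi(1-p)}-\frac{1}{p\mathbb{E}(\xi)}$, and, by a second-order Taylor expansion of $f$ at $1$, $\lim_{s\to1^-}r(s)=\frac{f''(1)}{2m^2}=\frac{\mathbb{E}(\xi(\xi-1))}{2\mathbb{E}(\xi)^2}$. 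Solving the constant-forcing recursion $v_{n+1}=m^{-1}v_n+R$ with $v_0=1$ and re-expressing $q_{n+1}=1-1/v_{n+1}$ yields precisely $\frac{1-m^{n+1}}{1-c^{-1}m^{n+1}}$ with $c=1+\frac{1-m}{Rm}$; substituting $R=r(0)$ returns $c=u$ and $R=\lim_{s\to1^-}r(s)$ returns $c=\ell$.

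The crux is to prove that $r$ stays between these two endpoint values along the whole orbit, i.e. $\lim_{s\to1^-}r(s)\le r(q_n)\le r(0)$ for every $n$; this is exactly where the hypothesis enters. I would show that the stated inequality $G_\xi''(t)+G_\xi'(t)\ge \frac{\mathbb{E}(\xi^2)}{\mathbb{E}(\xi)}G_\xi'(t)$, which after dividing by $G_\xi'(t)$ reads $G_\xi''(t)/G_\xi'(t)\ge G_\xi''(1)/G_\xi'(1)$ for all $t=sp+1-p\in[1-p,1]$, forces $r$ to be monotone (decreasing) on $[0,1]$, so that its extrema over the increasing sequence $q_n\in[0,1)$ are attained at $0$ and in the limit $s\to1^-$. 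Granting this, since $q_n\uparrow 1$ and the map $v\mapsto m^{-1}v+r(q_n)$ has positive slope, a routine induction sandwiches $v_n$ between the solutions of the two constant-forcing recursions obtained by replacing $r(q_n)$ with $r(0)$ and with $\lim_{s\to1^-}r(s)$; passing back through $q_n=1-1/v_n$ gives the two-sided bound on $\mathbb{P}(\mathcal{T}_h\le n)$. The delicate point, and the step I expect to cost the most work, is establishing this monotonicity of $r$ from the moment inequality, since the sign of $r'(s)=\frac{f'(s)}{(1-f(s))^2}-\frac{1}{m(1-s)^2}$ is a ratio that degenerates to $0$ as $s\to1$ and must be analyzed through the log-concavity-type information encoded in the hypothesis.

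Finally, for the expectation I would use $\mathbb{E}(\mathcal{T}_h)=\sum_{n\ge0}\mathbb{P}(\mathcal{T}_h>n)=\sum_{n\ge0}\bigl(1-q_{n+1}\bigr)$. Inserting the two-sided bound on $q_{n+1}$ and simplifying $1-\frac{1-m^{n+1}}{1-c^{-1}m^{n+1}}=(c-1)\frac{m^{n+1}}{c-m^{n+1}}$ with $c\in\{u,\ell\}$ produces the displayed series; termwise convergence is immediate because $m<1$ forces $m^{n+1}\to0$ and the summands decay geometrically. This last step is purely computational once the bound on $\mathbb{P}(\mathcal{T}_h\le n)$ is in hand.
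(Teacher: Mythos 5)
Your strategy is sound and, once unpacked, is mathematically the same route the paper takes, just made self-contained. The paper proves this theorem by computing $g_1(s)=G_\xi(sp+1-p)$, its derivatives and $g_1(0)$, identifying $\mathcal{T}_h+1$ with the extinction time $T$, and invoking Proposition 5.1(d), which rests on Agresti (1974) and Hwang--Wang (1979): one sandwiches $g_1$ between two fractional linear generating functions with the same mean and iterates those. Your substitution $v_n=1/(1-q_n)$ is precisely the mechanism behind that machinery, since a p.g.f.\ $f$ is fractional linear with mean $m$ exactly when $r(s)=\frac{1}{1-f(s)}-\frac{1}{m(1-s)}$ is constant; hence your claim $\lim_{s\to 1^-}r(s)\le r(s)\le r(0)$ is literally the statement $f_\ell\le g_1\le f_u$ for the two best f.l.g.f.\ bounds, and your constant-forcing recursion reproduces Agresti's closed form for their iterates. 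I checked that your two endpoint values of $r$ do yield exactly the stated $u$ and $\ell$ via $c=1+\frac{1-m}{Rm}$, that the inequality directions propagate correctly through the induction, and that the passage to $\mathbb{E}(\mathcal{T}_h)=\sum_{n\ge 0}(1-q_{n+1})$ gives the displayed series. So the skeleton is right and all bookkeeping matches the theorem.

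The genuine gap is the step you yourself flag: you assert, but do not prove, that the hypothesis $G_\xi''(sp+1-p)+G_\xi'(sp+1-p)\ge \frac{\mathbb{E}(\xi^2)}{\mathbb{E}(\xi)}G_\xi'(sp+1-p)$ (equivalently $g_1'(1)g_1''(s)\ge g_1'(s)g_1''(1)$) forces $r$ to be nonincreasing, i.e.\ that $mg_1'(s)(1-s)^2\le (1-g_1(s))^2$ on $[0,1)$. This is not a routine computation --- the obvious integration of $(\log g_1')'\ge g_1''(1)/g_1'(1)$ produces bounds pointing in the wrong direction, and the inequality degenerates to an equality as $s\to 1$ --- and it is exactly the content of the Hwang--Wang theorem that the paper cites rather than reproves. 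Without either carrying out that monotonicity argument or citing the reference, your proof is incomplete at its crux. A smaller caution: the direction of the hypothesis matters; if it were reversed, $r$ would be bounded the other way and the roles of $u$ and $\ell$ would swap, so any argument you supply must track the sign carefully.
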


\bigskip
Theorem \ref{thm:model1alcance} is technical but help us to obtain some conclusions provided the law of $\xi$ is well-known. 

\bigskip
\begin{cor} \label{cor:model1alcance}
Let $k\in\mathbb{N}$ be fixed and consider the $(p,\xi)$-MPE model with $\mathbb{P}(\xi = k) = 1$, and $p < 1/k$. Let $m:=m(k,p)=kp$, and 
\[ u := \frac{m(1-p)^k}{m + (1-p)^k -1} \textrm { , } \ell := \frac{2-m-p}{m-p}.
\]
Then
\[ \frac{1-m^{n+1} }{ 1- \ell^{-1}m^{n+1}} \leq \mathbb{P}( \mathcal{T}_h \leq n) \leq  \frac{1-m^{n+1} }{ 1- u^{-1}m^{n+1}}
\]

\noindent 
and

\[ (u-1) \sum_{n=0}^{\infty}  \left [ \frac{m^{n+1}}{u - m^{n+1}}  \right ] \leq \mathbb{E}(\mathcal{T}_h) \leq (\ell-1) \sum_{n=0}^{\infty}  \left [ \frac{m^{n+1}}{\ell - m^{n+1}}  \right ]. 
\]

\end{cor}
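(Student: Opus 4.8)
The plan is to obtain Corollary~\ref{cor:model1alcance} as a direct specialization of Theorem~\ref{thm:model1alcance} to the degenerate offspring law $\mathbb{P}(\xi = k) = 1$. For this law the generating function is $G_{\xi}(s) = s^{k}$, so that $G_{\xi}^{\prime}(s) = k s^{k-1}$ and $G_{\xi}^{\prime\prime}(s) = k(k-1) s^{k-2}$, while the relevant moments are $\mathbb{E}(\xi) = k$, $\mathbb{E}(\xi^{2}) = k^{2}$ and $\mathbb{E}(\xi(\xi-1)) = k(k-1)$. In particular $p\,\mathbb{E}(\xi) = kp = m$, which already explains why the factors $[p\,\mathbb{E}(\xi)]^{n+1}$ appearing in Theorem~\ref{thm:model1alcance} collapse to $m^{n+1}$ in the statement of the corollary.

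First I would check that the technical hypothesis of Theorem~\ref{thm:model1alcance} is satisfied, so that the theorem may legitimately be invoked. Writing $t := sp + 1 - p$, which ranges over $[1-p,\,1]\subset(0,1]$ as $s$ runs over $[0,1]$, the required inequality $G_{\xi}^{\prime\prime}(t) + G_{\xi}^{\prime}(t) \geq (\mathbb{E}(\xi^{2})/\mathbb{E}(\xi))\,G_{\xi}^{\prime}(t)$ becomes $k(k-1)t^{k-2} + k t^{k-1} \geq k^{2} t^{k-1}$. Dividing through by the positive quantity $k t^{k-2}$ reduces it to $(k-1) + t \geq k t$, that is $(k-1)(1-t) \geq 0$, which holds because $t \leq 1$. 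Hence the hypothesis is met for every $s\in[0,1]$ and the bounds of Theorem~\ref{thm:model1alcance} apply.

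It then remains to evaluate the constants $u$ and $\ell$ for this law. Substituting $G_{\xi}(1-p) = (1-p)^{k}$ and $p\,\mathbb{E}(\xi) = m$ into the definition of $u$ and clearing the common denominator $m + (1-p)^{k} - 1$ gives
\[
u = 1 + \frac{[1-(1-p)^{k}](1-m)}{m + (1-p)^{k} - 1} = \frac{m(1-p)^{k}}{m + (1-p)^{k} - 1},
\]
the numerator simplifying after the terms $m$, $(1-p)^{k}$ and $1$ cancel. Similarly, inserting $\mathbb{E}(\xi) = k$ and $\mathbb{E}(\xi(\xi-1)) = k(k-1)$ and using $p(k-1) = kp - p = m - p$, the definition of $\ell$ yields
\[
\ell = 1 + \frac{2k(1-m)}{p\,k(k-1)} = 1 + \frac{2(1-m)}{m-p} = \frac{2-m-p}{m-p}.
\]
Plugging these closed forms, together with $p\,\mathbb{E}(\xi) = m$, into the probability and expectation bounds of Theorem~\ref{thm:model1alcance} produces exactly the stated inequalities for $\mathbb{P}(\mathcal{T}_{h} \leq n)$ and $\mathbb{E}(\mathcal{T}_{h})$.

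Since the argument is a pure substitution, I do not expect a genuine analytic obstacle; the only step requiring care is confirming the monotonicity-type hypothesis of Theorem~\ref{thm:model1alcance}, which I regard as the crux of the verification, followed by tracking the algebra so that the three-term cancellations in the numerators of $u$ and $\ell$ are carried out correctly (and implicitly keeping $k \geq 2$, so that $\mathbb{E}(\xi(\xi-1)) > 0$ and $\ell$ is well defined).
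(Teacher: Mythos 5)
Your proposal is correct and follows essentially the same route as the paper: verify the second-derivative hypothesis of Theorem~\ref{thm:model1alcance} for $G_{\xi}(s)=s^{k}$ (the paper records the relevant nonnegative quantity as $U(s)=k^{2}(k-1)(sp+1-p)^{k-2}\geq 0$, equivalent to your reduction $(k-1)(1-t)\geq 0$ up to a dropped factor of $1-t$ in their display) and then substitute $p\,\mathbb{E}(\xi)=m$, $G_{\xi}(1-p)=(1-p)^{k}$ and $\mathbb{E}(\xi(\xi-1))=k(k-1)$ into $u$ and $\ell$. Your write-up is in fact more explicit than the paper's, which omits the algebraic simplification of $u$ and $\ell$, and your remark that $k\geq 2$ is implicitly needed for $\ell$ to be well defined is a fair point the paper does not address.
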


\begin{figure}[h!]
    \centering
    \subfigure[]{\includegraphics[scale=0.3]{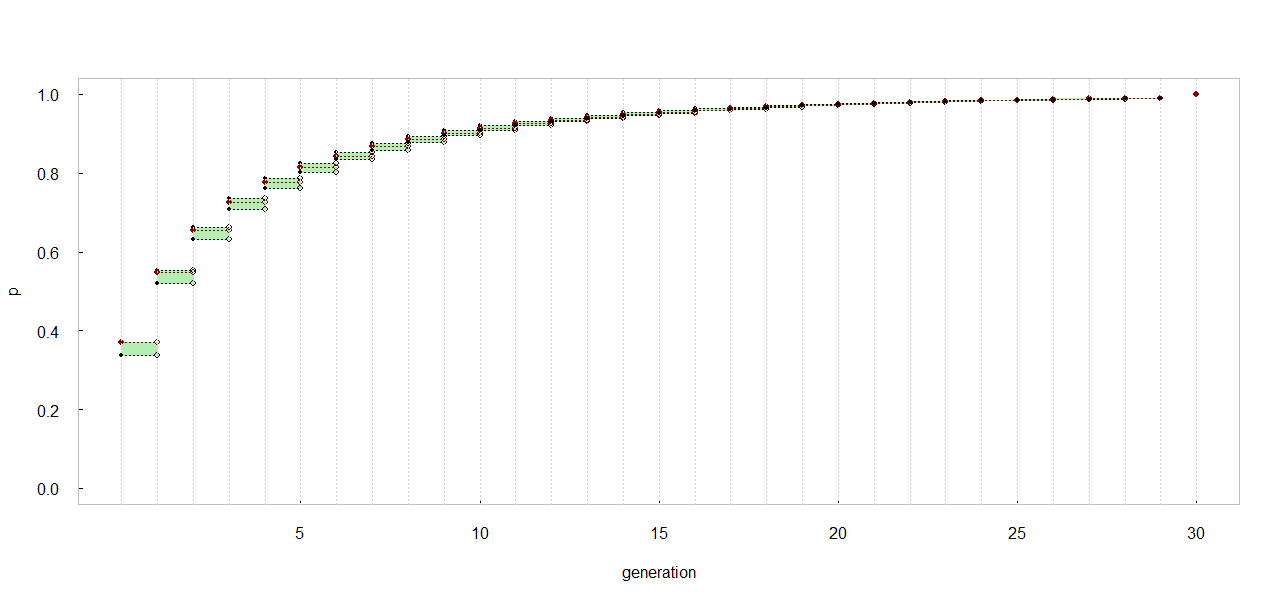}}
        \subfigure[]{\includegraphics[scale=0.4]{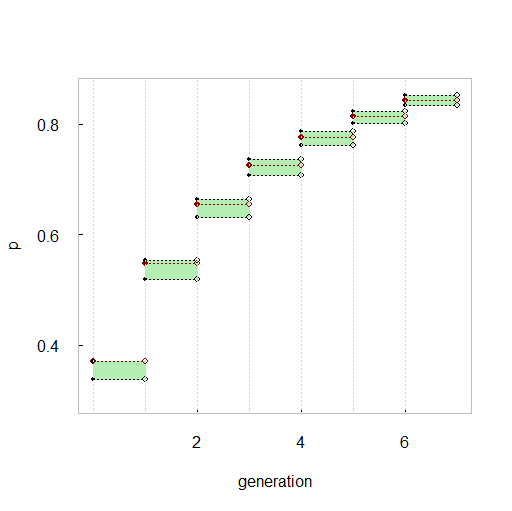}}\subfigure[]{\includegraphics[scale=0.4]{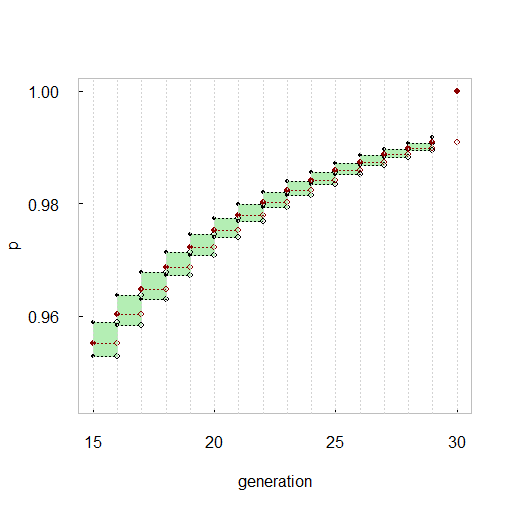}}
    \caption{Tree cascade height distribution, $\mathbb{P}( \mathcal{T}_h \leq n)$, for the $(p,\xi)$-MPE model, for $p=0.5$ and $\mathbb{P}(\xi = 5)=1$. Black dashed lines represent the theoretical bounds obtained in Corollary \ref{cor:model1alcance}(i). Red dashed lines represent the approximation obtained from $8\times 10^6$ simulations of the process.}
    \label{fig:enter-label}
\end{figure}

\bigskip
\begin{cor} \label{cor:model1alcancezeta}
 Consider the $(p,\xi)$-MPE model, with $\xi\sim Zipf(\gamma)$, $\gamma>1$, and $p < \zeta(\gamma)/\zeta(\gamma-1)$. Suppose that for all $s \in [0,1]$
 \[ p \geq \frac{1}{1-s}\left [ 1 - \frac{\zeta(\gamma-1)[Li_{\gamma -2}(sp+1-p) - Li_{\gamma -1}(sp+1-p)]}{Li_{\gamma -1}(sp+1-p)[\zeta(\gamma-2) - \zeta(\gamma-1)]}   \right ],
 \]

 \noindent
 and let
\[ u := 1 +\frac{[\zeta(\gamma) - Li_{\gamma}(1-p)][\zeta(\gamma) -p\zeta(\gamma-1)]}{p\zeta(\gamma-1) +Li_{\gamma}(1-p) -\zeta(\gamma)}, \,\,\, \ell := 1+ \frac{2\zeta(\gamma-1)[\zeta(\gamma)-p\zeta(\gamma-1)]}{\zeta(\gamma)[\zeta(\gamma-2) - \zeta(\gamma-1)]}.
\]

Then\[ \frac{1-\left(\frac{p\zeta(\gamma-1)}{\zeta(\gamma)} \right )^{n+1} }{ 1- \ell^{-1}\left(\frac{p\zeta(\gamma-1)}{\zeta(\gamma)} \right )^{n+1}} \leq \mathbb{P}( \mathcal{T}_{h} \leq n) \leq  \frac{1-\left(\frac{p\zeta(\gamma-1)}{\zeta(\gamma)} \right )^{n+1} }{ 1- u^{-1}\left(\frac{p\zeta(\gamma-1)}{\zeta(\gamma)} \right )^{n+1}}
\]

and
\[ (u-1) \sum_{n=0}^{\infty}  \left [ \frac{\left(\frac{p\zeta(\gamma-1)}{\zeta(\gamma)} \right )^{n+1}}{u - \left(\frac{p\zeta(\gamma-1)}{\zeta(\gamma)} \right )^{n+1}}  \right ] \leq \mathbb{E}(\mathcal{T}_h) \leq (\ell-1) \sum_{n=0}^{\infty}  \left [ \frac{\left(\frac{p\zeta(\gamma-1)}{\zeta(\gamma)} \right )^{n+1}}{\ell - \left(\frac{p\zeta(\gamma-1)}{\zeta(\gamma)} \right )^{n+1}}  \right ] 
.\]
\end{cor}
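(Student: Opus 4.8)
The plan is to obtain this corollary as a direct specialization of Theorem \ref{thm:model1alcance} to the case $\xi \sim Zipf(\gamma)$, so that the whole argument reduces to evaluating the functionals of $\xi$ that enter the general bounds and substituting them. First I would record the probability generating function of the Zipf law, $G_\xi(s) = Li_\gamma(s)/\zeta(\gamma)$, which follows at once from the definition of $\xi$ and of the polylogarithm. Together with the offspring mean $m = p\,\mathbb{E}(\xi) = p\,\zeta(\gamma-1)/\zeta(\gamma)$ already identified in Corollary \ref{cor:model1tfzeta}, this immediately reproduces the common ratio $p\,\zeta(\gamma-1)/\zeta(\gamma)$ that appears in every bound of the statement, and the subcriticality assumption $p < \zeta(\gamma)/\zeta(\gamma-1)$ is exactly the condition $p < 1/\mathbb{E}(\xi)$ demanded by the theorem.

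The next step is to compute the three moments entering Theorem \ref{thm:model1alcance}. From the power series for $\xi$ one gets $\mathbb{E}(\xi) = \zeta(\gamma-1)/\zeta(\gamma)$, $\mathbb{E}(\xi^2) = \zeta(\gamma-2)/\zeta(\gamma)$, and hence $\mathbb{E}(\xi(\xi-1)) = [\zeta(\gamma-2) - \zeta(\gamma-1)]/\zeta(\gamma)$; this requires $\gamma > 3$ for the second moment to be finite, which is implicit in the hypotheses of Theorem \ref{thm:model1alcance}. Substituting these moments, together with $G_\xi(1-p) = Li_\gamma(1-p)/\zeta(\gamma)$, into the definitions of $u$ and $\ell$ and clearing the common factors of $\zeta(\gamma)$ yields the explicit expressions displayed here. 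The bounds on $\mathbb{P}(\mathcal{T}_h \le n)$ and on $\mathbb{E}(\mathcal{T}_h)$ then follow verbatim from the theorem once $p\,\mathbb{E}(\xi)$ is replaced by $p\,\zeta(\gamma-1)/\zeta(\gamma)$ and $u,\ell$ by their Zipf values.

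The one genuinely computational point is to verify that the convexity-type hypothesis of Theorem \ref{thm:model1alcance} is equivalent to the inequality on $p$ stated in the corollary. For this I would use the differentiation identity $Li_a'(x) = Li_{a-1}(x)/x$ to compute, writing $x := sp+1-p$, the expressions $G_\xi'(x) = Li_{\gamma-1}(x)/(x\,\zeta(\gamma))$ and $G_\xi''(x) = [Li_{\gamma-2}(x) - Li_{\gamma-1}(x)]/(x^2\,\zeta(\gamma))$. Inserting these and $\mathbb{E}(\xi^2)/\mathbb{E}(\xi) = \zeta(\gamma-2)/\zeta(\gamma-1)$ into the theorem's hypothesis, multiplying through by the positive quantity $x^2\,\zeta(\gamma)$, and then isolating $x$ (using $1 - x = p(1-s)$ and $\zeta(\gamma-2) - \zeta(\gamma-1) > 0$) rearranges the inequality into precisely the form $p \ge (1-s)^{-1}\bigl[1 - \cdots\bigr]$ of the statement. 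The care needed here is bookkeeping of the factors of $x$ produced by repeated differentiation of the polylogarithm and checking that each quantity divided through is strictly positive on $[0,1)$, so that the direction of the inequality is preserved; the endpoint $s=1$ is handled separately, where $x=1$ gives equality, consistent with interpreting the rewritten bound by continuity. Everything else is routine substitution.
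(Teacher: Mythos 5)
Your proposal is correct and follows the same route as the paper: the paper's proof of this corollary consists of nothing more than identifying $G_{\xi}(s)=Li_{\gamma}(s)/\zeta(\gamma)$ and invoking Theorem \ref{thm:model1alcance}, which is exactly your plan. You in fact supply more detail than the paper does — the moment computations, the polylogarithm differentiation identity, the sign checks when rearranging the convexity hypothesis, and the observation that $\gamma>3$ is implicitly needed for $\zeta(\gamma-2)$ to be finite — all of which is accurate and consistent with the intended argument.
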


\bigskip
As above, the Theory of Branching Processes allow us to obtain some information about the tree cascade size. Indeed, that we call tree cascade size here is the total progeny of the respective branching process. 

\bigskip
 \begin{thm} \label{thm:model1alcance-size}
(Tree cascade size) Consider the $(p,\xi)$-MPE model, with $p < 1/\mathbb{E}(\xi)$, and let $\mathcal{T}_s$ be the tree cascade size. Then
\[ \mathbb{P}(\mathcal{T}_s = j) = \frac{1}{j!}G^{(j-1)}(0), \textrm { where } G(s) = (G_{\xi}(sp+1-p))^j.
\]
Moreover,
\[ \mathbb{E}(\mathcal{T}_s) = \frac{1}{1-p\mathbb{E}(\xi)}.
\] 

\end{thm}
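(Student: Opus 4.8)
The plan is to recognize the tree cascade size $\mathcal{T}_s$ as the total progeny of the Galton--Watson process underlying the $(p,\xi)$-MPE model and then to invoke the Dwass (hitting-time) identity together with the explicit offspring generating function. First I would record that offspring pgf. Conditioning on the number of contacts, the offspring of a single propagator is $\mathrm{Bin}(\xi,p)$, so its pgf is
\[
f(s) := \mathbb{E}\!\left[(1-p+ps)^{\xi}\right] = G_{\xi}(sp+1-p),
\]
which is precisely the function already appearing in Theorem \ref{thm:model1}. Writing $X_1,X_2,\ldots$ for i.i.d.\ copies of the offspring variable and $S_n=X_1+\cdots+X_n$, the sum $S_n$ has pgf $(f(s))^n$.

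Next I would apply the Dwass/Otter identity for the total progeny $\mathcal{T}_s$ (counting the initial individual), namely $\mathbb{P}(\mathcal{T}_s=j)=\tfrac{1}{j}\,\mathbb{P}(S_j=j-1)$ for $j\geq 1$. Since $p\mathbb{E}(\xi)=f'(1)<1$ the process is subcritical, the total progeny is almost surely finite, and these probabilities sum to $1$, so the identity applies without further subtlety. I would then read off $\mathbb{P}(S_j=j-1)$ as the coefficient of $s^{j-1}$ in $(f(s))^j=G(s)$, and convert that coefficient into a derivative via $[s^{j-1}]G(s)=G^{(j-1)}(0)/(j-1)!$. Substituting gives
\[
\mathbb{P}(\mathcal{T}_s=j)=\frac{1}{j}\cdot\frac{1}{(j-1)!}\,G^{(j-1)}(0)=\frac{1}{j!}\,G^{(j-1)}(0),
\]
with $G(s)=(G_{\xi}(sp+1-p))^{j}$, as claimed. (For instance, $j=1$ returns $G(0)=G_\xi(1-p)$, the probability the root has no offspring.)

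For the expectation I would argue directly. Since $Z_0=1$ and each generation has mean size $\mathbb{E}(Z_n)=m^{n}$ with $m=f'(1)=p\mathbb{E}(\xi)$, and since $\mathcal{T}_s=\sum_{n\geq 0}Z_n$, monotone convergence gives
\[
\mathbb{E}(\mathcal{T}_s)=\sum_{n=0}^{\infty}\mathbb{E}(Z_n)=\sum_{n=0}^{\infty}m^{n}=\frac{1}{1-p\mathbb{E}(\xi)},
\]
the geometric series converging precisely because $m<1$. Equivalently, the same value follows by implicit differentiation at $s=1$ of the functional equation $H(s)=s\,f(H(s))$ satisfied by the pgf $H$ of $\mathcal{T}_s$, yielding $H'(1)=1/(1-f'(1))$.

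I do not anticipate a genuine obstacle, as the content is standard branching-process theory; the only points requiring care are bookkeeping and hypotheses. I must fix the convention that $\mathcal{T}_s$ includes the initial individual, so that the index in Dwass's identity matches the statement, and I must use subcriticality $p\mathbb{E}(\xi)<1$ in two roles: to guarantee the almost-sure finiteness that makes the progeny law a proper distribution, and to ensure convergence of the geometric series computing the mean.
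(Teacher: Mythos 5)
Your proposal is correct and follows essentially the same route as the paper: identify $\mathcal{T}_s$ with the total progeny of the branching process whose offspring pgf is $G_{\xi}(sp+1-p)$, apply the Dwass identity $\mathbb{P}(\mathcal{T}_s=j)=\tfrac{1}{j}\mathbb{P}(S_j=j-1)$ with $\mathbb{P}(S_j=j-1)=G^{(j-1)}(0)/(j-1)!$, and obtain the mean $1/(1-p\mathbb{E}(\xi))$ from subcriticality. The only cosmetic difference is that you derive $\mathbb{E}(\mathcal{T}_s)=\sum_n m^n$ explicitly, while the paper simply cites the corresponding statement from its Proposition on branching processes.
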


\bigskip
\begin{cor} \label{cor:model1alcance-size}
Let $k\in\mathbb{N}$ be fixed, and consider the $(p,\xi)$-MPE model with $\mathbb{P}(\xi = k) = 1$, and $p < 1/k$. Let $m:=kp$. Then 
\[ \mathbb{P}(\mathcal{T}_s = j) = \frac{1}{j} \binom{kj}{j-1}p^{j-1}(1-p)^{(k-1)j+1}, j\in\mathbb{N}
\]
and
\[ \mathbb{E}(\mathcal{T}_s) = \frac{1}{1-m}.
\] 
\end{cor}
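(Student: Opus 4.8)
The plan is to derive this corollary as a direct specialization of Theorem \ref{thm:model1alcance-size} to the degenerate offspring law $\mathbb{P}(\xi = k) = 1$. First I would observe that when $\xi$ is constant equal to $k$, its probability generating function is simply $G_{\xi}(s) = s^k$, so the function appearing in the statement of Theorem \ref{thm:model1alcance-size} becomes $G(s) = (G_{\xi}(sp+1-p))^j = (sp+1-p)^{kj}$. This is a power of the generating function of a binomial distribution with parameters $k$ and $p$, which is precisely the offspring law of the underlying branching process identified in Section 3.

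Next I would evaluate the Taylor coefficient $\tfrac{1}{j!}G^{(j-1)}(0)$ explicitly. Writing $G(s) = (1-p+ps)^{kj}$ and differentiating $j-1$ times by the power rule gives
\[
G^{(j-1)}(s) = \frac{(kj)!}{(kj-j+1)!}\, p^{\,j-1}\,(1-p+ps)^{\,kj-j+1},
\]
so that evaluating at $s=0$ and dividing by $j!$ yields
\[
\mathbb{P}(\mathcal{T}_s = j) = \frac{1}{j!}\,\frac{(kj)!}{(kj-j+1)!}\,p^{\,j-1}(1-p)^{(k-1)j+1}.
\]
The remaining step is purely algebraic: recognizing that $\tfrac{(kj)!}{j!\,(kj-j+1)!} = \tfrac{1}{j}\binom{kj}{j-1}$, together with the bookkeeping identity $kj - (j-1) = (k-1)j+1$, converts the factorial prefactor into the claimed binomial coefficient and recovers the stated probability mass function.

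Finally, the formula for the mean follows immediately by substituting $\mathbb{E}(\xi) = k$ into the expression $\mathbb{E}(\mathcal{T}_s) = 1/(1 - p\mathbb{E}(\xi))$ from Theorem \ref{thm:model1alcance-size} and recalling the abbreviation $m = kp$, which gives $\mathbb{E}(\mathcal{T}_s) = 1/(1-m)$; here the hypothesis $p < 1/k$ guarantees $m < 1$, so the mean is finite and positive. I do not expect a genuine obstacle, since all the probabilistic content is already contained in Theorem \ref{thm:model1alcance-size}; the only point requiring care is the exponent bookkeeping and the factorial-to-binomial simplification. As an independent check (or an alternative route avoiding the derivative computation altogether), one may invoke the Dwass hitting-time identity $\mathbb{P}(\mathcal{T}_s = j) = \tfrac{1}{j}\,\mathbb{P}(S_j = j-1)$, where $S_j$ is a sum of $j$ independent $\mathrm{Binomial}(k,p)$ variables and is therefore $\mathrm{Binomial}(kj,p)$; this reproduces the same mass function at once and confirms the consistency of Theorem \ref{thm:model1alcance-size} with the classical total-progeny formula.
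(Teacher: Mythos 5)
Your proposal is correct and follows the same route as the paper, which simply states that the corollary is an immediate application of Theorem \ref{thm:model1alcance-size} with $G_{\xi}(s)=s^k$; you merely carry out explicitly the differentiation and factorial-to-binomial simplification that the paper leaves implicit. The Dwass hitting-time check you add is exactly the identity underlying Proposition \ref{branching}(b), so it is a consistency check rather than a different method.
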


\bigskip
\begin{cor} \label{cor:model1alcancezeta-size}
Consider the $(p,\xi)$-MPE model, with $\xi\sim Zipf(\gamma)$, $\gamma>1$, and $p < \zeta(\gamma)/\zeta(\gamma-1)$. Then
\[ \mathbb{P}(\mathcal{T}_s = j) = \frac{1}{j!}G^{(j-1)}(0), \textrm { where } G(s) = \left ( \frac{Li_{\gamma}(sp+1-p)}{\zeta(\gamma)} \right)^j
\]
and
\[ \mathbb{E}(\mathcal{T}_s) = \frac{\zeta(\gamma)}{\zeta(\gamma) - p\zeta(\gamma-1)}.
\] 
\end{cor}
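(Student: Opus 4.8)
The plan is to obtain the statement directly from Theorem \ref{thm:model1alcance-size} by computing the probability generating function of the Zipf law and its first moment, and then substituting these into the two general formulas provided there.

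First I would compute $G_\xi$. By the definition of the probability generating function together with the Zipf probability mass function,
\[
G_\xi(s) = \sum_{j=1}^\infty s^j\,\mathbb{P}(\xi=j) = \frac{1}{\zeta(\gamma)}\sum_{j=1}^\infty \frac{s^j}{j^\gamma} = \frac{Li_\gamma(s)}{\zeta(\gamma)},
\]
valid for $|s|\le 1$. Since $sp+1-p\in[0,1]$ whenever $s\in[0,1]$ and $p\in(0,1)$, substituting $s\mapsto sp+1-p$ into the expression $G(s)=(G_\xi(sp+1-p))^j$ from Theorem \ref{thm:model1alcance-size} yields exactly $G(s)=\left(Li_\gamma(sp+1-p)/\zeta(\gamma)\right)^j$. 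This reproduces the claimed law of $\mathcal{T}_s$, since the formula $\mathbb{P}(\mathcal{T}_s=j)=\tfrac{1}{j!}G^{(j-1)}(0)$ is inherited verbatim from the theorem.

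For the mean I would differentiate the power series term by term to get $\mathbb{E}(\xi)=G_\xi'(1)$. As $G_\xi'(s)=\frac{1}{\zeta(\gamma)}\sum_{j\ge 1}\frac{s^{j-1}}{j^{\gamma-1}}$, evaluating at $s=1$ gives
\[
\mathbb{E}(\xi) = \frac{1}{\zeta(\gamma)}\sum_{j=1}^\infty \frac{1}{j^{\gamma-1}} = \frac{\zeta(\gamma-1)}{\zeta(\gamma)}.
\]
Inserting this into $\mathbb{E}(\mathcal{T}_s)=1/(1-p\mathbb{E}(\xi))$ and clearing denominators produces $\mathbb{E}(\mathcal{T}_s)=\zeta(\gamma)/(\zeta(\gamma)-p\zeta(\gamma-1))$. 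I would also record that the subcriticality hypothesis $p<\zeta(\gamma)/\zeta(\gamma-1)$ is precisely the condition $p\mathbb{E}(\xi)<1$, so the denominator is strictly positive and the expression is indeed the finite mean guaranteed by Theorem \ref{thm:model1alcance-size}.

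Since the corollary is a pure substitution into an already established general result, the only genuine point requiring care is justifying the term-by-term differentiation of the series defining $G_\xi$ at the boundary value $s=1$. This is where I expect the mild obstacle to lie: I would invoke that the series $\sum_j j^{-(\gamma-1)}$ converges, which focuses attention on the range $\gamma>2$ in which the mean is finite and the subcritical window $p<\zeta(\gamma)/\zeta(\gamma-1)$ is nonempty, and then apply Abel's theorem to pass from the open disk to the endpoint; alternatively one can simply apply the monotone convergence theorem directly to $\mathbb{E}(\xi)=\sum_j j\,\mathbb{P}(\xi=j)$ and avoid the boundary issue altogether. Everything else reduces to the two substitutions above.
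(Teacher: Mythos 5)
Your proposal is correct and follows the same route as the paper, which proves this corollary simply as an ``immediate application'' of Theorem \ref{thm:model1alcance-size} with $G_{\xi}(s)=Li_{\gamma}(s)/\zeta(\gamma)$; your computation of $\mathbb{E}(\xi)=\zeta(\gamma-1)/\zeta(\gamma)$ and the substitution into $1/(1-p\,\mathbb{E}(\xi))$ is exactly what is implicit there. Your additional observation that the hypothesis $p<\zeta(\gamma)/\zeta(\gamma-1)$ is only non-vacuous when $\gamma>2$ (so that $\zeta(\gamma-1)<\infty$) is a worthwhile point of care that the paper leaves unstated.
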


\bigskip
\subsection{A misinformation in an active environment stochastic model}       

Now we assume that if one of an aware individual is exposed to the misinformation then he/she will not spread it and, in addition, such an individual will stop the propagation to other individuals from the individual who contacted him/her. We call it of an active environment. As before, we represent this situation by assuming that the misinformation is propagating as a branching process. 
The process starts from the one individual by assuming that at time zero only he/she is carrying the misinformation, while the other individuals are ignorant. At time $n=1$ we assume that at most a random number $\xi$ of individuals will be exposed to the misinformation. This will happens once at a time, and each exposed individual will choose carrying the information with probability $p$, other case he/she stays neutral in the propagation and will successful stop the propagation. At each discrete time of the process we repeat this procedure. See Figure \ref{fig:modelo2}. Thus defined the sequence of the number of spreaders at the different times of the process form a branching process with offspring distribution given by a $\xi$-truncated geometric distribution of parameter $1-p$, and we call this stochastic process the $(p,\xi)$ misinformation in an active environment stochastic model, or the $(p,\xi)$-MAE model for short. Here, we use the form of geometric distribution used for modeling the number of failures until the first success, where we assume that success (the contacted person is an aware individual) occurs with probability $1-p$ while failure (the contacted person is a non-aware individual) occurs with probability $p$. That is, the offspring distribution of the related branching process is given by a random variable $X=X(p,\xi)$, where $X=Y I_{\{Y\leq \xi\}} + \xi I_{\{Y>\xi\}}$, where the random variable $Y$ is such that
$$\mathbb{P}(Y=i)=(1-p)p^i,\text{ for }i\in\mathbb{N}\cup\{0\},$$
and $I_{A}$ denotes the indicator random variable of event $A$. Now the main results in this section are Theorem \ref{thm:model2}, which shows a necessary and sufficient condition on the law of $\xi$ to guarantee a phase-transition result, and Theorem \ref{thm:model2alcance-size}, which specifies the size of the misinformation outbreak for the $(p,\xi)$-MAE model.


\begin{thm} \label{thm:model2}
Consider the $(p,\xi)$-MAE model. If $\theta_2(p, \xi )$ denotes the probability of misinformation propagation, then $\theta_2(p,\xi)>0$ if, and only if, 
\[ \sum_{x=1}^{\infty}xp^x \mathbb{P}(\xi \geq x) > \frac{1-p\mathbb{E}(\xi p^{\xi})}{1-p}.
\]
Moreover, $\theta_2(p,\xi)=1-\alpha$ where $\alpha$ is the smallest root of 
\[  \sum_{x=0}^{\infty}(sp)^x \mathbb{P}(\xi \geq x) = \frac{s-p G_{\xi}(sp)}{1-p}.
\]
\end{thm}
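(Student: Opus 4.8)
The plan is to observe that the $(p,\xi)$-MAE model is, by construction, a Galton--Watson branching process whose offspring variable is the random variable $X=X(p,\xi)$ introduced just before the statement. Consequently both claims will follow from the classical dichotomy recalled in Section~2: the process goes extinct almost surely if and only if $m:=\mathbb{E}(X)\le 1$, and the extinction probability equals the smallest root in $[0,1]$ of $G_X(s)=s$. The work thus reduces to three self-contained computations, namely (i) the explicit law of $X$, (ii) a closed form for the generating function $G_X$ together with the identification of $G_X(s)=s$ with the displayed equation, and (iii) the value of $m=G_X'(1)$ together with the identification of $m>1$ with the displayed inequality.

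First I would compute the law of $X$. Since $Y$ counts failures before the first success and success has probability $1-p$, one has $\mathbb{P}(Y=i)=(1-p)p^i$ and $\mathbb{P}(Y\ge i)=p^i$. Conditioning on the value of $\xi$ (which is independent of $Y$) and using $X=Y$ on $\{Y\le\xi\}$ and $X=\xi$ on $\{Y>\xi\}$, the event $\{X=i\}$ receives contributions from $\{\xi=i,\ Y\ge i\}$ and from $\{\xi>i,\ Y=i\}$, which yields
\[
\mathbb{P}(X=i)=p^{i}\big[\,\mathbb{P}(\xi=i)+(1-p)\,\mathbb{P}(\xi\ge i+1)\,\big],\qquad i\in\mathbb{N}\cup\{0\}.
\]
Multiplying by $s^{i}$ and summing (all terms nonnegative, so the interchange of summation is justified by Tonelli), the first piece gives $G_{\xi}(sp)$ and the second, after summing the geometric-type series, collapses to a single closed form. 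Introducing the auxiliary sum $S(s):=\sum_{x\ge0}(sp)^x\mathbb{P}(\xi\ge x)$, a reversal of the order of summation gives $S(s)=\big(1-sp\,G_{\xi}(sp)\big)/(1-sp)$, and a short manipulation yields
\[
G_X(s)=\frac{p(1-s)\,G_{\xi}(sp)+(1-p)}{1-sp}.
\]

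Next I would feed these into the branching-process criteria. Clearing denominators in $G_X(s)=s$ produces $p(1-s)G_{\xi}(sp)+(1-p)=s(1-sp)$, and the same identity, rewritten through $S(s)$, is exactly the displayed equation $S(s)=\big(s-pG_{\xi}(sp)\big)/(1-p)$; hence the smallest root $\alpha$ of one is the smallest root of the other, giving $\theta_2(p,\xi)=1-\alpha$. For the threshold I would differentiate, obtaining $m=G_X'(1)=p\big(1-G_{\xi}(p)\big)/(1-p)$; an equivalent form, obtained by summing by parts (equivalently, by differentiating the series termwise and regrouping via $\mathbb{P}(\xi\ge i+1)=\mathbb{P}(\xi\ge i)-\mathbb{P}(\xi=i)$), is $m=p\,\mathbb{E}(\xi p^{\xi})+(1-p)\sum_{x\ge1}xp^{x}\mathbb{P}(\xi\ge x)$. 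The condition $m>1$ then rearranges precisely into the stated inequality $\sum_{x\ge1}xp^{x}\mathbb{P}(\xi\ge x)>\big(1-p\,\mathbb{E}(\xi p^{\xi})\big)/(1-p)$, and $\theta_2(p,\xi)>0\iff m>1$ closes the argument.

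The routine steps are the two summation interchanges and the denominator clearing; the main obstacle is the reconciliation of the two forms of $m$, for which I would apply summation by parts carefully, justifying convergence from $\mathbb{E}(\xi)<\infty$ (which bounds each series since $0<p<1$). One should also verify the standing hypotheses of the classical theorem, namely $\mathbb{P}(X=0)=\mathbb{P}(\xi=0)+(1-p)\mathbb{P}(\xi\ge1)>0$ and $m=p\big(1-G_{\xi}(p)\big)/(1-p)\le p/(1-p)<\infty$, so that the extinction dichotomy and the fixed-point characterization apply verbatim.
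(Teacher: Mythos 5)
Your proposal is correct and follows essentially the same route as the paper: identify the offspring law $\mathbb{P}(X=i)=p^i\bigl[\mathbb{P}(\xi=i)+(1-p)\mathbb{P}(\xi\geq i+1)\bigr]$, compute its probability generating function $(1-p)S(s)+pG_{\xi}(sp)$ and mean $(1-p)\sum_{x\geq1}xp^{x}\mathbb{P}(\xi\geq x)+p\,\mathbb{E}(\xi p^{\xi})$, and invoke the classical Galton--Watson dichotomy and fixed-point characterization (Proposition~\ref{branching}(a)). The only difference is cosmetic: you additionally derive the closed rational forms $G_X(s)=\bigl(p(1-s)G_{\xi}(sp)+1-p\bigr)/(1-sp)$ and $m=p\bigl(1-G_{\xi}(p)\bigr)/(1-p)$, which are correct and consistent with the series expressions the paper works with directly.
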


\bigskip
\begin{cor} \label{cor:model2tfc}
Let $k\in\mathbb{N}$ be fixed and consider the $(p,\xi)$-MAE model with $\mathbb{P}(\xi = k) = 1$. Then $\theta_2(p,k)>0$ if, and only if 
$$k>\left(\frac{\log(2p-1)}{\log p}\right)-1.$$
Moreover, $\theta_2(p,\xi)=1-\alpha$ where $\alpha$ is the smallest root of $p^{k+1}s^{k+1} - p^{k+1}s^{k} -ps^2 +s -1 + p = 0$.
\end{cor}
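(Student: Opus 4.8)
The plan is to specialize Theorem~\ref{thm:model2} to the degenerate law $\mathbb{P}(\xi=k)=1$, for which every ingredient in that theorem becomes explicit: $\mathbb{P}(\xi\ge x)=1$ for $1\le x\le k$ and $0$ for $x>k$, while $G_{\xi}(s)=s^k$ and $\mathbb{E}(\xi p^{\xi})=kp^k$, so the two infinite sums in Theorem~\ref{thm:model2} collapse to finite geometric-type sums. Equivalently---and this is the route I would actually carry out, since it keeps the algebra shortest---I would write down the offspring law directly: the $k$-truncated geometric variable $X$ satisfies $\mathbb{P}(X=i)=(1-p)p^i$ for $0\le i\le k-1$ and $\mathbb{P}(X=k)=\mathbb{P}(Y\ge k)=p^k$, so its generating function is
\[
G_X(s)=(1-p)\sum_{i=0}^{k-1}(ps)^i+(ps)^k=(1-p)\,\frac{1-(ps)^k}{1-ps}+(ps)^k,
\]
valid for $ps\neq 1$. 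Both the threshold and the fixed-point equation then follow from the classical branching-process facts recalled in Section~2, namely that $\theta_2>0$ iff $m:=\mathbb{E}(X)>1$ and that $1-\theta_2$ is the smallest root of $G_X(s)=s$.

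For the threshold I would compute $m=G_X'(1)$. A short differentiation (or summing $\sum_i i\,\mathbb{P}(X=i)$ directly) gives the clean closed form
\[
m=\frac{p-p^{k+1}}{1-p}.
\]
The condition $m>1$ is then equivalent, after multiplying by $1-p>0$, to $p-p^{k+1}>1-p$, i.e. to $p^{k+1}<2p-1$. Taking logarithms and using that $\log p<0$ reverses the inequality, this reads $(k+1)\log p<\log(2p-1)$, that is $k>\log(2p-1)/\log p-1$, which is the stated criterion. I would flag here that the logarithmic form tacitly requires $p>1/2$: when $p\le 1/2$ one has $2p-1\le 0<p^{k+1}$, so $m\le 1$ and the misinformation dies out almost surely for every $k$, consistently with the threshold being unattainable.

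For the extinction probability I would solve $G_X(s)=s$. Substituting the closed form for $G_X$ and clearing the denominator $1-ps$ gives
\[
(1-p)\bigl(1-(ps)^k\bigr)+(ps)^k(1-ps)=s(1-ps).
\]
Expanding both sides, the term $(ps)^k$ acquires coefficient $-(1-p)+1=p$, so the left-hand side becomes $(1-p)+p^{k+1}s^k-p^{k+1}s^{k+1}$; moving everything to one side yields exactly
\[
p^{k+1}s^{k+1}-p^{k+1}s^k-ps^2+s-1+p=0,
\]
and $\alpha$ is its smallest root, as claimed. Specializing the fixed-point equation of Theorem~\ref{thm:model2} instead produces the same polynomial after the same cancellation, which serves as a consistency check.

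The computations are all routine; the only points that need care are the bookkeeping in the two cancellations---the coefficient of $(ps)^k$ collapsing to $p$, and the disappearance of the $p^{k+2}s^{k+1}$ term if one passes through Theorem~\ref{thm:model2}---together with the sign handling when dividing by $\log p<0$. The main conceptual obstacle, such as it is, is simply recognizing that the threshold inequality $p^{k+1}<2p-1$ is vacuous for $p\le 1/2$, so that the logarithmic statement carries the implicit hypothesis $p>1/2$.
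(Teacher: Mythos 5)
Your proposal is correct and follows essentially the same route as the paper, which likewise just specializes Theorem~\ref{thm:model2} to $\mathbb{P}(\xi=k)=1$ (the truncated-geometric offspring generating function you write down is exactly the $g_2$ appearing in that theorem's proof, and your two derivations of the polynomial agree with the paper's). In fact you carry the algebra further than the printed proof, which stops at the two specialized identities, and your remark that the logarithmic form of the threshold tacitly requires $p>1/2$ (with extinction almost surely for all $k$ when $p\le 1/2$) is a correct caveat that the paper only addresses later, in Section~4.
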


\bigskip
\begin{cor} \label{cor:model2tfzeta}
Consider the $(p,\xi)$-MAE model, with $\xi\sim Zipf(\gamma)$, $\gamma>1$. Let 
\begin{equation}
h(s,p, \gamma) := \sum_{x=1}^{\infty}(sp)^x\zeta(\gamma,x),
\end{equation}
where $\zeta(\gamma,x) = \displaystyle \sum_{i=0}^{\infty} \frac{1}{(i+x)^{\gamma}}$ is the Hurwitz zeta Function.
Then $\theta_2(p,\gamma)>0$ if, and only if
\[ \dfrac { \partial h(1,p,\gamma)}{\partial s} > \frac{\zeta(\gamma) -pLi_{\gamma -1}(p)}{1-p}.
\]

Moreover, $\theta_2(p,\gamma)=1-\alpha$ where $\alpha$ is the smallest root of
\[ h(s,p, \gamma)  = \frac{s\zeta(\gamma) -p Li_{\gamma}(sp)}{1-p},
\]
where $ Li_{a}(x), |x| < 1$, is the poly-logarithm function also known as the Jonquière's function.
\end{cor}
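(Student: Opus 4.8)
The plan is to read Corollary~\ref{cor:model2tfzeta} off Theorem~\ref{thm:model2}: the corollary requires no new probabilistic argument, only the evaluation, for the law $\xi\sim Zipf(\gamma)$, of the two functionals that appear in the theorem. First I would record the three Zipf ingredients. From $\mathbb{P}(\xi=j)=j^{-\gamma}/\zeta(\gamma)$ the probability generating function is $G_\xi(t)=\zeta(\gamma)^{-1}\sum_{j\ge1}t^j j^{-\gamma}=Li_\gamma(t)/\zeta(\gamma)$ for $|t|\le1$; the tail is
\[
\mathbb{P}(\xi\ge x)=\frac{1}{\zeta(\gamma)}\sum_{j\ge x}\frac{1}{j^\gamma}=\frac{\zeta(\gamma,x)}{\zeta(\gamma)},\qquad x\ge 1,
\]
which is immediate once one writes $\zeta(\gamma,x)=\sum_{i\ge0}(i+x)^{-\gamma}=\sum_{j\ge x}j^{-\gamma}$; and the weighted moment is $\mathbb{E}(\xi p^{\xi})=\zeta(\gamma)^{-1}\sum_{j\ge1}j\,p^{j}j^{-\gamma}=Li_{\gamma-1}(p)/\zeta(\gamma)$.

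Next I would identify the auxiliary function $h$. By definition $h(s,p,\gamma)=\sum_{x\ge1}(sp)^{x}\zeta(\gamma,x)=\zeta(\gamma)\sum_{x\ge1}(sp)^{x}\mathbb{P}(\xi\ge x)$, so $h$ is exactly the tail-generating series occurring in the theorem, up to the factor $\zeta(\gamma)$. The series in $s$ has radius of convergence $1/p>1$, so term-by-term differentiation is legitimate at $s=1$ and yields $\partial_s h(1,p,\gamma)=\sum_{x\ge1}x\,p^{x}\zeta(\gamma,x)=\zeta(\gamma)\sum_{x\ge1}x\,p^{x}\mathbb{P}(\xi\ge x)$. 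Substituting these into the phase-transition inequality of Theorem~\ref{thm:model2} turns it into $\zeta(\gamma)^{-1}\partial_s h(1,p,\gamma)>\bigl(1-p\,Li_{\gamma-1}(p)/\zeta(\gamma)\bigr)/(1-p)$; multiplying through by $\zeta(\gamma)>0$ clears every normalization and produces precisely $\partial_s h(1,p,\gamma)>\bigl(\zeta(\gamma)-pLi_{\gamma-1}(p)\bigr)/(1-p)$, the stated survival criterion. (As a sanity check, this is the condition $\mathbb{E}(X)>1$ in disguise, and for $\mathbb{P}(\xi=k)=1$ it collapses to the threshold of Corollary~\ref{cor:model2tfc}.)

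For the fixed-point equation I would substitute the same quantities into $\sum_{x\ge0}(sp)^{x}\mathbb{P}(\xi\ge x)=\bigl(s-pG_\xi(sp)\bigr)/(1-p)$. The right-hand side becomes $\bigl(s\zeta(\gamma)-pLi_\gamma(sp)\bigr)/\bigl(\zeta(\gamma)(1-p)\bigr)$, while on the left I would split off the $x=0$ term, equal to $\mathbb{P}(\xi\ge0)=1$, leaving the remainder $\zeta(\gamma)^{-1}\sum_{x\ge1}(sp)^{x}\zeta(\gamma,x)=h(s,p,\gamma)/\zeta(\gamma)$; multiplying by $\zeta(\gamma)$ then gives the relation between $h(s,p,\gamma)$ and $\bigl(s\zeta(\gamma)-pLi_\gamma(sp)\bigr)/(1-p)$ recorded in the corollary. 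The identification of $\alpha$ as the \emph{smallest} root and the equivalence ``positive propagation probability $\iff$ threshold inequality'' are inherited verbatim from Theorem~\ref{thm:model2}, since the equation above is nothing but the extinction characterization $G_X(s)=s$ for the offspring law $X=\min(Y,\xi)$ rewritten; no new convexity or monotonicity argument is needed.

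The steps are individually routine, and the one place that demands care—the main obstacle—is the evaluation of the double series $h(s,p,\gamma)=\sum_{x\ge1}(sp)^{x}\sum_{i\ge0}(i+x)^{-\gamma}$ together with the bookkeeping of the $x=0$ term. For the double sum I would invoke Tonelli (all summands are nonnegative for $s\in[0,1]$, $p\in(0,1)$, $\gamma>1$) to reorder it as $\sum_{j\ge1}j^{-\gamma}\sum_{x=1}^{j}(sp)^{x}$, which simultaneously confirms that $h$ is finite on $[0,1]$ and, via the finite geometric sum, gives the closed form $h(s,p,\gamma)=sp\bigl(\zeta(\gamma)-Li_\gamma(sp)\bigr)/(1-sp)$. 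Tracking the constant $\mathbb{P}(\xi\ge0)=1$ through the multiplication by $\zeta(\gamma)$ is what determines the additive normalization on the left-hand side of the fixed-point equation, so this term must be retained explicitly and not discarded with the higher-order ones.
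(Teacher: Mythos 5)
Your proposal is correct and takes essentially the same route as the paper: the corollary is proved by substituting the Zipf ingredients $\mathbb{P}(\xi\ge x)=\zeta(\gamma,x)/\zeta(\gamma)$, $G_\xi(t)=Li_{\gamma}(t)/\zeta(\gamma)$ and $\mathbb{E}(\xi p^{\xi})=Li_{\gamma-1}(p)/\zeta(\gamma)$ into Theorem~\ref{thm:model2} and clearing the factor $\zeta(\gamma)$, exactly as you do. One remark: your insistence on retaining the $x=0$ term is actually more careful than the paper's own computation, which identifies $\zeta(\gamma)^{-1}\sum_{x\ge1}(sp)^x\zeta(\gamma,x)$ with $\sum_{x\ge0}(sp)^x\mathbb{P}(\xi\ge x)$ and so silently drops the constant $\mathbb{P}(\xi\ge 0)=1$; carried through, your bookkeeping yields the fixed-point equation $\zeta(\gamma)+h(s,p,\gamma)=\bigl(s\zeta(\gamma)-pLi_{\gamma}(sp)\bigr)/(1-p)$, so the equation as printed in the corollary is off by the additive term $\zeta(\gamma)$ on the left-hand side (the threshold inequality is unaffected, since that sum starts at $x=1$).
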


In what follows we obtain similar results than the ones obtained in the previous section related to the tree cascade height and the  tree cascade size. The reader will see that this case is a bit more technical than the previous one. Section 4 will be useful to compare both scenarios.    

\bigskip
\begin{thm} \label{thm:model2alcance}
(Tree cascade height) Consider the $(p,\xi)$-MAE model, and let $\mathcal{T}_h$ be the tree cascade height. Let
\[ \mu_G(s) = \frac{(1-p)}{s}\sum_{x=1}^{\infty}x(ps)^x \mathbb{P}(\xi \geq x) + \frac{p}{s}\mathbb{E}(\xi (ps)^{\xi}) , \] 

\[B_G(s) = \frac{(1-p)}{s^2}\sum_{x=2}^{\infty}x(x-1)(ps)^x \mathbb{P}(\xi \geq x) + \frac{p}{s^2}\mathbb{E}(\xi(\xi-1)(ps)^{\xi}).
\]
and write $\mu_G := \mu_G(1)$ and $B_G := B_G(1).$ Suppose that $\mu_GB_G(s) - \mu_G(s)B_G \geq 0,  s \in [0,1]$, and let 
\[ u := \frac{[1-p\mathbb{P}(\xi \neq 0)]\mu_G}{\mu_G - p(\mathbb{P}(\xi \neq 0))} \textrm { and } \ell := \frac{2\mu_G(1-\mu_G) +B_G}{B_G}. 
\] If \[ \sum_{x=1}^{\infty}xp^x \mathbb{P}(\xi \geq x) < \frac{1-p\mathbb{E}(\xi p^{\xi})}{1-p}
\]
then
\[ \frac{1-\mu_G^{n+1} }{ 1- \ell^{-1}\mu_G^{n+1}} \leq \mathbb{P}( \mathcal{T}_h \leq n) \leq  \frac{1-\mu_G^{n+1} }{ 1- u^{-1}\mu_G^{n+1}}
\]
and
\[ (u-1) \sum_{n=0}^{\infty}  \left [ \frac{\mu_G^{n+1}}{u -\mu_G^{n+1}}  \right ] \leq \mathbb{E}(\mathcal{T}_h) \leq (\ell-1) \sum_{n=0}^{\infty}  \left [ \frac{\mu_G^{n+1}}{\ell - \mu_G^{n+1}}  \right ]. 
\]
\end{thm}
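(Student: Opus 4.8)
The plan is to reduce the statement to the extinction-by-generation probabilities of the underlying branching process and then run a linear-fractional comparison for subcritical chains, exactly as for Theorem~\ref{thm:model1alcance}. Writing $G$ for the offspring generating function of the $(p,\xi)$-MAE model and $G_n$ for its $n$-fold iterate, the tree cascade height obeys $\{\mathcal{T}_h\le n\}=\{Z_{n+1}=0\}$, so that $\mathbb{P}(\mathcal{T}_h\le n)=G_{n+1}(0)=:q_{n+1}$. First I would record the explicit offspring law: for fixed $\xi=k$ one gets $\mathbb{E}(s^X\mid \xi=k)=(1-p)\sum_{i=0}^{k}(ps)^i+p(ps)^k$, hence $G(0)=(1-p)+p\,\mathbb{P}(\xi=0)$, and, after differentiating and averaging over $\xi$, the key identities $\mu_G(s)=G'(s)$ and $B_G(s)=G''(s)$, with $\mu_G=G'(1)$ and $B_G=G''(1)$. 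With these identities the displayed subcriticality hypothesis $\sum_x x p^x\mathbb{P}(\xi\ge x)<(1-p\,\mathbb{E}(\xi p^{\xi}))/(1-p)$ is precisely $\mu_G<1$, so $q_n\uparrow 1$.

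For the comparison, set $a_n=\mathbb{P}(Z_n>0)=1-q_n$ and $w_n=1/a_n$, with $w_0=1$. The recursion $q_{n+1}=G(q_n)$ becomes the affine-plus-remainder recursion $w_{n+1}=w_n/\mu_G+R(q_n)$, where $R(s):=\frac{1}{1-G(s)}-\frac{1}{\mu_G(1-s)}$. For a genuinely linear-fractional $G$ the remainder $R$ is constant and the affine recursion solves explicitly, producing iterates of the form $\frac{1-\mu_G^{\,n}}{1-c^{-1}\mu_G^{\,n}}$ with $c=1+(1-\mu_G)/(\mu_G R)$. The strategy is therefore to sandwich $R(q_n)$ between its two endpoint values along the orbit $q_n\in[0,1)$ and to compare $w_n$ with the two resulting exactly solvable recursions.

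Next I would evaluate the endpoints. From $G(0)=(1-p)+p\,\mathbb{P}(\xi=0)$ one obtains $R(0)=\frac{1}{p\,\mathbb{P}(\xi\ne 0)}-\frac{1}{\mu_G}$, while a second-order Taylor expansion of $G$ at $s=1$ gives $\lim_{s\to1}R(s)=B_G/(2\mu_G^2)$. Substituting $R(0)$ into $c=1+(1-\mu_G)/(\mu_G R)$ returns exactly $u$, and substituting $B_G/(2\mu_G^2)$ returns exactly $\ell$; a direct check even shows the upper bound is an equality at $n=0$, since $\frac{1-\mu_G}{1-u^{-1}\mu_G}=1-p\,\mathbb{P}(\xi\ne0)=G(0)=q_1$. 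The hypothesis $\mu_GB_G(s)-\mu_G(s)B_G\ge 0$, which by the identities above is just $G''(s)/G'(s)\ge G''(1)/G'(1)$ on $[0,1]$, is then used to show that $R$ is monotone on $[0,1]$, so that $R(q_n)$ lies between $R(0)$ and $\lim_{s\to1}R(s)$ for every $n$. Since $c\mapsto\frac{1-\mu_G^{\,n+1}}{1-c^{-1}\mu_G^{\,n+1}}$ is decreasing and $c$ is decreasing in $R$, bounding $R(q_n)$ by $R(0)$ produces the upper bound with constant $u$ and by $\lim_{s\to1}R(s)$ the lower bound with constant $\ell$; each comparison is made rigorous by a one-line induction on the recursion for $w_n$. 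Finally the expectation bounds follow from $\mathbb{E}(\mathcal{T}_h)=\sum_{n\ge0}\mathbb{P}(\mathcal{T}_h>n)=\sum_{n\ge0}(1-q_{n+1})$ together with the identity $1-\frac{1-\mu_G^{\,n+1}}{1-c^{-1}\mu_G^{\,n+1}}=\frac{(c-1)\mu_G^{\,n+1}}{c-\mu_G^{\,n+1}}$ applied at $c=u$ and $c=\ell$.

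The hard part will be establishing the monotonicity of $R$ on all of $[0,1]$: its derivative mixes $G'$, $G''$ and the singular term $1/(\mu_G(1-s)^2)$, so the sign is not evident termwise. The identities $\mu_G(s)=G'(s)$ and $B_G(s)=G''(s)$ are exactly what let me rewrite the stated hypothesis as the clean monotonicity $G''(s)/G'(s)\ge G''(1)/G'(1)$ and thereby control $\operatorname{sign}R'(s)$ throughout $[0,1]$; this is the step that fixes the orientation of the two bounds. A secondary, purely bookkeeping point is that the orbit stays in $[0,1)$, which is immediate from subcriticality together with $q_0=0$.
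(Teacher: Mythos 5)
Your reduction is the same as the paper's: associate the $(p,\xi)$-MAE model with a branching process whose offspring p.g.f.\ is $g_2(s)=(1-p)\sum_{x\ge 0}(sp)^x\mathbb{P}(\xi\ge x)+pG_{\xi}(sp)$, check that $\mu_G(s)=g_2'(s)$, $B_G(s)=g_2''(s)$, $g_2(0)=1-p\mathbb{P}(\xi\neq 0)$, that the displayed hypothesis is exactly $\mu_G=g_2'(1)<1$, and that $\mathcal{T}_h+1$ is the extinction time, so $\mathbb{P}(\mathcal{T}_h\le n)=g_2^{n+1}(0)$. Where you diverge is in how the extinction-time bounds are obtained: the paper simply invokes its Proposition~5.1(d), whose proof consists of citing Hwang--Wang for the fractional-linear sandwich $f_l\le g\le f_u$ and Agresti for the explicit iterates of a fractional-linear p.g.f., whereas you re-derive the whole comparison from the recursion $w_{n+1}=w_n/\mu_G+R(q_n)$ with $w_n=(1-q_n)^{-1}$ and $R(s)=\tfrac{1}{1-G(s)}-\tfrac{1}{\mu_G(1-s)}$. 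Your endpoint computations are right: $R(0)$ gives $c=u$, $R(1^-)=B_G/(2\mu_G^2)$ gives $c=\ell$, the map $c\mapsto\frac{1-\mu_G^{n+1}}{1-c^{-1}\mu_G^{n+1}}$ is decreasing, and the tightness check $\frac{1-\mu_G}{1-u^{-1}\mu_G}=g_2(0)$ is correct; the expectation bounds then follow by summing tails as you say. The one substantive step you leave open is precisely the content of the cited Hwang--Wang theorem: that the hypothesis $\mu_G B_G(s)\ge \mu_G(s)B_G$ on $[0,1]$ forces $R(1^-)\le R(s)\le R(0)$ along the orbit. You flag this as ``the hard part'' but do not prove it, and it is not a routine computation ($R'(s)\le 0$ amounts to $\mu_G\,G'(s)(1-s)^2\le (1-G(s))^2$, which degenerates to an equality at $s=1$ and needs a genuine argument from the ratio condition $G''(s)/G'(s)\ge G''(1)/G'(1)$). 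So your route is a legitimate, more self-contained alternative that makes the mechanism behind the constants $u$ and $\ell$ transparent, but to stand on its own it must actually establish that monotonicity (or else, as the paper does, delegate it to Hwang--Wang and Agresti); as written the proposal identifies the key lemma without supplying it.
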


\bigskip
\begin{cor}\label{cor:model2alcancek}
Let $k\in\mathbb{N}$ be fixed and consider the $(p,\xi)$-MAE model with $\mathbb{P}(\xi = k) = 1$. 
Let
\[ \mu_G(s) = \frac{1}{s} \left [ (1-p) \sum_{x=1}^{k}x(ps)^x +kp(sp)^k \right]
\]
and
\[  B_G(s) =  \frac{1}{s^2} \left [ (1-p) \sum_{x=2}^{k}x(x-1)(ps)^x +k(k-1)p(sp)^k \right]. 
\]
Write $\mu_G := \mu_G(1)$ and $B_G := B_G(1).$ Suppose that $\mu_GB_G(s) - \mu_G(s)B_G \geq 0,  s \in [0;1]$, and let 
\[ u := \frac{(1-p)(1-p^k)}{p(1-p^{k-1})}  \textrm { and } \ell := \frac{3p^{k+1} -p^{2k+1} +[k(p-1)-p]p^k -p+1}{-p^{k+1} +k(p-1)p^k +p} . 
\]
If  $p^{k+1} - 2p+1 > 0$ then
\[ \frac{1-\left(\frac{p(1-p^k)}{1-p} \right)^{n+1} }{ 1- \ell^{-1}\left(\frac{p(1-p^k)}{1-p} \right)^{n+1}} \leq \mathbb{P}( \mathcal{T}_h \leq n) \leq  \frac{1-\left(\frac{p(1-p^k)}{1-p} \right)^{n+1} }{ 1- u^{-1}\left(\frac{p(1-p^k)}{1-p} \right)^{n+1}}
\]
and
\[ (u-1) \sum_{n=0}^{\infty}  \left [ \frac{\left(\frac{p(1-p^k)}{1-p} \right)^{n+1}}{u - \left(\frac{p(1-p^k)}{1-p} \right)^{n+1}}  \right ] \leq \mathbb{E}(\mathcal{T}_h) \leq (\ell-1) \sum_{n=0}^{\infty}  \left [ \frac{\left(\frac{p(1-p^k)}{1-p} \right)^{n+1}}{\ell - \left(\frac{p(1-p^k)}{1-p} \right)^{n+1}}  \right ] .
\]
\end{cor}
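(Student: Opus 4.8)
The plan is to derive the statement as a direct specialization of Theorem \ref{thm:model2alcance} to the degenerate law $\mathbb{P}(\xi=k)=1$, so that the argument reduces almost entirely to evaluating the quantities appearing there in closed form. I would first record the elementary consequences of $\mathbb{P}(\xi=k)=1$: namely $\mathbb{P}(\xi\ge x)=1$ for $1\le x\le k$ and $\mathbb{P}(\xi\ge x)=0$ for $x>k$, the identity $\mathbb{E}(\xi\,g(\xi))=k\,g(k)$ for any function $g$, and $\mathbb{P}(\xi\neq 0)=1$ since $k\ge1$. Substituting these into the definitions of $\mu_G(s)$ and $B_G(s)$ given in Theorem \ref{thm:model2alcance} collapses the infinite series into the finite sums displayed in the statement, which identifies $\mu_G(s)$ and $B_G(s)$ at once.

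The second step is to evaluate at $s=1$. Here it is convenient to note that the offspring law is that of the truncated geometric variable $\min(Y,k)$, so that the mean is cleanly obtained from the tail-sum formula
\[
\mu_G=\mathbb{E}\big(\min(Y,k)\big)=\sum_{x=1}^{k}\mathbb{P}(Y\ge x)=\sum_{x=1}^{k}p^{x}=\frac{p(1-p^{k})}{1-p},
\]
which is precisely the base of the geometric factors in every displayed bound. In the same spirit,
\[
B_G=\mathbb{E}\big(\min(Y,k)(\min(Y,k)-1)\big)=2\sum_{x=2}^{k}(x-1)p^{x},
\]
which I would put in closed form via the standard evaluation $\sum_{x=1}^{m}xp^{x}=\frac{p\,(1-(m+1)p^{m}+mp^{m+1})}{(1-p)^{2}}$. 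Feeding $\mu_G$ and $\mathbb{P}(\xi\neq0)=1$ into $u=\frac{[1-p\mathbb{P}(\xi\neq0)]\mu_G}{\mu_G-p\mathbb{P}(\xi\neq0)}=\frac{(1-p)\mu_G}{\mu_G-p}$ and cancelling the common factor $p$ yields $u=\frac{(1-p)(1-p^{k})}{p(1-p^{k-1})}$; the analogous substitution into $\ell=1+\frac{2\mu_G(1-\mu_G)}{B_G}$ should, after simplification, reproduce the rational expression for $\ell$ displayed in the statement.

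Regarding the hypotheses, the inequality $\mu_GB_G(s)-\mu_G(s)B_G\ge0$ is assumed in the corollary and is exactly the hypothesis demanded by Theorem \ref{thm:model2alcance}, so nothing need be verified there. The extinction requirement of that theorem, $\sum_{x\ge1}xp^{x}\mathbb{P}(\xi\ge x)<\frac{1-p\,\mathbb{E}(\xi p^{\xi})}{1-p}$, is nothing but the statement $\mu_G<1$; with $\mu_G=\frac{p(1-p^{k})}{1-p}$ this reads $p(1-p^{k})<1-p$, i.e. $p^{k+1}-2p+1>0$, which is precisely the condition imposed in the corollary. Once $\mu_G$, $u$ and $\ell$ have been identified, both chains of inequalities follow by inserting $\mu_G=\frac{p(1-p^{k})}{1-p}$ into the corresponding bounds of Theorem \ref{thm:model2alcance}.

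I expect the only genuine difficulty to be bookkeeping in the final algebra. The closed form of $B_G$ demands the second-factorial-moment sum $\sum_{x=2}^{k}x(x-1)p^{x}$ (equivalently a second derivative of the truncated geometric series), and then clearing denominators to force $\ell=1+\frac{2\mu_G(1-\mu_G)}{B_G}$ into a single rational expression is where miscollected powers of $p$ or sign slips are most likely. I would carry out this reduction carefully and cross-check it against the smallest cases $k=2$ and $k=3$, where $\mu_G$, $B_G$, $u$ and $\ell$ can all be computed by hand and compared term by term with the stated formulas.
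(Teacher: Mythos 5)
Your proposal is correct and follows essentially the same route as the paper: specialize Theorem \ref{thm:model2alcance} using $\mathbb{P}(\xi\ge x)=1$ for $x\le k$ and $0$ otherwise together with $\mathbb{E}(\xi g(\xi))=k\,g(k)$, identify $\mu_G=p(1-p^k)/(1-p)$ and $\mathbb{P}(\xi\neq 0)=1$, observe that $p^{k+1}-2p+1>0$ is exactly the subcriticality condition $\mu_G<1$, and read off $u$ and $\ell$. One remark: the cross-check at $k=2$ that you propose is worth actually carrying out, since at $k=2$, $p=1/2$ one finds $1+2\mu_G(1-\mu_G)/B_G=1.75$ while the displayed $\ell$ evaluates to $3.75$; the numerator of the corollary's $\ell$ exceeds the one produced by your (correct) computation by $p^k$, i.e. it should read $3p^{k+1}-p^{2k+1}+[(k-1)p-(k+1)]p^k-p+1$, which is a typo in the stated formula rather than a gap in your argument.
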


\bigskip
\begin{cor} \label{cor:model2alcancez}
Consider the $(p,\xi)$-MAE model, with $\xi\sim Zipf(\gamma)$, $\gamma>1$. Let
\[ \mu_G(s) = \frac{1}{s \zeta(\gamma)}\left [(1-p)\dfrac { \partial h(s,p, \gamma)}{\partial s} + p Li_{\gamma -1}(ps) \right ]
\]
\[  B_G(s) =  \frac{1}{s^2\zeta(\gamma)} \left [(1-p)\dfrac { \partial^2 h(s,p, \gamma)}{\partial s^2} + p [Li_{\gamma -2}(ps) - Li_{\gamma -1}(ps)] \right ], 
\]
\[ \mu_G =  \frac{1}{\zeta(\gamma)}\left [(1-p)\dfrac { \partial h(1,p, \gamma)}{\partial s} + p Li_{\gamma -1}(p) \right ] 
\]
and
\[ B_G =  \frac{1}{\zeta(\gamma)} \left [(1-p)\dfrac { \partial^2 h(1,p, \gamma)}{\partial s^2} + p [Li_{\gamma -2}(p) - Li_{\gamma -1}(p)] \right ]. 
\] 

Suppose that $\mu_GB_G(s) - \mu_G(s)B_G \geq 0,  s \in [0;1]$, and let
\[ u = \frac{1+p}{\mu_G} \textrm { and } \ell = \frac{2\mu_G(1-\mu_G) +B_G}{B_G}. 
\]
If
\[ \dfrac { \partial h(1,p,\gamma)}{\partial s} < \frac{\zeta(\gamma) -pLi_{\gamma -1}(p)}{1-p},
\]
then
\[ \frac{1-\mu_G^{n+1} }{ 1- \ell^{-1}\mu_G^{n+1}} \leq \mathbb{P}( \mathcal{T}_h \leq n) \leq  \frac{1-\mu_G^{n+1} }{ 1- u^{-1}\mu_G^{n+1}}
\]
and
\[ (u-1) \sum_{n=0}^{\infty}  \left [ \frac{\mu_G^{n+1}}{u - \mu_G^{n+1}}  \right ] \leq \mathbb{E}(\mathcal{T}_h) \leq (\ell-1) \sum_{n=0}^{\infty}  \left [ \frac{\mu_G^{n+1}}{\ell - \mu_G^{n+1}}  \right ]. 
\]

\end{cor}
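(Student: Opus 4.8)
The plan is to read off the statement as a direct specialization of Theorem~\ref{thm:model2alcance}: I would evaluate every Zipf-specific ingredient that enters $\mu_G(s)$, $B_G(s)$, the constants $u,\ell$, and the subcriticality hypothesis, and then invoke the general bounds without change. The first step is to record the two elementary identities for $\xi\sim Zipf(\gamma)$. For $x\ge 1$,
\[
\mathbb{P}(\xi\ge x)=\frac{1}{\zeta(\gamma)}\sum_{j\ge x}j^{-\gamma}=\frac{1}{\zeta(\gamma)}\sum_{i\ge 0}(i+x)^{-\gamma}=\frac{\zeta(\gamma,x)}{\zeta(\gamma)},
\]
and $G_{\xi}(sp)=Li_{\gamma}(sp)/\zeta(\gamma)$, both immediate from the definitions of the Hurwitz zeta and polylogarithm functions.

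Next I would evaluate the four building blocks appearing in Theorem~\ref{thm:model2alcance}. Differentiating $h(s,p,\gamma)=\sum_{x\ge 1}(sp)^x\zeta(\gamma,x)$ term by term, which is legitimate because the series converges locally uniformly on $\{sp<1\}$, and tracking the chain-rule factor produced by $\tfrac{\partial}{\partial s}(sp)^x$, yields
\[
\sum_{x\ge 1}x(ps)^x\zeta(\gamma,x)=s\,\frac{\partial h}{\partial s},\qquad \sum_{x\ge 2}x(x-1)(ps)^x\zeta(\gamma,x)=s^2\,\frac{\partial^2 h}{\partial s^2}.
\]
In the same way the polylogarithm series give $\mathbb{E}(\xi(ps)^{\xi})=Li_{\gamma-1}(ps)/\zeta(\gamma)$ and $\mathbb{E}(\xi(\xi-1)(ps)^{\xi})=[Li_{\gamma-2}(ps)-Li_{\gamma-1}(ps)]/\zeta(\gamma)$. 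Substituting these, together with $\mathbb{P}(\xi\ge x)=\zeta(\gamma,x)/\zeta(\gamma)$, into the definitions of $\mu_G(s)$ and $B_G(s)$ in Theorem~\ref{thm:model2alcance}, and then setting $s=1$, reproduces the four displayed expressions for $\mu_G(s),B_G(s),\mu_G,B_G$.

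It then remains to translate the constants and the hypothesis. Since $Zipf(\gamma)$ is supported on $\mathbb{N}$ we have $\mathbb{P}(\xi\ne 0)=1$, so inserting this value (and the expressions for $\mu_G,B_G$ just found) into the general definitions of $u$ and $\ell$ produces the constants used in the bounds; in particular $\ell=(2\mu_G(1-\mu_G)+B_G)/B_G$ is unchanged. Evaluating the first identity above at $s=1$ gives $\sum_{x\ge 1}xp^x\mathbb{P}(\xi\ge x)=\tfrac{1}{\zeta(\gamma)}\tfrac{\partial h(1,p,\gamma)}{\partial s}$ and $\mathbb{E}(\xi p^{\xi})=Li_{\gamma-1}(p)/\zeta(\gamma)$, so after clearing the common factor $\zeta(\gamma)$ the subcriticality inequality $\sum_{x\ge1}xp^x\mathbb{P}(\xi\ge x)<\tfrac{1-p\mathbb{E}(\xi p^{\xi})}{1-p}$ of Theorem~\ref{thm:model2alcance} becomes exactly the stated hypothesis $\tfrac{\partial h(1,p,\gamma)}{\partial s}<\tfrac{\zeta(\gamma)-pLi_{\gamma-1}(p)}{1-p}$; equivalently this is $\mu_G=\mathbb{E}(X)<1$, i.e.\ subcriticality. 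Keeping the monotonicity assumption $\mu_G B_G(s)-\mu_G(s)B_G\ge 0$ verbatim, Theorem~\ref{thm:model2alcance} applies and yields the two displayed bounds with geometric base $\mu_G$.

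The argument is essentially bookkeeping, and the one place where care is needed is the identification of the two derivative series: the chain rule applied to $(sp)^x$ supplies the compensating powers $s$ and $s^2$ that cancel the $1/s$ and $1/s^2$ in $\mu_G(s)$ and $B_G(s)$, while the factor $x(x-1)$ annihilates the $x=1$ term and lets the second series start at $x=2$. The only substantive rather than formal point is convergence: for $\gamma>1$ and $p<1$ one has $ps\le p<1$, so the geometric weight $(ps)^x$ dominates $x(x-1)\zeta(\gamma,x)$ (note $\zeta(\gamma,x)\le\zeta(\gamma)$ for $x\ge 1$); hence $\mu_G$ and $B_G$ are finite and term-by-term differentiation is justified. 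This finiteness of the offspring moments, guaranteed by the geometric truncation of the MAE offspring even when $\xi$ is heavy-tailed, is precisely what licenses the application of Theorem~\ref{thm:model2alcance}; no genuine obstacle arises beyond this.
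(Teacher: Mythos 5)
Your proposal is correct and follows essentially the same route as the paper's proof: express $\mathbb{P}(\xi\ge x)=\zeta(\gamma,x)/\zeta(\gamma)$ and $G_\xi(sp)=Li_\gamma(sp)/\zeta(\gamma)$, identify the derivative series of $h$ and the polylogarithm moments $\mathbb{E}(\xi(ps)^\xi)$ and $\mathbb{E}(\xi(\xi-1)(ps)^\xi)$, and then invoke Theorem~\ref{thm:model2alcance} verbatim, with your convergence remark being a welcome extra that the paper omits. The only caveat (equally unaddressed in the paper's own proof, which just says ``apply the theorem'') is that substituting $\mathbb{P}(\xi\neq 0)=1$ into the general formula for $u$ yields $(1-p)\mu_G/(\mu_G-p)$ rather than the $(1+p)/\mu_G$ displayed in the corollary, so your claim that the general definitions ``produce the constants used in the bounds'' silently passes over what appears to be a typo in the statement.
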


\bigskip
\begin{thm} \label{thm:model2alcance-size}
(Tree cascade size) Consider the $(p,\xi)$-MAE model, and let $\mathcal{T}_s$ be the tree cascade size. If \[ \sum_{x=1}^{\infty}xp^x \mathbb{P}(\xi \geq x) < \frac{1-p\mathbb{E}(\xi p^{\xi})}{1-p}.
\]
\[ \mathbb{P}(\mathcal{T}_s = k) = \frac{1}{k!}G^{(k-1)}(0), \textrm { where } G(s) = \left ( (1-p)\sum_{x=0}^{\infty} (sp)^x \mathbb{P}(\xi \geq x) + pG_{\xi}(sp)\right )^k,
\]
and
\[ \mathbb{E}(\mathcal{T}_s) = \left (1-p\mathbb{E}(\xi p^{\xi}) -(1-p)\sum_{x=1}^{\infty} xp^x \mathbb{P}(\xi \geq x) \right)^{-1}
\]
\end{thm}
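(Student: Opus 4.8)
The plan is to follow the same route as in the proof of Theorem~\ref{thm:model1alcance-size}: first identify the probability generating function $G_X$ of the offspring law $X=X(p,\xi)$, then read off the law of the total progeny $\mathcal{T}_s$ from the Lagrange inversion formula for branching processes, and finally extract $\mathbb{E}(\mathcal{T}_s)$ from the mean of $X$. The only genuinely model-specific computation is the first one; the remaining steps are the standard machinery for the total progeny of a Galton--Watson process.

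First I would compute $G_X$. Conditioning on $\xi$ and recalling that $X=Y$ on $\{Y\le\xi\}$ while $X=\xi$ on $\{Y>\xi\}$, with $\mathbb{P}(Y=i)=(1-p)p^i$, one gets
\[
\mathbb{E}\!\left(s^{X}\mid \xi\right)=(1-p)\sum_{x=0}^{\xi}(ps)^x+p^{\xi+1}s^{\xi}.
\]
Writing $\sum_{x=0}^{\xi}(ps)^x=\sum_{x=0}^{\infty}(ps)^x\,I_{\{\xi\ge x\}}$ and taking expectation over $\xi$ (the interchange being justified by nonnegativity of the terms, or by absolute convergence for $s\in[0,1)$), the first term becomes $(1-p)\sum_{x=0}^{\infty}(ps)^x\,\mathbb{P}(\xi\ge x)$, while $\mathbb{E}(p^{\xi+1}s^{\xi})=pG_{\xi}(ps)$. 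Hence
\[
G_X(s)=(1-p)\sum_{x=0}^{\infty}(ps)^x\,\mathbb{P}(\xi\ge x)+pG_{\xi}(ps),
\]
which is exactly the base of the $k$-th power appearing in the statement.

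Next I would verify that the hypothesis is precisely the subcriticality condition $m:=\mathbb{E}(X)=G_X'(1)<1$. Differentiating the expression above and using $p^{2}G_{\xi}'(p)=p\,\mathbb{E}(\xi p^{\xi})$ gives $m=(1-p)\sum_{x\ge1}xp^{x}\mathbb{P}(\xi\ge x)+p\,\mathbb{E}(\xi p^{\xi})$, so that $m<1$ is equivalent to $\sum_{x\ge1}xp^{x}\mathbb{P}(\xi\ge x)<(1-p\,\mathbb{E}(\xi p^{\xi}))/(1-p)$, the stated assumption; in particular $\mathcal{T}_s<\infty$ almost surely. For the law of $\mathcal{T}_s$, the generating function $H(s)=\mathbb{E}(s^{\mathcal{T}_s})$ of the total progeny satisfies $H(s)=sG_X(H(s))$, and Lagrange inversion yields $\mathbb{P}(\mathcal{T}_s=k)=\frac{1}{k!}\,\frac{d^{k-1}}{ds^{k-1}}\big[G_X(s)^{k}\big]\big|_{s=0}$, i.e.\ the claimed formula with $G(s)=(G_X(s))^{k}$. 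Finally, differentiating $H(s)=sG_X(H(s))$ at $s=1$ (where $H(1)=1$ by subcriticality) gives $\mathbb{E}(\mathcal{T}_s)=H'(1)=1/(1-m)$, and substituting the value of $m$ produces the stated expression for $\mathbb{E}(\mathcal{T}_s)$.

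The main obstacle is the first step: one has to be careful about the boundary mass at $x=\xi$ (the truncation sends all of $\{Y\ge\xi\}$ into the atom at $\xi$) and about justifying the interchange of the sum over $x$ with the expectation over $\xi$, so that the geometric law and the truncation combine into the clean closed form for $G_X$. Once $G_X$ is in hand, both conclusions follow from exactly the results already used for the $(p,\xi)$-MPE model.
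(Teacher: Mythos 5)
Your proposal is correct and follows essentially the same route as the paper: identify the offspring probability generating function $g_2(s)=(1-p)\sum_{x\ge 0}(sp)^x\,\mathbb{P}(\xi\ge x)+pG_{\xi}(sp)$ (the paper gets it by first computing the offspring law $p^k[\mathbb{P}(\xi\ge k)-p\,\mathbb{P}(\xi>k)]$ in the proof of Theorem~\ref{thm:model2}, you get it by conditioning on $\xi$ --- the same computation in a different order), check that the stated hypothesis is exactly subcriticality $G_X'(1)<1$, and then apply the total-progeny formula (your Lagrange-inversion step is precisely Dwass's identity invoked in Proposition~\ref{branching}(b)) together with $\mathbb{E}(\mathcal{T}_s)=1/(1-m)$.
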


\begin{cor}\label{cor:model2alcancek-size}
Let $k\in\mathbb{N}$ be fixed and consider the $(p,\xi)$-MAE model with $\mathbb{P}(\xi = k) = 1$. If $p^{k+1} - 2p+1 > 0$
\[ \mathbb{P}(\mathcal{T}_s = j) = \frac{1}{j!}G^{(j-1)}(0), \textrm { where } G(s) = \left ( \frac{p(1-s)(ps)^k +1-p}{1-ps}  \right )^j.
\]
and
\[ \mathbb{E}(\mathcal{T}_s) = \frac{1-p}{1-2p+p^{k+1}}.
\] 
\end{cor}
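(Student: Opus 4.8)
The plan is to derive this corollary directly from Theorem \ref{thm:model2alcance-size} by substituting the deterministic offspring law $\mathbb{P}(\xi = k) = 1$ and simplifying all the infinite sums into finite ones. Under this law one has $\mathbb{P}(\xi \geq x) = 1$ for $0 \leq x \leq k$ and $\mathbb{P}(\xi \geq x) = 0$ for $x \geq k+1$, together with $G_\xi(s) = s^k$ and $\mathbb{E}(\xi p^\xi) = kp^k$, so every series appearing in Theorem \ref{thm:model2alcance-size} collapses to a truncated one.

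For the cascade-size distribution I would first simplify the base of the generating function. Writing $g(s) = (1-p)\sum_{x=0}^{\infty}(sp)^x\mathbb{P}(\xi \geq x) + pG_\xi(sp)$, the substitution gives $g(s) = (1-p)\sum_{x=0}^{k}(sp)^x + p(sp)^k$. Using the finite geometric sum $\sum_{x=0}^{k}(sp)^x = (1-(sp)^{k+1})/(1-sp)$ and placing both terms over the common denominator $1-sp$, the numerator becomes $(1-p)(1-(sp)^{k+1}) + p(sp)^k(1-sp)$; upon expanding, the two $\pm p(sp)^{k+1}$ contributions cancel and the remainder factors as $(1-p) + p(1-s)(ps)^k$. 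Hence $g(s) = (p(1-s)(ps)^k + 1 - p)/(1-ps)$, which is precisely the base of $G(s)$ in the statement, and the formula $\mathbb{P}(\mathcal{T}_s = j) = G^{(j-1)}(0)/j!$ is then inherited verbatim from Theorem \ref{thm:model2alcance-size}.

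For the mean I would specialize $\mathbb{E}(\mathcal{T}_s) = (1 - p\mathbb{E}(\xi p^\xi) - (1-p)\sum_{x=1}^{\infty}xp^x\mathbb{P}(\xi \geq x))^{-1}$. The only nontrivial ingredient is $\sum_{x=1}^{k}xp^x$, which I would evaluate by differentiating the finite geometric series, obtaining $(p - (k+1)p^{k+1} + kp^{k+2})/(1-p)^2$. Substituting this together with $\mathbb{E}(\xi p^\xi) = kp^k$ and clearing one factor of $(1-p)$ from the bracketed denominator, the $p^{k+2}$ terms cancel and the expression reduces to $(1-2p+p^{k+1})/(1-p)$, which yields $\mathbb{E}(\mathcal{T}_s) = (1-p)/(1-2p+p^{k+1})$.

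Finally I would verify that the subcriticality hypothesis of Theorem \ref{thm:model2alcance-size} matches the stated condition. Inserting the same truncated sums into $\sum_{x=1}^{\infty}xp^x\mathbb{P}(\xi \geq x) < (1 - p\mathbb{E}(\xi p^\xi))/(1-p)$ and clearing $(1-p)^2$, the inequality collapses after cancellation to $2p - p^{k+1} - 1 < 0$, which is exactly $p^{k+1} - 2p + 1 > 0$. The only real work lies in the careful bookkeeping of these cancellations; I anticipate no conceptual obstacle, since both the distributional identity and the mean are inherited from the already-established Theorem \ref{thm:model2alcance-size}.
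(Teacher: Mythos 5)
Your proposal is correct and follows essentially the same route as the paper: both specialize Theorem \ref{thm:model2alcance-size} to $\mathbb{P}(\xi=k)=1$, collapse the series to finite geometric sums to obtain $g(s)=\bigl(p(1-s)(ps)^k+1-p\bigr)/(1-ps)$, and simplify the mean to $(1-p)/(1-2p+p^{k+1})$. Your write-up merely supplies the intermediate algebra (including the check that the subcriticality hypothesis reduces to $p^{k+1}-2p+1>0$) that the paper leaves implicit.
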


\begin{cor} \label{cor:model2tamanhoz}
Consider the $(p,\xi)$-MAE model, with $\xi\sim Zipf(\gamma)$, $\gamma>1$. If \[ \dfrac { \partial h(1,p,\gamma)}{\partial s} < \frac{\zeta(\gamma) -pLi_{\gamma -1}(p)}{1-p},
\]
then
\[ \mathbb{P}(\mathcal{T}_s = j) = \frac{1}{j!}G^{(j-1)}(0), \textrm { where } G(s) = \left [ \frac{(1-p)h(s,p, \gamma) + pLi_{\gamma}(sp)}{\zeta(\gamma)}\right ]^j.
\]
and
\[ \mathbb{E}(\mathcal{T}_s) = \frac{\zeta(\gamma)}{\zeta(\gamma) - p Li_{\gamma -1}(p) -(1-p) \dfrac { \partial h(1,p, \gamma)}{\partial s}}.
\]
\end{cor}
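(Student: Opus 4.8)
The plan is to read the corollary off Theorem \ref{thm:model2alcance-size} by specializing every ingredient to the offspring law induced by $\xi \sim Zipf(\gamma)$; no new probabilistic argument is needed, so the whole task is to evaluate the relevant generating sums in closed form. First I would record the two Zipf identities that drive everything: since $\mathbb{P}(\xi = j) = j^{-\gamma}/\zeta(\gamma)$ for $j \in \mathbb{N}$, the probability generating function is $G_{\xi}(s) = \sum_{j\ge 1} s^j j^{-\gamma}/\zeta(\gamma) = Li_{\gamma}(s)/\zeta(\gamma)$, and the tail is $\mathbb{P}(\xi \ge x) = \sum_{j\ge x} j^{-\gamma}/\zeta(\gamma) = \zeta(\gamma,x)/\zeta(\gamma)$ for $x\ge 1$, with $\mathbb{P}(\xi\ge 0)=1$, where $\zeta(\gamma,x)$ is the Hurwitz zeta function of Corollary \ref{cor:model2tfzeta}.

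Next I would substitute these into the general total-progeny statement. The offspring generating function entering $G(s)=f(s)^j$ in Theorem \ref{thm:model2alcance-size} is $f(s)=(1-p)\sum_{x\ge 0}(sp)^x\mathbb{P}(\xi\ge x)+pG_{\xi}(sp)$; here $G_{\xi}(sp)=Li_{\gamma}(sp)/\zeta(\gamma)$, while $\sum_{x\ge 0}(sp)^x\mathbb{P}(\xi\ge x)=1+\zeta(\gamma)^{-1}\sum_{x\ge 1}(sp)^x\zeta(\gamma,x)=1+h(s,p,\gamma)/\zeta(\gamma)$, the Hurwitz-zeta series being exactly the function $h$ defined in Corollary \ref{cor:model2tfzeta}. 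Collecting over the common denominator $\zeta(\gamma)$ yields $f$, from which $\mathbb{P}(\mathcal{T}_s=j)=\tfrac{1}{j!}G^{(j-1)}(0)$ follows directly from the corresponding formula in the theorem (itself the Dwass total-progeny identity). For the mean I would specialize the two moment sums in the same way: $\mathbb{E}(\xi p^{\xi})=\sum_{j\ge 1}jp^j j^{-\gamma}/\zeta(\gamma)=Li_{\gamma-1}(p)/\zeta(\gamma)$, and, differentiating $h$ termwise, $\sum_{x\ge 1}xp^x\mathbb{P}(\xi\ge x)=\zeta(\gamma)^{-1}\partial_s h(1,p,\gamma)$. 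Substituting both into $\mathbb{E}(\mathcal{T}_s)=(1-p\mathbb{E}(\xi p^{\xi})-(1-p)\sum_{x\ge 1}xp^x\mathbb{P}(\xi\ge x))^{-1}$ and clearing $\zeta(\gamma)$ gives the stated closed form; the identical substitution turns the subcriticality hypothesis of the theorem into the displayed condition $\partial_s h(1,p,\gamma)<(\zeta(\gamma)-pLi_{\gamma-1}(p))/(1-p)$.

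The only genuinely non-routine step is the closed-form handling of the Hurwitz-zeta sums, so that is where I would concentrate. Concretely, interchanging the order of summation
\[ \sum_{x\ge 1}p^x\sum_{j\ge x}j^{-\gamma}=\sum_{j\ge 1}j^{-\gamma}\sum_{x=1}^{j}p^x, \]
which is legitimate by nonnegativity and absolute convergence for $\gamma>1$, $p\in(0,1)$, gives the identity $h(1,p,\gamma)=\tfrac{p}{1-p}[\zeta(\gamma)-Li_{\gamma}(p)]$. I would use this as the consistency check that $f$ is a bona fide probability generating function, i.e. that $f(1)=1$ once the $x=0$ contribution to the tail series (the constant normalising term) is retained. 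Convergence of all the series involved, in particular $Li_{\gamma-1}(p)$ and $\partial_s h$, is guaranteed by $\gamma>1$ and $|p|<1$, and the termwise differentiation of $h$ is justified on the open unit disc in $s$. With these identities in hand the remaining algebra is mechanical and reproduces the formulas in the statement.
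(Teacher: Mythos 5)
Your proposal is correct and follows essentially the same route as the paper: a direct specialization of Theorem \ref{thm:model2alcance-size} to the Zipf law, using $G_{\xi}(s)=Li_{\gamma}(s)/\zeta(\gamma)$, $\mathbb{P}(\xi\geq x)=\zeta(\gamma,x)/\zeta(\gamma)$ for $x\geq 1$, $\mathbb{E}(\xi p^{\xi})=Li_{\gamma-1}(p)/\zeta(\gamma)$ and $\sum_{x\geq 1}xp^{x}\mathbb{P}(\xi\geq x)=\zeta(\gamma)^{-1}\partial_{s}h(1,p,\gamma)$. One remark: your careful retention of the $x=0$ term of the tail series (needed, as you note, for $f(1)=1$) shows that the displayed $G(s)$ in the corollary, and in the paper's one-line proof, is missing the additive constant $1-p$; the offspring generating function is $\bigl[(1-p)(\zeta(\gamma)+h(s,p,\gamma))+pLi_{\gamma}(sp)\bigr]/\zeta(\gamma)$, while the mean formula and the subcriticality condition are unaffected since the omitted term is constant in $s$.
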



\section{Misinformation: passive {\it vs} active environments of aware individuals}

We formulated two simple branching processes to represent misinformation propagation through two environments of aware individuals. We called it as passive and active environments, respectively. Our approach is useful to obtain rigorous results from the Theory of Branching Processes, which in turns allows to compare the different scenarios represented by the models. Theorems \ref{thm:model1} and \ref{thm:model2} state a phase transition result related to the propagation with positive probability, or not, of misinformation. For the sake of simplicity let us consider the case $\mathbb{P}(\xi=k)=1$ to compare both models, see Corollaries \ref{cor:model1tfc} and \ref{cor:model2tfc}. In Figure \ref{fig:compa-diag} we show the phase-diagram of the $(p,\xi)$-MPE model (a) in contrast to the phase-diagram of the $(p,\xi)$-MAE model (b). We emphasize that there exists an expressive reduction in the propagation region when we assume an active environment. In particular, differently from what happens in a passive environment, in an active environment there is extinction of misinformation almost surely for any value of $k$ provided the proportion of aware individuals is greater than $1/2$. See Figure \ref{fig:compa-diag}.


	\begin{figure}[h!]
	\centering\subfigure[$(p,\xi)$-MPE model.]{\includegraphics[scale=0.55]{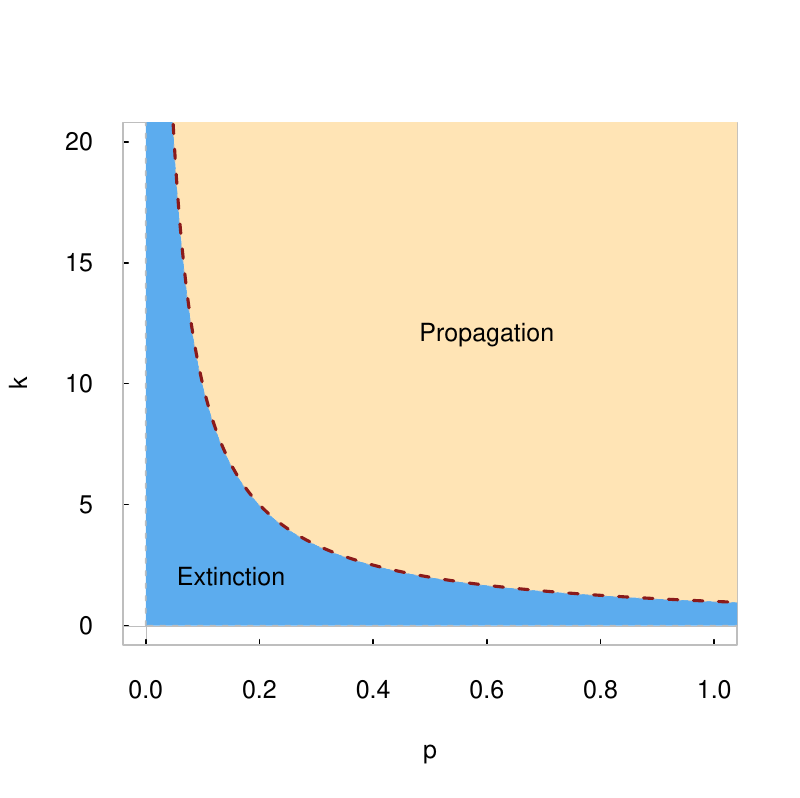}}\subfigure[$(p,\xi)$-MAE model.]{\includegraphics[scale=0.55]{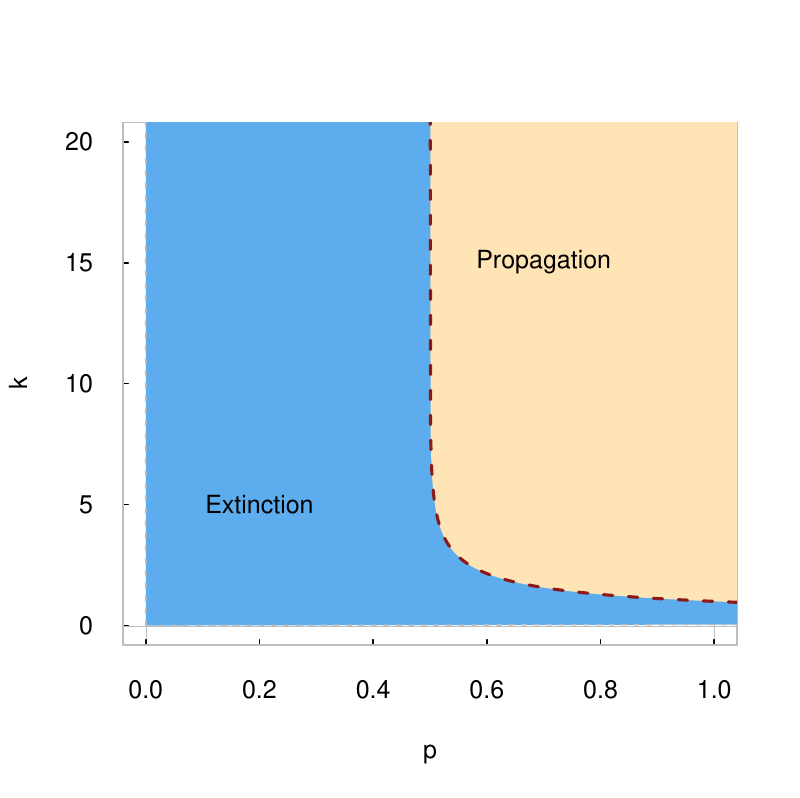}}
\caption{Phase-diagrams for the misinformation stochastic models assuming $\mathbb{P}(\xi=k)=1$. Dashed lines represent the critical $k$ as a function of $p$ for which phase transition occurs. Below such critical function (blue region) there is extinction of misinformation almost surely. Above the critical function (yellow region)  misinformation propagates with positive probability. (a) For the $(p,\xi)$-MPE model Corollary \ref{cor:model1tfc} states that misinformation propagates with positive probability if, and only if, $k>1/p$. Note that $p=0$ is an asymptote for the critical function. (b) For the $(p,\xi)$-MAE model Corollary \ref{cor:model2tfc} states that misinformation propagates with positive probability if, and only if, $k>\left(\log(2p-1)/\log p\right)-1.$ In this case $p=0.5$ is an asymptote for the critical function. In particular, differently from what happens in the $(p,\xi)$-MPE model, if $p\leq 1/2$ then there is extinction of misinformation almost surely for any value of $k$.}\label{fig:compa-diag}
\end{figure}

Another measure for the range of propagation that we studied in this work is the cascade size of each propagation process. We state information about the law of this random variable in Theorems \ref{thm:model1alcance-size} and \ref{thm:model2alcance-size}. Moreover, as a consequence we obtain another way to catch the impact of an active environment over a passive one. We call it as the {\it stopping effect}. 

\begin{thm} \label{thm:comp}
Let $\mathbb{E}_{p}(\mathcal{T}_s)$ and $\mathbb{E}_{a}(\mathcal{T}_s)$ the expected value for the cascade size $\mathcal{T}_s$ for the $(p,\xi)$-MAE and the $(p,\xi)$-MPE model, respectively. Suppose that $p < 1/\mathbb{E}(\xi)$. Then,
\[ \mathbb{E}_{p}(\mathcal{T}_s) - \mathbb{E}_{a}(\mathcal{T}_s) = \frac{p[\mathbb{E}(\xi) - \mathbb{E}(\xi p^{\xi}) ] -(1-p) \sum_{x=1}^{\infty} xp^x \mathbb{P}(\xi \geq x)}{[1-p\mathbb{E}(\xi)][1-p \mathbb{E}(\xi p^{\xi}) - (1-p)\sum_{x=1}^{\infty} xp^x \mathbb{P}(\xi \geq x)]}.\]
\end{thm}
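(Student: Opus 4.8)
The plan is to read off the two expected cascade sizes from the closed forms already established and simply subtract them. By Theorem~\ref{thm:model1alcance-size} the passive model has $\mathbb{E}(\mathcal{T}_s)=\left(1-p\mathbb{E}(\xi)\right)^{-1}$, and by Theorem~\ref{thm:model2alcance-size} the active model has $\mathbb{E}(\mathcal{T}_s)=\left(1-p\mathbb{E}(\xi p^{\xi})-(1-p)\sum_{x\ge 1}xp^x\mathbb{P}(\xi\ge x)\right)^{-1}$. Writing $D_{\mathrm{MPE}}:=1-p\mathbb{E}(\xi)$ and $D_{\mathrm{MAE}}:=1-p\mathbb{E}(\xi p^{\xi})-(1-p)\sum_{x\ge 1}xp^x\mathbb{P}(\xi\ge x)$, each expectation is the reciprocal of its denominator, so the difference of the two reciprocals is a single fraction with denominator $D_{\mathrm{MPE}}\,D_{\mathrm{MAE}}$, which is precisely the product displayed in the statement, and numerator $D_{\mathrm{MAE}}-D_{\mathrm{MPE}}$. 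The bulk of the proof is then the one-line cancellation showing $D_{\mathrm{MAE}}-D_{\mathrm{MPE}}=p\left[\mathbb{E}(\xi)-\mathbb{E}(\xi p^{\xi})\right]-(1-p)\sum_{x\ge 1}xp^x\mathbb{P}(\xi\ge x)$, since the two constant terms $1$ cancel and the remaining terms regroup into the claimed numerator.

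Before subtracting I would verify that both expectations are finite, i.e. that both denominators are strictly positive, so that the manipulation is legitimate. Positivity of $D_{\mathrm{MPE}}$ is exactly the hypothesis $p<1/\mathbb{E}(\xi)$. For $D_{\mathrm{MAE}}$ I would show it dominates $D_{\mathrm{MPE}}$, which forces it to be positive as well. To that end I would rewrite the tail sum by exchanging the order of summation, $\sum_{x\ge 1}xp^x\mathbb{P}(\xi\ge x)=\sum_{j\ge 1}\mathbb{P}(\xi=j)\sum_{x=1}^{j}xp^x$, and compare it against $p\left[\mathbb{E}(\xi)-\mathbb{E}(\xi p^{\xi})\right]=\sum_{j\ge 1}\mathbb{P}(\xi=j)\,pj(1-p^j)$. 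Then $D_{\mathrm{MAE}}-D_{\mathrm{MPE}}$ is a single sum over $j$ whose coefficient of $\mathbb{P}(\xi=j)$, after evaluating the finite geometric-type sum $\sum_{x=1}^{j}xp^x$, reduces to a positive multiple of $j(1-p)-(1-p^j)$. The inequality $1-p^j=(1-p)(1+p+\cdots+p^{j-1})\le j(1-p)$ makes every such coefficient nonnegative, so $D_{\mathrm{MAE}}\ge D_{\mathrm{MPE}}>0$ and both expectations are finite.

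The same computation does double duty: it certifies that the numerator of the resulting fraction is nonnegative, which is exactly what makes the identity meaningful as a measure of the stopping effect, with equality holding only in the degenerate case where $\xi$ places all its mass on $\{0,1\}$. I expect no genuine difficulty here: the work is entirely algebraic bookkeeping, and the single step that carries any content is the term-by-term comparison, which rests on the elementary bound $1-p^j\le j(1-p)$. The only thing to be careful about is the interchange of summation order and the evaluation of $\sum_{x=1}^{j}xp^x$, after which the claimed expression follows by direct substitution of the two closed forms of Theorems~\ref{thm:model1alcance-size} and~\ref{thm:model2alcance-size}.
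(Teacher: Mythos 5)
Your proposal is correct and follows essentially the same route as the paper, which likewise just subtracts the two closed-form expectations from Theorems~\ref{thm:model1alcance-size} and~\ref{thm:model2alcance-size} and combines them over the common denominator. The extra verification you supply---that the MAE denominator dominates the MPE one via the bound $1-p^{j}\leq j(1-p)$, so both expectations are finite and the difference is nonnegative---is sound and goes slightly beyond the paper's one-line computation, but it does not change the approach.
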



Let us consider a fixed $k\in\mathbb{N}$, and let $\mathbb{P}(\xi = k) = 1$. In this case, if $p < 1/k$ then Theorem \ref{thm:comp} implies
\[ \mathbb{E}_p(\mathcal{T}_s) - \mathbb{E}_a(\mathcal{T}_s)  = \frac{p(p^k-kp+k-1)}{(1-kp)(1-2p+p^{k+1})}.\]
See Tables \ref{tab:k2} and \ref{tab:k3} for an illustration of the reduction of the cascade size according to some values of $p$.

\begin{table}
\caption{Stopping effect for the case where $\mathbb{P}(\xi = 2) = 1$.}\label{tab:k2}
\begin{tabular}{ccccc}\toprule 

  $p$ & $\mathbb{E}_p(\mathcal{T}_s)$ & $\mathbb{E}_a(\mathcal{T}_s)$   & $\mathbb{E}_p(\mathcal{T}_s)$  - $\mathbb{E}_a(\mathcal{T}_s)$  &  reduction $\%$ \\
  \midrule
 0.1 & 1.25 & 1.1236 & 0.1264 & 10.11 $\%$   \\
  \midrule
   0.2 & 1.67 & 1.3158 & 0.3509 & 21.05 $\%$   \\
  \midrule
   0.3 & 2.50 & 1.6394 & 0.8607 & 34.42 $\%$   \\
  \midrule
   0.4 & 5.00 & 2.2727 & 2.7273 & 54.55 $\%$   \\
  \midrule
   0.45 & 10.00 & 2.8777 & 7.1223 & 71.22 $\%$   \\
  \midrule
   0.49 & 50.00 & 3.7051 & 46.2949 & 92.59 $\%$   \\
  \midrule
   0.499 & 500.00 & 3.9683 & 496.0320 & 99.21 $\%$   \\
  \midrule
   0.4999 & 5000.00 & 3.9968 & 4996.00 & 99.99 $\%$   \\
  \bottomrule
\end{tabular}\\
\end{table}

\begin{table}
\caption{Stopping effect for the case where $\mathbb{P}(\xi = 3) = 1$.}\label{tab:k3}
\begin{tabular}{ccccc}\toprule

  $p$ & $\mathbb{E}_p(\mathcal{T}_s)$ & $\mathbb{E}_a(\mathcal{T}_s)$   & $\mathbb{E}_p(\mathcal{T}_s)$  - $\mathbb{E}_a(\mathcal{T}_s)$  &  reduction $\%$ \\
\midrule
 0.1 & 1.43 & 1.1249 & 0.3037 & 10.11 $\%$   \\
  \midrule
   0.2 & 2.5 & 1.3298 & 1.1703 & 21.05 $\%$   \\
  \midrule
   0.3 & 10.00 & 1.7153 & 0.2847& 82.25 $\%$   \\
  \midrule
   0.33 & 100.00 & 1.9042 & 98.0958 & 98.10 $\%$   \\
  \midrule
   0.333 & 1000.00 & 1.9261 & 998.0740 & 99.81 $\%$   \\
  \midrule
   0.3333 & 10000.00 & 1.9283 & 9998.0700 & 99.98 $\%$   \\
  \bottomrule
  \end{tabular}\\
\end{table}

\bigskip
To finish the discussion, let us compare the evolution on time for both models. Since we are dealing with branching processes, which are Markov chains, its simulation are relatively simple. In Figure \ref{fig:comparison} we show the evolution of the propagation for the $(p,\xi)$-MPE model (left side) $vs$ the $(p,\xi)$-MAE model (right side), resulting from simulations, assuming $\mathbb{P}(\xi=k)=1$ for different values of $k$, taking different values of $p$, and using log-linear scales for the results.

\begin{figure}[h!]
    \centering
    \subfigure[$k=5$.]{\includegraphics[scale=0.43]{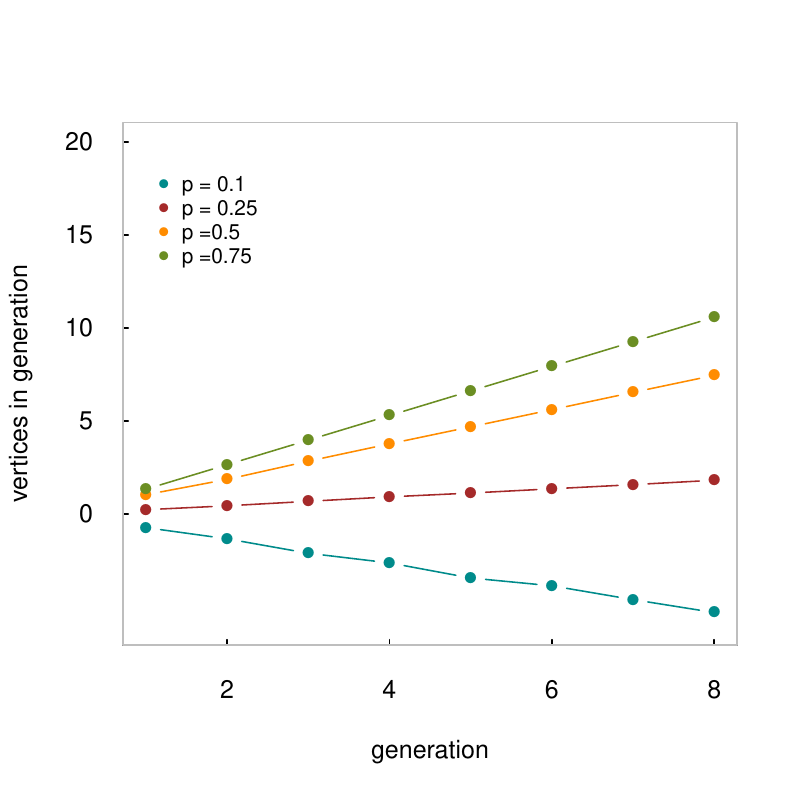}\includegraphics[scale=0.43]{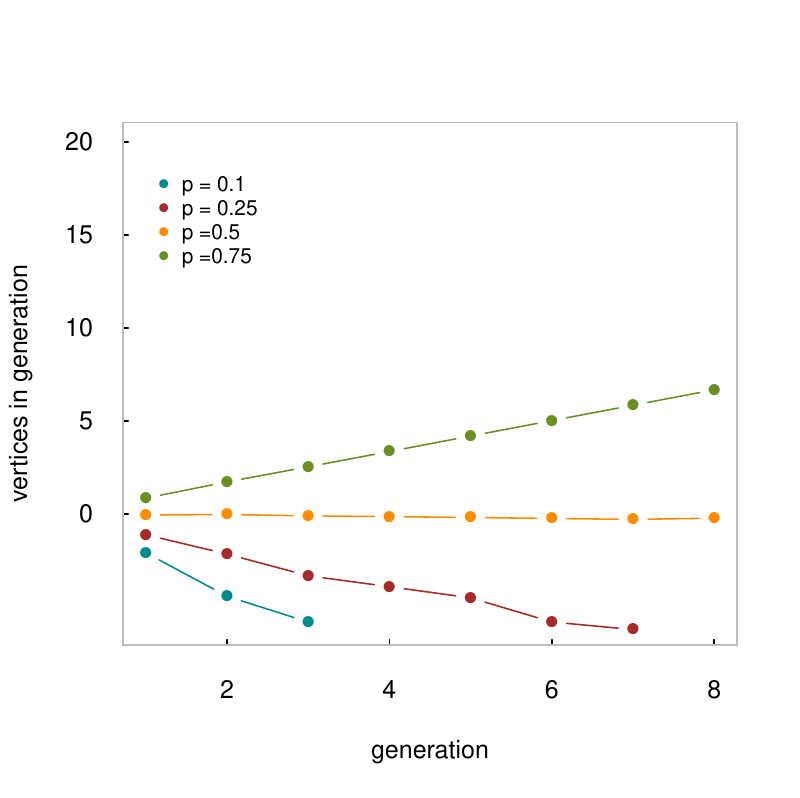}}
    \subfigure[$k=10$.]{\includegraphics[scale=0.43]{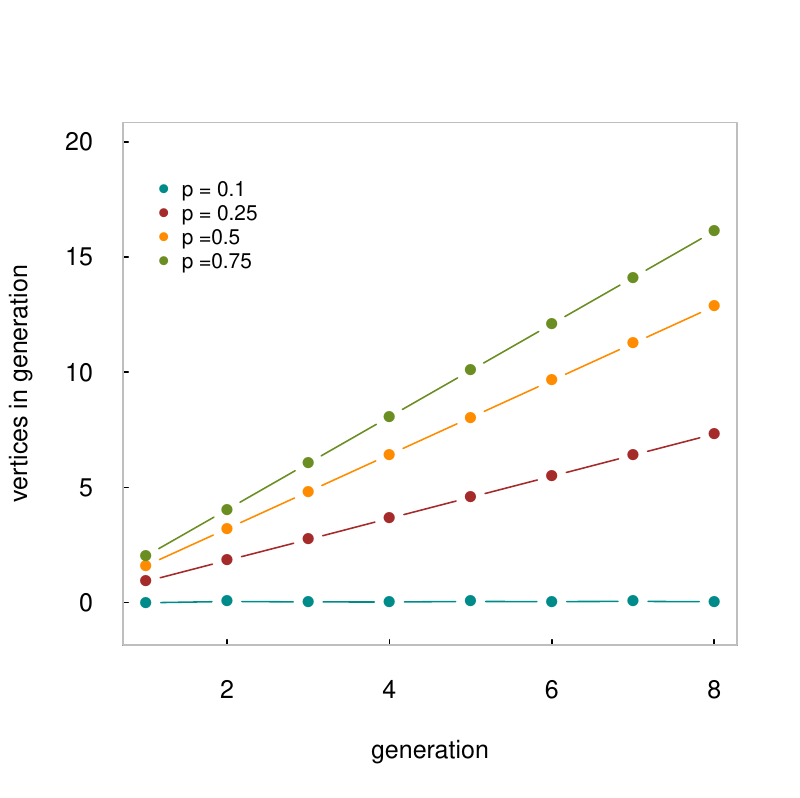}\includegraphics[scale=0.43]{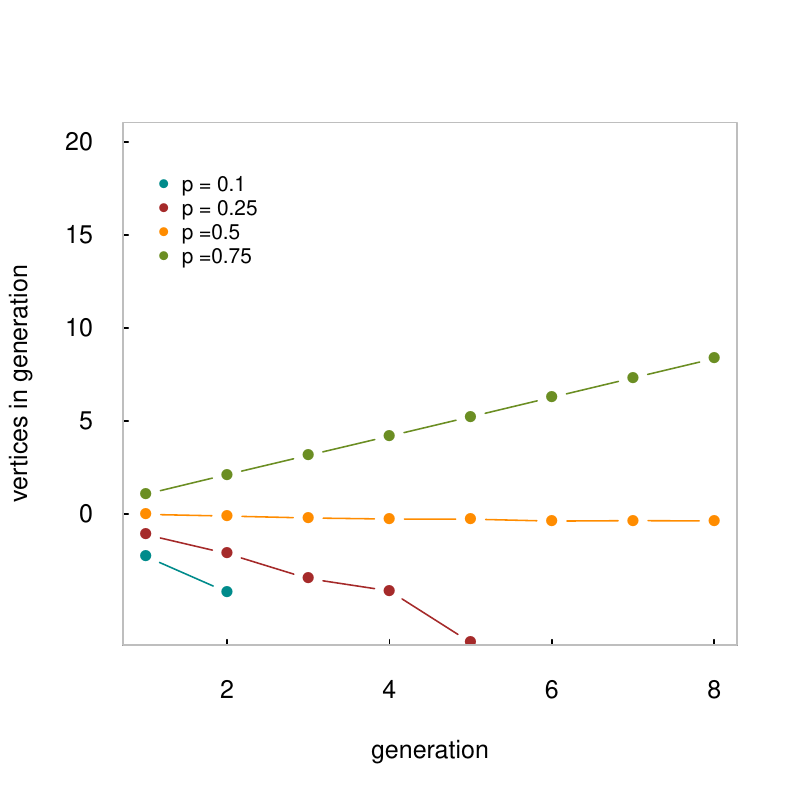}}
    \subfigure[$k=15$.]{\includegraphics[scale=0.43]{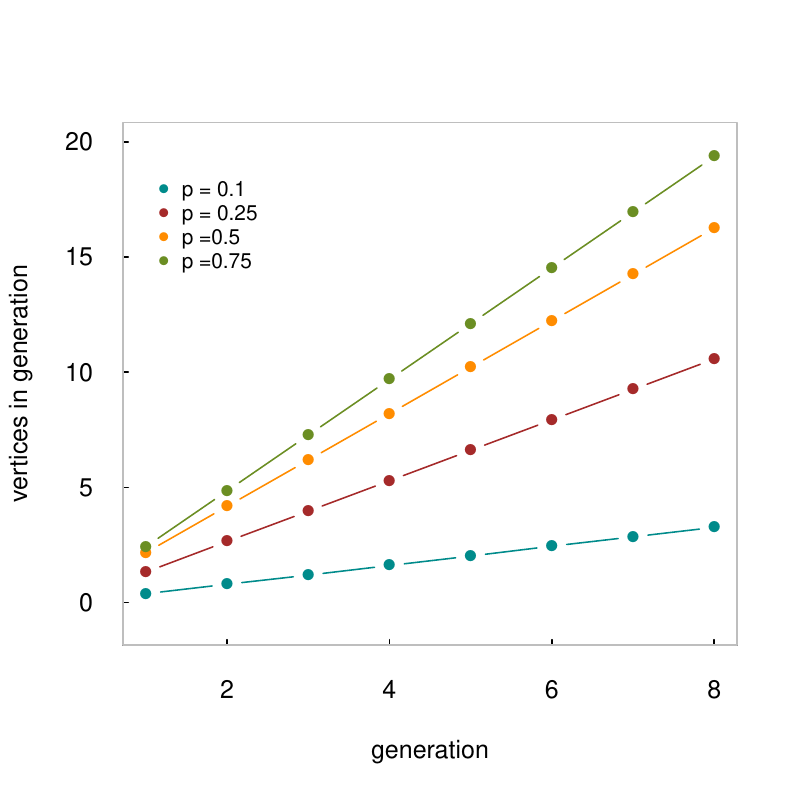}\includegraphics[scale=0.43]{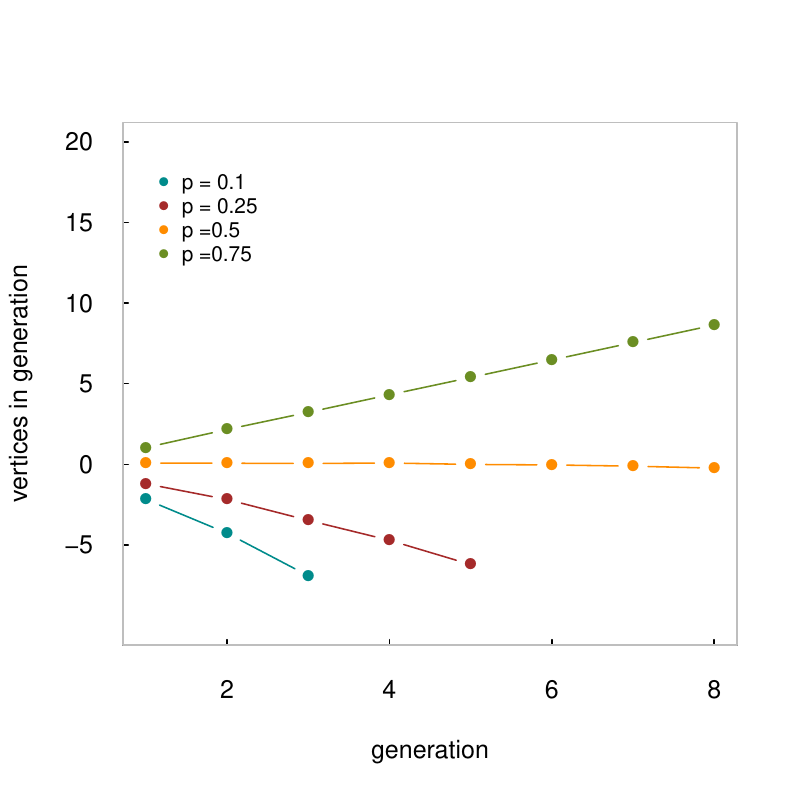}}
    \caption{Evolution of the propagation for the $(p,\xi)$-MPE model (left side) $vs$ the $(p,\xi)$-MAE model (right side). We exhibit the mean values obtained from $8\times 10^6$ simulations of the processes for $p\in\{0.1,0.25,0.5,0.75\}$ and $\mathbb{P}(\xi=k)=1$ for different values of $k$.}
    \label{fig:comparison}
\end{figure}

Simulations for the case represented in Figure \ref{fig:comparison} are simple because we assumed $\mathbb{P}(\xi=k)=1$. Although this can be extended assuming different laws of $\xi$, in some cases the computational cost is high so we can appeal to the theory of branching processes. Indeed, we can simply use the fact that the number of propagating individuals at the $nth$-level of the tree is given by $m^n$ where $m$ is the mean of the number of contacts that propagate the information from a given individual. In Figure \ref{fig:comparisongama} we present the results coming from this analysis for the $(p,\xi)$-MPE model (left side) $vs$ the $(p,\xi)$-MAE model (right side), assuming $\xi\sim Zipf(\gamma)$ for different values of $\gamma$, taking different values of $p$, and using log-linear scales for the results. 

\newpage
\begin{figure}[h!]
    \centering
    \subfigure[$\gamma=1.1$.]{\includegraphics[scale=0.43]{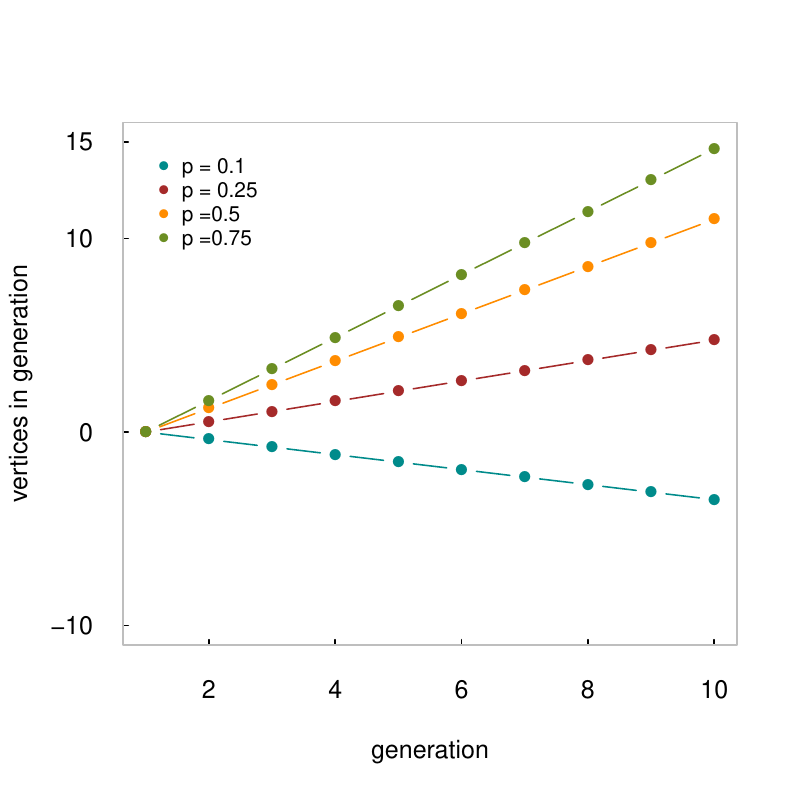}\includegraphics[scale=0.43]{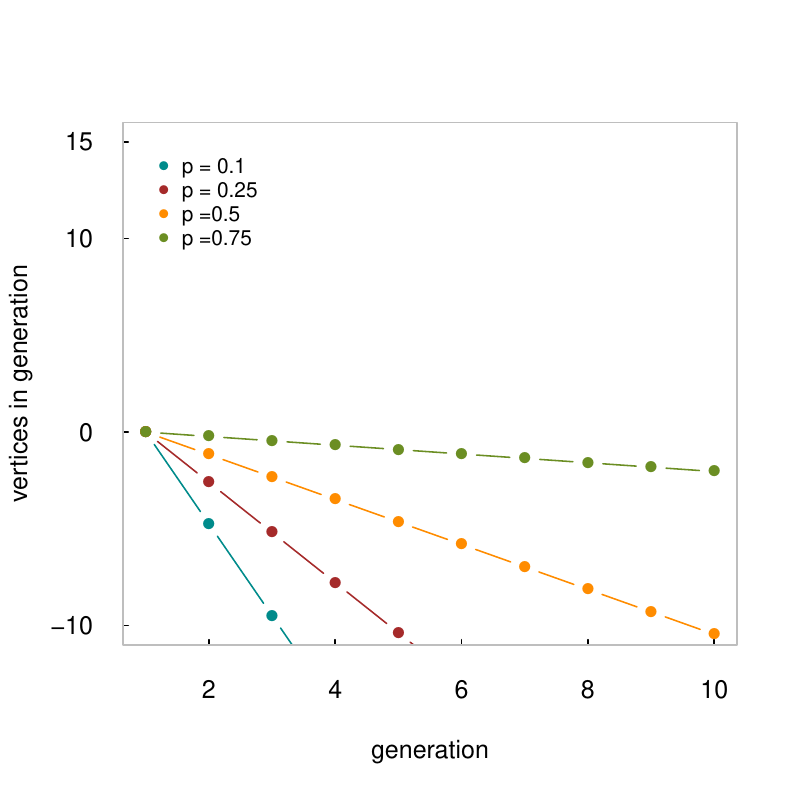}}
    \subfigure[$\gamma=1.05$.]{\includegraphics[scale=0.43]{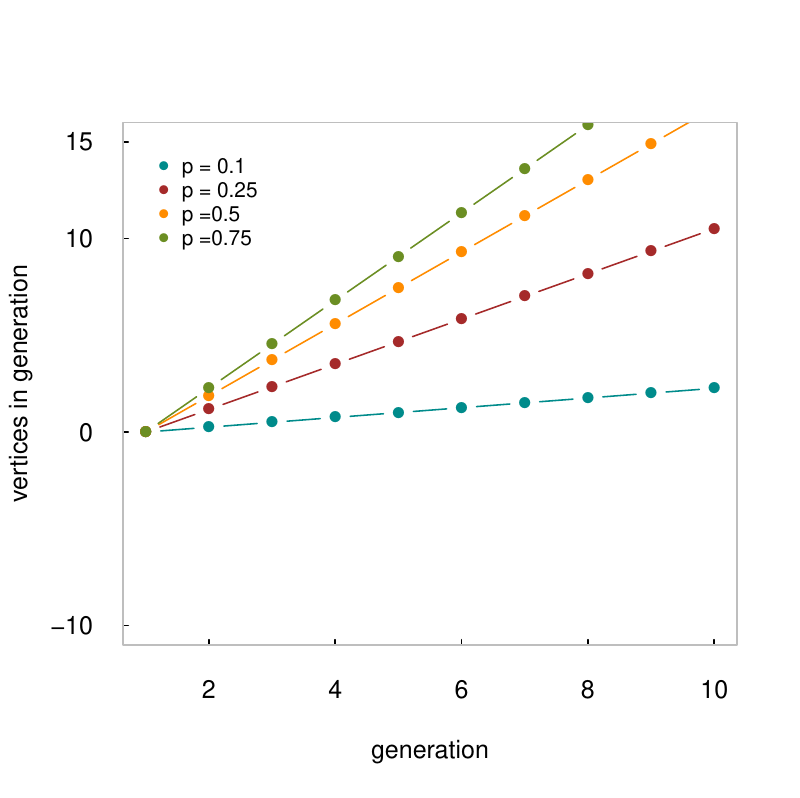}\includegraphics[scale=0.43]{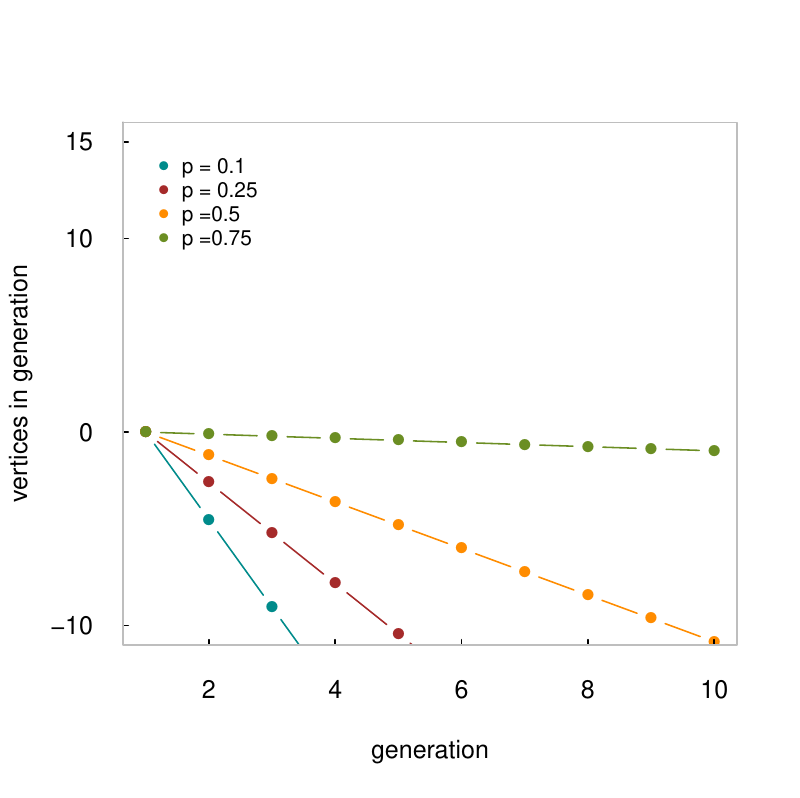}}
    \subfigure[$\gamma=1.03$.]{\includegraphics[scale=0.43]{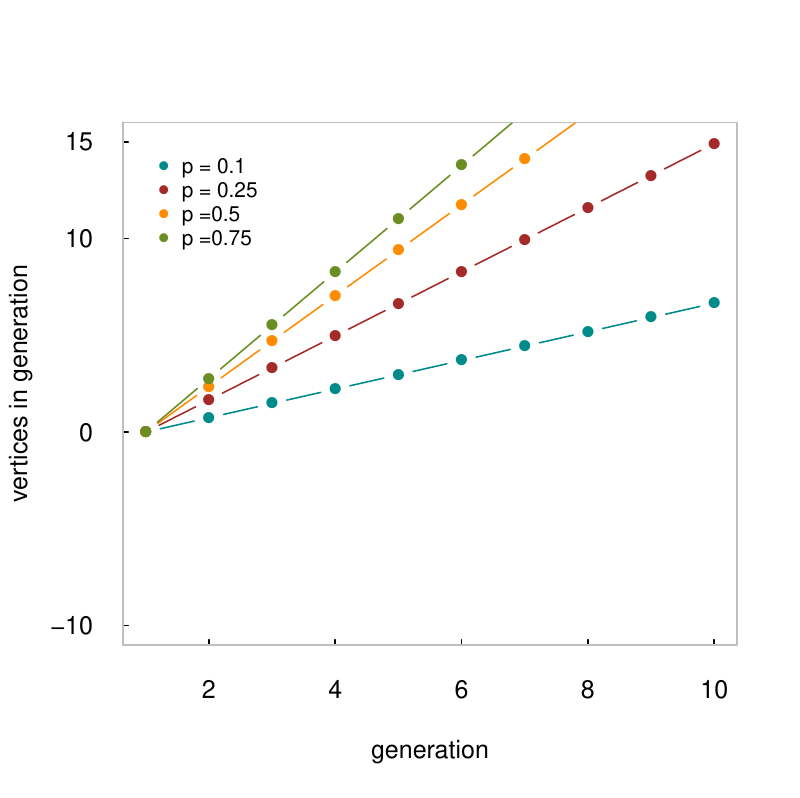}\includegraphics[scale=0.43]{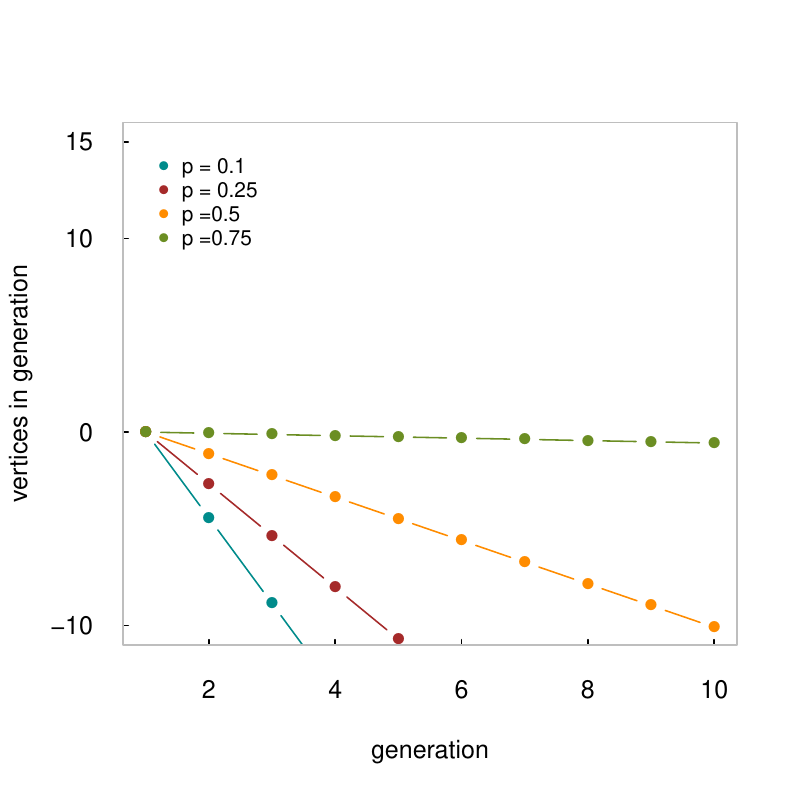}}
    \caption{Evolution of the propagation for the $(p,\xi)$-MPE model (left side) $vs$ the $(p,\xi)$-MAE model (right side). We exhibit the mean values obtained from $8\times 10^6$ simulations of the processes for $p\in\{0.1,0.25,0.5,0.75\}$ and $\xi\sim Zipf(\gamma)$ for different values of $\gamma$.}
    \label{fig:comparisongama}
\end{figure}

As all our comparison results show the propagation of a misinformation reduces drastically if we assume an active environment. This leave an important message, it is not enough just not propagate a false information in order to contribute with the stopping of the propagation. It is the fact of inhibiting the propagator, or denouncing the propagation implying its blocking, which makes the difference. We caught this behaviour through two simple probabilistic models based on the theory of branching processes. We emphasize that such a theory is rich enough to improve these models to incorporate new scenarios of interest so we expect that our results can inspire new research in this direction.

\section{Proofs}

Our models are branching processes so we can appeal to well-known results to obtain ours. Let us start with some results of branching processes.

\begin{prop} \label{branching}
Let $\{Z_n\}_{n \geq 0}$ be a branching process with $Z_0=1$. Assume that the offspring distribution $\{p_k\}_{k\in\mathbb{N}\cup\{0\}}$,  with $p_k := \mathbb{P}(Z_1 = k)$, is such that $p_0 > 0$, $p_0+p_1 < 1$, and it has the generating function $g(s) =  \sum_k s^kp_k$. Let  $\mu := \mathbb{E}(Z_1) = g^{\prime}(1)$, the mean number of offspring per individual and let $\psi$ be the probability that, starting with a single individual, the population ever dies out. In addition, we write $\mathbb{P}(E) = \psi$. That is, $E$ is the event that the process dies out. \\
(a)Then 
\begin{enumerate} [label=\roman*.]
    \item $\psi$ is the smallest positive number satisfaying $g(s) = s.$
    \item $\psi = 1$ if, and only if, $\mu \leq 1$.
\end{enumerate}
(b) Let $Y$ be a total progeny (or total population size). That is,
\[ Y := \sum_{n=0}^{\infty}Z_n.
\]

 If $\mu < 1$, then
    \[ \mathbb{P}(Y = k) = \frac{1}{k} \mathbb{P}(X_1 + X_2 + \ldots + X_k = k-1),
    \]
    where $X_1, X_2, \ldots, X_k$ are independent random variables identically distributed as $Z_1$. Also,
    \[ \mathbb{E}(Y) = \frac{1}{1-\mu}.
    \]
   (c) Let $T$ be the extinction time of the branching process. That is, 
\[ T := \min \{n > 0, Z_n =0  \}.
\]
Suppose that the offspring distribution $\{p_k\}_k$ has a fractional linear generating function. That is, assume that $g$ has the form
\[ g(s) = 1 - \mu(1-c) + \frac{\mu (1-c)^2s}{1-cs}, 0 \leq s \leq 1 \textrm { and } \mu (1-c) < 1. 
\]
Suppose that $\mu \neq 1$. We have that
\[ \mathbb{P}( T \leq n) = \frac{s_0(1- \mu^n)}{s_0 - \mu^n}, n \geq 1
\]
where $s_0$ is the non-negative solution to the equation $g(s) = s$ different from 1, given by
\[ s_0 = s_0(c) = \frac{1- \mu(1-c)}{c}.
\]
Furthermore, $if \mu < 1$, 
\[ \mathbb{E}(T) = (s_0-1) \sum_{n=0}^{\infty}\frac{\mu^n}{s_0 - \mu^n}.
\]
(d) Let $T$ be the extinction time of the branching process. 
Suppose that $\mu < 1$ and for all $s \in [0,1]$, $g^\prime(1)g^{\prime \prime}(s) - g^\prime(s)g^{\prime \prime}(1) \geq 0$.
Write
\[ l := \frac{2g^\prime(1) + g^{\prime \prime}(1) - 2[g^\prime(1)]^2}{g^{\prime \prime}(1)} \textrm { and } u := \frac{g(0)g^\prime(1)}{g^\prime(1) +g(0) -1}.
\]
Then 
\[ \frac{l(1 - \mu^n)}{l - \mu^n} \leq \mathbb{P}(T \leq n) \leq \frac{u(1 - \mu^n)}{u - \mu^n}
\]
\[ [u -1] \sum_{n=0}^{\infty}\frac{\mu^n}{u - \mu^n} \leq \mathbb{E}(T) \leq [l -1] \sum_{n=0}^{\infty}\frac{\mu^n}{l - \mu^n}.
\]
\end{prop}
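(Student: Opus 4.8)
The plan is to prove the four parts separately, since each is a classical fact about the Galton--Watson process that follows from generating-function techniques, together with, in the last two parts, the special algebra of fractional linear (Möbius) maps. For part (a) I would set $\psi_n := \mathbb{P}(Z_n = 0)$ and condition on the first generation to obtain the recursion $\psi_n = g(\psi_{n-1})$ with $\psi_0 = 0$. Because $g$ is nondecreasing on $[0,1]$ and $\psi_1 = g(0) = p_0 > 0 = \psi_0$, the sequence $(\psi_n)$ is nondecreasing and bounded by $1$, hence converges to a limit $\psi$ which, by continuity of $g$, satisfies $g(\psi) = \psi$ and equals $\mathbb{P}(E)$. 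The hypothesis $p_0 + p_1 < 1$ makes $g$ strictly convex on $[0,1]$; combined with $g(1)=1$ this forces $\psi$ to be the smallest positive root of $g(s) = s$ and yields the dichotomy $\psi = 1 \iff g'(1) = \mu \leq 1$ by comparing the slope $g'(1)$ with the diagonal at the fixed point $1$. This is the standard argument and I would cite \cite{harris}.

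For part (b) the mean is immediate: for $\mu < 1$, Tonelli gives $\mathbb{E}(Y) = \sum_{n \geq 0}\mathbb{E}(Z_n) = \sum_{n\geq 0}\mu^n = (1-\mu)^{-1}$; equivalently one may differentiate the functional equation $F(s) = s\,g(F(s))$ for $F(s) = \mathbb{E}(s^Y)$ at $s=1$. The distributional identity $\mathbb{P}(Y = k) = \tfrac{1}{k}\,\mathbb{P}(X_1 + \cdots + X_k = k-1)$ is the Otter--Dwass formula, which I would prove by exploring the tree in depth-first order, identifying $Y$ with the first hitting time of $-1$ of the random walk with step distribution $X_i - 1$, and invoking the hitting-time (cycle) lemma, which produces exactly the factor $1/k$.

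For part (c) I would first verify directly that $g(1) = 1$, $g'(1) = \mu$, and that the second fixed point of $g$ is $s_0 = (1 - \mu(1-c))/c$. Since $\mathbb{P}(T \leq n) = \mathbb{P}(Z_n = 0) = g_n(0)$, where $g_n$ is the $n$-fold composition of $g$, I would use that fractional linear maps are closed under composition: conjugating by $\phi(s) = (s-1)/(s - s_0)$ linearizes $g$ to multiplication by its multiplier $\mu$ at the fixed point $1$, so that $\phi(g_n(s)) = \mu^n \phi(s)$. Evaluating at $s = 0$ (where $\phi(0) = 1/s_0$) and inverting $\phi$ gives $g_n(0) = s_0(1-\mu^n)/(s_0 - \mu^n)$, the claimed cumulative distribution. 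Telescoping $\mathbb{E}(T) = \sum_{n \geq 0}(1 - \mathbb{P}(T \leq n))$ and simplifying $1 - g_n(0) = \mu^n(s_0 - 1)/(s_0 - \mu^n)$ then yields the stated series.

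Part (d) is the main obstacle, and is precisely where the hypothesis $g'(1)g''(s) - g'(s)g''(1) \geq 0$ is needed. The strategy is to sandwich $g$ between two fractional linear generating functions $g_\ell$ and $g_u$ sharing the same mean $\mu$ but whose second fixed points are exactly $\ell$ and $u$. These comparators are pinned down by matching: $g_u$ agrees with $g$ at $s=0$, which, solving $s_0 = u$ from part (c), gives $u = g(0)g'(1)/(g'(1)+g(0)-1)$; while $g_\ell$ matches $g$ to second order at $s=1$, which gives $\ell = (2g'(1) + g''(1) - 2[g'(1)]^2)/g''(1)$. The delicate step is to show $g_\ell(s) \leq g(s) \leq g_u(s)$ for all $s \in [0,1]$: the hypothesis controls the sign of the relevant differences by comparing the logarithmic derivatives $g''/g'$, i.e. the \emph{relative convexity} of $g$ against each matched comparator between the pinning points, guaranteeing the inequalities do not reverse on $[0,1]$. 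Once the pointwise ordering holds, monotonicity of composition gives $g_{\ell,n}(0) \leq g_n(0) \leq g_{u,n}(0)$, and applying the closed forms of part (c) to each comparator yields the two-sided bounds on $\mathbb{P}(T \leq n)$; the bounds on $\mathbb{E}(T)$ follow by the same telescoping as in part (c), with the direction of the inequalities swapped since $1 - \mathbb{P}(T \leq n)$ is decreasing in $\mathbb{P}(T \leq n)$. I expect making this sandwiching argument fully rigorous to be the hardest part of the proof.
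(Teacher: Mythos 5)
Your proposal is correct and follows essentially the same route as the paper, which delegates parts (a)--(c) to Harris and Dwass and proves (d) exactly by sandwiching $g$ between two fractional linear generating functions matched at $s=0$ and to second order at $s=1$ (the Hwang--Wang/Agresti bounds), then iterating and applying the closed form from (c). The only difference is that you reconstruct the content of the cited results (monotone iteration for (a), the cycle lemma for Otter--Dwass in (b), Möbius conjugation for the iterates in (c)) where the paper simply cites them.
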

\begin{proof}[Proof of Proposition~\ref{branching}]
The proof of $(a)$ can be found in \cite[Theorem 6.1]{harris}. Item $(b)$ follows from \cite[Theorem, page 682]{dwass}.\\

To prove (c) first note that
\[ \mathbb{P}(T \leq n) = \mathbb{P}(Z_n = 0) = g^m(0)
\]
where $g^m(0)$ is the $m-$th compositions of $g$. The last equality follows from \cite[Theorem 4.1, page 5]{harris}. Now, we refer the reader to \cite[Definition 2]{hwang} and  \cite[expression (2.1)]{agresti/1974}, for a suitable definition of a fractional linear generating function. Then, the results follow from \cite[expression (3.1)]{agresti/1974}.

To prove $(d)$ we appeal to arguments of \cite{agresti/1974}
for deriving bounds for the tail of such a random time. The main approach of \cite{agresti/1974} consists of
reducing the problem of deriving these bounds to a problem involving the analysis of a family of the fractional linear generating functions (f.l.g.f.). whose iterates were obtained in the proof of $(c)$.  Our first task is to obtain two f.l.g.f. $f_l(s)$ and $f_u(s)$ such that
\[
f_l(s) \leq g(s) \leq f_u(s), 0 \leq s \leq 1.
\]
In order to do it we apply \cite[Theorem, p. 450]{hwang}, verifying its corollary of p. 451. Then, $f_l(s)$ and $f_u(s)$ are such that
\[ c_l = \frac{g^{\prime \prime}(1)}{2g^{\prime}(1) + g^{\prime \prime}(1)} \textrm { and } c_u = \frac{g^{\prime}(1) +g(0) -1}{g^{\prime \prime}(1)}.
\]
Note that the inequalities hold also for the $m-$th compositions of these functions. Thus,
\[
f^m_l(0) \leq g^m(0) \leq f^m_u(0).
\]
Using $(c)$ and \cite[Section 4]{agresti/1974} we finish the proof.

\end{proof}

Now the main idea of our proofs is the fact that our misinformation models are branching processes. 

\begin{proof} [\bf Proof of Theorem~\ref{thm:model1}]
We associate the $(p,\xi)$-MPE model with a branching process $\{Z^1_n, n \geq 0 \}$ with offspring distribution given by
\[ \mathbb{P}(Z^1_1 = j) = \sum_{n=j}^{\infty}\mathbb{P}(Z^1_1 = j | \xi = n)\mathbb{P}(\xi = n) = \sum_{n=j}^{\infty} \binom{n}{j}p^j(1-p)^{n-j}\mathbb{P}(\xi = n), j \geq 0.
\]
Note that $Z^1_1$ has an expected value given by
\[ \mu_1 := \mathbb{E}(Z^1_1) =  \sum_{j=1}^{\infty} j \sum_{n=j}^{\infty} \binom{n}{j}p^ j(1-p)^{n-j}\mathbb{P}(\xi = n)  =  \sum_{n=1}^{\infty} \sum_{j=0}^{n}j\binom{n}{j}p^ j(1-p)^{n-j}\mathbb{P}(\xi = n)  = p\mathbb{E}(\xi)
\]
and a generating probability function
\begin{align*} 
g_1(s) &= \mathbb{E}(s^{Z^1_1}) = \sum_{j=0}^{\infty} s^j \sum_{n=j}^{\infty} \binom{n}{j}p^ j(1-p)^{n-j}\mathbb{P}(\xi = n)  =  \sum_{n=0}^{\infty}  \sum_{j=0}^{n} \binom{n}{j}\left(\frac{sp}{1-p}\right)^j(1-p)^{n}\mathbb{P}(\xi = n) \\  & = \sum_{n=0}^{\infty}  \sum_{j=0}^{n} (1-p)^{n} \mathbb{P}(\xi = n) \binom{n}{j}\left(\frac{sp}{1-p}\right)^j = \mathbb{E} \left [\left( sp+1-p\right)^{\xi} \right ] = G_{\xi}(sp+1-p).
\end{align*}
The result follows from the application of Proposition \ref{branching} $(a)$.
\end{proof}

\smallskip 
\begin{proof} [\bf Proof of Corollary~\ref{cor:model1tfc}]
Just apply Theorem \ref{thm:model1} with
\[ p\mathbb{E}(\xi) = pk \textrm { and } G_{\xi}(sp+1-p) = (sp+1-p)^k.
\]
\end{proof}

\smallskip 
\begin{proof} [\bf Proof of Corollary~\ref{cor:model1tfzeta}]
Just apply Theorem \ref{thm:model1} with
\[ p\mathbb{E}(\xi) = p \sum_{j=1}^{\infty} j\frac{j^{-\gamma}}{\zeta(\gamma)} = \frac{p\zeta(\gamma -1)}{\zeta(\gamma)} 
\]
 and 
\[ G_{\xi}(sp+1-p) = \sum_{j=1}^{\infty}(sp+1-p)^j\frac{j^{-\gamma}}{\zeta(\gamma)} = \frac{1}{\zeta(\gamma)}Li_{\gamma}(sp+1-p),
\]
where 
\[Li_{a}(x) = \sum_{j=1}^{\infty}\frac{x^j}{j^a}, |x| < 1
\]
is the polylogarithm function also known as the Jonquière's function.
\end{proof}

\smallskip 
\begin{proof} [\bf Proof of Theorem~\ref{thm:model1alcance}]
In the proof of Theorem~\ref{thm:model1} we obtain that process $\{Z^1_n, n \geq 0 \}$ has generating probability function $g_1(s) = G_{\xi}(sp+1-p)$. After some calculations we get 
\[  g_1^{\prime}(s)   = p\mathbb{E} \left(\xi(sp+1-p)^{\xi -1} \right),~~ g_1^{\prime \prime}(s)  = p^2\mathbb{E} \left(\xi(\xi-1)(sp+1-p)^{\xi -2} \right),~~ g_1^{\prime}(1) = p\mathbb{E} \left(\xi \right),  
\]
\[  g_1^{\prime \prime}(1) = p^2\mathbb{E} \left(\xi(\xi-1) \right) \textrm { and } g_1(0) = G_{\xi}(1-p).
\]
Now note that $\mathcal{T}_h+1 = T$ where $T = \min\{n \geq 0, Z^1_n = 0 \}$.

Then, we apply Proposition \ref{branching} $d)$ with
\[ l = \frac{2 g_1^{\prime}(1) + g_1^{\prime \prime} (1)- 2[ g_1^{\prime}(1)]^2}{ g_1^{\prime \prime}(1) } = \frac{2p\mathbb{E}(\xi) +p^2\mathbb{E} \left(\xi(\xi-1) \right) -2 \left( p\mathbb{E}(\xi) \right )^2}{p^2\mathbb{E} \left(\xi(\xi-1) \right)}
\]
and
\[ u =  \frac{g_1(0) g_1^{\prime}(1)}{ g_1^{\prime}(1) +g_1(0) -1} = \frac{G_{\xi}(1-p)p\mathbb{E}(\xi)}{p\mathbb{E}(\xi) + G_{\xi}(1-p) - 1}
\]
\end{proof}

\begin{proof} [\bf Proof of Corollary~\ref{cor:model1alcance}]
Since $\mathbb{E}(v(\xi)) = v(k)$ follows that
\begin {align*} U(s) &=\mathbb{E}(\xi) [\mathbb{E} \left (\xi (sp+1-p)^{\xi-2} (\xi(2+sp-p) -1) \right)] - \mathbb{E} \left ( \xi^2  \right )\mathbb{E} \left(\xi (sp+1-p)^{\xi-1} \right ) \\
U(s) &= k^2(k-1)(sp+1-p)^{k-2} \geq 0, s \in [0;1].
\end{align*}

Just apply Theorem \ref{thm:model1alcance} with 
\[ G(s) = \left (G_{\xi}(sp+1-p) \right )^j = \left ((sp+1-p)^k \right)^j.  
\]

\end{proof}

\smallskip
\begin{proof} [\bf Proof of Corollary~\ref{cor:model1alcancezeta}]
We have that
\[
G_{\xi}(s) = \sum_{j=1}^{\infty}s^j \frac{j^{-\gamma}}{\zeta(\gamma)} = \frac{1}{\zeta(\gamma)}Li_{\gamma}(s).
\]
Hence we apply Theorem \ref{thm:model1alcance}.
\end{proof}

\smallskip
\begin{proof} [\bf Proof of Theorem~\ref{thm:model1alcance-size}]
First observe that if $X_1, X_2, . . . , X_j$ are independent random variables identically distributed as $Z_1$ then
\[ \mathbb{P}(X_1 + X_2 + \ldots + X_j = j - 1) = \frac{ G^{(j-1)}(0)}{(j-1)!}, j=1,2, \ldots
\]
where
\[
 G(s) = \left ( g_1(s)  \right )^j, \textrm { with } g_1(s) = G_{\xi}(sp+1-p).
 \]
 The result follows from the Proposition \ref{branching} $b)$.
\end{proof}

\smallskip
\begin{proof} [\bf Proof of Corollary~\ref{cor:model1alcance-size}]
Immediate application of Theorem \ref{thm:model1alcance-size} with $G_{\xi}(s) = s^k.$ 
\end{proof}

\smallskip
\begin{proof} [\bf Proof of Corollary~\ref{cor:model1alcancezeta-size}]
Immediate application of Theorem \ref{thm:model1alcance-size} with $G_{\xi}(s) =  \frac{1}{\zeta(\gamma)}Li_{\gamma}(s)$.
\end{proof}

\smallskip
\begin{proof} [\bf Proof of Theorem~\ref{thm:model2}]
We associate the $(p,\xi)$-MAE model with a branching process $\{Z^1_n, n \geq 0 \}$ with offspring distribution given by
\begin{align*}
\mathbb{P}(Z^2_1 = k) &= \mathbb{P}(Z^1_1 = k | \xi = k)\mathbb{P}(\xi = k)  + \sum_{n=k+1}^{\infty}\mathbb{P}(Z^1_1 = k | \xi = n)\mathbb{P}(\xi = n) \\
&= p^k\mathbb{P}(\xi = k) + \sum_{n=k+1}^{\infty}(1-p)p^k \mathbb{P}(\xi = n) = p^k\mathbb{P}(\xi = k) + (1-p)p^k \mathbb{P}(\xi \geq k+1) \\
&= p^k [\mathbb{P}(\xi \geq k) - p\mathbb{P}(\xi > k) ]
\end{align*}
$Z^2_1$ has expected value given by
\begin{align*} \mu_G := \mathbb{E}(Z^2_1) &=   \sum_{k=1}^{\infty} kp^k [\mathbb{P}(\xi \geq k) - p\mathbb{P}(\xi > k) ] = \sum_{k=1}^{\infty} kp^k \mathbb{P}(\xi \geq k) - p\sum_{k=1}^{\infty} kp^k [\mathbb{P}(\xi \geq k) - \mathbb{P}(\xi = k)]\\
& = (1-p)\sum_{k=1}^{\infty} kp^k \mathbb{P}(\xi \geq k) +p \mathbb{E} \left ( \xi p^{\xi} \right )
\end{align*}
and generating probability function
\begin{align*} 
g_2(s) &= \mathbb{E}(s^{Z^2_1}) = \sum_{k=0}^{\infty} s^k \left [ p^k [\mathbb{P}(\xi \geq k) - p\mathbb{P}(\xi > k) ] \right ] = \sum_{k=0}^{\infty} (sp)^k \left [ \mathbb{P}(\xi \geq k) - p \mathbb{P}(\xi \geq k) + p \mathbb{P}(\xi = k) \right ] \\
&= (1-p) ] \sum_{k=0}^{\infty} (sp)^k \mathbb{P}(\xi \geq k) + pG_{\xi}(sp).
\end{align*}
The result follows from the application of the Proposition \ref{branching} $(a)$.
\end{proof}

\smallskip
\begin{proof} [\bf Proof of Corollary~\ref{cor:model2tfc}]
Applying the Theorem \ref{thm:model2} we have
\[ \frac{p(1 - (1 + k)p^k + k p^{k+1})}{(1-p)^2} = \sum_{x=1}^{k}xp^x = \sum_{x=1}^{\infty}xp^x \mathbb{P}(\xi \geq x) > \frac{1-p\mathbb{E}(\xi p^{\xi})}{1-p} = \frac{1-kp^{k+1}}{1-p}, 
\]
\[
\frac{1-(sp)^{k+1}}{1-sp} = \sum_{x=0}^{k}(sp)^x  = \sum_{x=0}^{\infty}(sp)^x \mathbb{P}(\xi \geq x) = \frac{s-p G_{\xi}(sp)}{1-p} = \frac{s-p(sp)^k}{1-p} 
\]
\end{proof}

\smallskip
\begin{proof} [\bf Proof of Corollary~\ref{cor:model2tfzeta}]
Applying the Theorem \ref{thm:model2} we have
\[ \frac{1}{\zeta(\gamma)}\sum_{x=1}^{\infty}xp^x \zeta(\gamma, x) = \sum_{x=1}^{\infty}xp^x \mathbb{P}(\xi \geq x) > \frac{1-p\mathbb{E}(\xi p^{\xi})}{1-p} = \frac{1 - \frac{p}{\zeta(\gamma)}Li_{\gamma-1}(p)}{1-p} = \frac{\zeta(\gamma) -pLi_{\gamma -1}(p)}{\zeta(\gamma)(1-p)}, 
\]
\[
\frac{1}{\zeta(\gamma)} \sum_{x=1}^{\infty}(sp)^x\zeta(\gamma,x) = \sum_{x=0}^{\infty}(sp)^x \mathbb{P}(\xi \geq x) = \frac{s-p G_{\xi}(sp)}{1-p} = \frac{s -\frac{p}{\zeta(\gamma)}Li_{\gamma}(sp)}{1-p} = \frac{s \zeta(\gamma) -p Li_{\gamma}(sp)}{\zeta(\gamma)(1-p)}
\]
\end{proof}

\smallskip
\begin{proof} [\bf Proof of Theorem~\ref{thm:model2alcance}]
In the proof of Theorem~\ref{thm:model2} we obtain that process $\{Z^2_n, n \geq 0 \}$ has generating probability function 
\[g_2(s) = (1-p) ] \sum_{k=0}^{\infty} (sp)^k \mathbb{P}(\xi \geq k) + pG_{\xi}(sp).
\]
After some calculations we get 
\begin{align*} \mu_G(s) =g_2^{\prime}(s)  &= \frac{(1-p)}{s}\sum_{x=1}^{\infty}x(ps)^x \mathbb{P}(\xi \geq x) + \frac{p}{s}\mathbb{E}(\xi (ps)^{\xi})    ,\\ B_G(s) = g_2^{\prime \prime}(s) &= \frac{(1-p)}{s^2}\sum_{x=2}^{\infty}x(x-1)(ps)^x \mathbb{P}(\xi \geq x) + \frac{p}{s^2}\mathbb{E}(\xi(\xi-1)(ps)^{\xi}) ,\\ \mu_G = g_2^{\prime}(1)  &= (1-p)\sum_{x=1}^{\infty}xp^x \mathbb{P}(\xi \geq x) + p\mathbb{E}(\xi p^{\xi}) \\
B_G = g_2^{\prime \prime}(1) &= (1-p)\sum_{x=2}^{\infty}x(x-1)p^x \mathbb{P}(\xi \geq x) + p\mathbb{E}(\xi(\xi-1)p^{\xi}) \textrm { and } \\ g_2(0) &= 1-p +p\mathbb{P}(\xi = 0) = 1-p\mathbb{P}(\xi \neq 0) .
\end{align*}
Now note that $\mathcal{T}_h+1 = T$ where $T = \min\{n \geq 0, Z^1_n = 0 \}$.

Then, we apply Proposition \ref{branching} $d)$ with
\[ l = \frac{2 g_2^{\prime}(1)  +  g_2^{\prime \prime} (1) - 2[ g_2^{\prime}(1)]^2}{ g_2^{\prime \prime}(1)} = \frac{2\mu_G(1-\mu_G) +B_G}{B_G},
\]
\[ u =  \frac{g_2(0) g_2^{\prime}(1) }{ g_2^{\prime}(1) +g_2(0) -1} = \frac{[1-p\mathbb{P}(\xi \neq 0)]\mu_G}{\mu_G - p(\mathbb{P}(\xi \neq 0))},
\]

\end{proof}

\smallskip
\begin{proof} [\bf Proof of Corollary~\ref{cor:model2alcancek}] 
We have that
\[ \mu_G(s) = \frac{(1-p)}{s}\sum_{x=1}^{\infty}x(ps)^x \mathbb{P}(\xi \geq x) + \frac{p}{s}\mathbb{E}(\xi (ps)^{\xi}) = \frac{1}{s} \left [ (1-p) \sum_{x=1}^{k}x(ps)^x +kp(sp)^k \right],
\]
\begin{align*} B_G(s) &=  \frac{(1-p)}{s^2}\sum_{x=2}^{\infty}x(x-1)(ps)^x \mathbb{P}(\xi \geq x) + \frac{p}{s^2}\mathbb{E}(\xi(\xi-1)(ps)^{\xi})\\ &= \frac{1}{s^2} \left [ (1-p) \sum_{x=2}^{k}x(x-1)(ps)^x +k(k-1)p(sp)^k \right], 
\end{align*}
\[ \mu_G = (1-p)\sum_{x=1}^{k}xp^x +pkp^k = \frac{p(1-p^k)}{1-p}, \mathbb{P}(\xi \neq 0) = 1 \textrm { and }
\] 
\[ B_G = (1-p)\sum_{x=2}^{k}x(x-1)p^x +pk(k-1)p^k = \frac{2p[(k-1)p^{k+1}-kp^k+p]}{(1-p)^2}.
\]
To finished the proof just apply Theorem \ref{thm:model2alcance}.
\end{proof}

\smallskip
\begin{proof} [\bf Proof of Corollary~\ref{cor:model2alcancez}] 
After some calculations, we get
\[
\dfrac { \partial h(s,p,\gamma)}{\partial s} = \frac{1}{s} \sum_{x=1}^{\infty}x(sp)^{x-1}\zeta(\gamma,x),
\dfrac { \partial^2 h(s,p,\gamma)}{\partial s^2} = \frac{1}{s^2} \sum_{x=2}^{\infty}x(x-1)(sp)^{x-2}\zeta(\gamma,x),
\]
\[\mu_G(s) = \frac{(1-p)}{s}\sum_{x=1}^{\infty}x(ps)^x \mathbb{P}(\xi \geq x) + \frac{p}{s}\mathbb{E}(\xi (ps)^{\xi}) = \frac{1}{s \zeta(\gamma)}\left [(1-p)\dfrac { \partial h(s,p, \gamma)}{\partial s} + p Li_{\gamma -1}(ps) \right ],
\]
\begin{align*} B_G(s) &=  \frac{(1-p)}{s^2}\sum_{x=2}^{\infty}x(x-1)(ps)^x \mathbb{P}(\xi \geq x) + \frac{p}{s^2}\mathbb{E}(\xi(\xi-1)(ps)^{\xi})\\ &=  \frac{1}{s^2\zeta(\gamma)} \left [(1-p)\dfrac { \partial^2 h(s,p, \gamma)}{\partial s^2} + p [Li_{\gamma -2}(ps) - Li_{\gamma -1}(ps)] \right ], 
\end{align*}
\[ \mu_G = (1-p)\sum_{x=1}^{\infty}xp^x\sum_{j=x}^{\infty}\frac{j^{-\gamma}}{\zeta(\gamma)} + p\sum_{j=1}^{\infty}jp^j\frac{j^{-\gamma}}{\zeta(\gamma)} = \frac{1}{\zeta(\gamma)}\left [(1-p)\dfrac { \partial h(1,p, \gamma)}{\partial s} + p Li_{\gamma -1}(p) \right ] 
\]
and
\begin{align*} B_G &= (1-p)\sum_{x=2}^{\infty}x(x-1)p^x\sum_{j=x}^{\infty}\frac{j^{-\gamma}}{\zeta(\gamma)} + p\sum_{j=2}^{\infty}j(j-1)p^j\frac{j^{-\gamma}}{\zeta(\gamma)} \\ B_G &= \frac{1}{\zeta(\gamma)} \left [(1-p)\dfrac { \partial^2 h(1,p, \gamma)}{\partial s^2} + p [Li_{\gamma -2}(p) - Li_{\gamma -1}(p)] \right ]. 
\end{align*}
To finished the proof just apply Theorem \ref{thm:model2alcance}.
\end{proof}

\smallskip
\begin{proof} [\bf Proof of Theorem~\ref{thm:model2alcance-size}] 
First observe that if $X_1, X_2, . . . , X_j$ are independent random variables identically distributed as $Z_1$ then
\[ \mathbb{P}(X_1 + X_2 + \ldots + X_j = j - 1) = \frac{ G^{(j-1)}(0)}{(j-1)!}, j=1,2, \ldots
\]
where
\[
 G(s) = \left ( g_2(s)  \right )^j, \textrm { with } g_2(s) = (1-p)\sum_{x=0}^{\infty} (sp)^x \mathbb{P}(\xi \geq x) + pG_{\xi}(sp).
 \]
 The result follows from the Proposition \ref{branching} $(b)$.
\end{proof}

\smallskip
\begin{proof} [\bf Proof of Corollary~\ref{cor:model2alcancek-size}] 
We apply Theorem \ref{thm:model2alcance-size} with
\[ G(s) = \left ( (1-p)\sum_{x=0}^{\infty} (sp)^x \mathbb{P}(\xi \geq x) + pG_{\xi}(sp)\right )^j = \left ( \frac{p(1-s)(ps)^k +1-p}{1-ps}  \right )^j.
\]
\[ \mathbb{E}(\mathcal{T}_s) = \left ( 1-pkp^k -(1-p)\sum_{x=1}^{k}xp^x  \right )^{-1} = \frac{1-p}{1-2p+p^{k+1}}
\]
\end{proof}

\smallskip
\begin{proof} [\bf Proof of Corollary~\ref{cor:model2tamanhoz}] 
We apply Theorem \ref{thm:model2alcance-size} with
\[ G(s) = \left ( (1-p)\sum_{x=0}^{\infty} (sp)^x \mathbb{P}(\xi \geq x) + pG_{\xi}(sp)\right )^j = \left ( \frac{(1-p)h(s,p, \gamma) + pLi_{\gamma}(sp)}{\zeta(\gamma)}\  \right )^j.
\]

\[  \left (1-p\mathbb{E}(\xi p^{\xi}) -(1-p)\sum_{x=1}^{\infty} xp^x \mathbb{P}(\xi \geq x) \right)^{-1} = \frac{\zeta(\gamma)}{\zeta(\gamma) - p Li_{\gamma -1}(p) -(1-p) \dfrac { \partial h(1,p, \gamma)}{\partial s}}
\]
\[ \left (1- \frac{(1-p)}{s}\sum_{x=1}^{\infty}x( \alpha p)^x \mathbb{P}(\xi \geq x)- \frac{p}{\alpha}\mathbb{E}(\xi (\alpha p)^{\xi}) \right )^{-1} = \frac{\alpha \zeta(\gamma)}{\alpha \zeta(\gamma) -(1-p)\dfrac { \partial h(\alpha,p,\gamma)}{\partial s} - p Li_{\gamma -1}(\alpha p)}.
\]

\end{proof}

\smallskip
\begin{proof} [\bf Proof of Theorem~\ref{thm:comp}] 
Just calculate
\[ \mathbb{E}_B(\mathcal{T}_s) - \mathbb{E}_G(\mathcal{T}_s) = \frac{1}{1-p\mathbb{E}(\xi)} - \left (1-p\mathbb{E}(\xi p^{\xi}) -(1-p)\sum_{x=1}^{\infty} xp^x \mathbb{P}(\xi \geq x) \right)^{-1}
\]

\end{proof}

\section*{Acknowledgements}
Part of this work was carried out during a stay of P.M.R. and L.M.G. at CMCC/UFABC, and a visit of L.M.G. to CCEN/UFPE. The authors are grateful to these institutions for their hospitality and support. Part of this work has been supported by FACEPE - Fundação de Amparo à Ciência e Tecnologia do Estado de Pernambuco (Grant APQ-1341-1.02/22), and
FAPESP - Fundação de Amparo à Pesquisa do Estado de São Paulo (Grant 2017/10555-0, Grant 2019/23078-1).

\end{document}